\newcommand{\LongVersion}[1]{#1}
\newcommand{\ShortVersion}[1]{}
\newcommand{\VSpace}[1]{}
\newcommand{\myEnd}{\mbox{}\hfill$\Box$}
\newcommand{\mand}{\sqcap}
\newcommand{\mor}{\sqcup}
\newcommand{\V}{\forall}
\newcommand{\E}{\exists}
\def\tuple#1{\langle#1\rangle} 
\newcommand{\ovl}[1]{\overline{#1}}
\newcommand{\EXPTIME}{\textsc{Exp\-Time}\xspace}
\newcommand{\NEXPTIME}{\textsc{NExp\-Time}\xspace}
\newcommand{\NtEXPTIME}{\textsc{N2Exp\-Time}\xspace}
\newcommand{\mT}{\mathcal{T}} 
\newcommand{\mR}{\mathcal{R}} 
\newcommand{\mA}{\mathcal{A}} 
\newcommand{\mI}{\mathcal{I}} 
\newcommand{\mE}{\mathcal{E}} 
\newcommand{\mS}{\mathcal{S}} 
\newcommand{\KB}{\mathit{KB}}
\newcommand{\nI}{\mathbb{I}} 
\newcommand{\nC}{\mathbb{C}} 
\newcommand{\nE}{\mathbb{E}}
\def\eqref#1{(\ref{#1})}
\newcommand{\koniec}{\mbox{}\hfill$\Box$}
\let\oldmarginpar\marginpar
\renewcommand\marginpar[1]{\oldmarginpar{\center{\footnotesize{\em #1}}}}
\newcommand{\comment}[1]{} 
\newcommand{\SH}{$\mathcal{SH}$\xspace}
\newcommand{\SHI}{$\mathcal{SHI}$\xspace}
\newcommand{\SROIQ}{$\mathcal{SROIQ}$\xspace}
\newcommand{\ALC}{$\mathcal{ALC}$\xspace} 
\newcommand{\SHIO}{$\mathcal{SHIO}$\xspace} 
\newcommand{\SHIQ}{$\mathcal{SHIQ}$\xspace} 
\newcommand{\SHOQ}{$\mathcal{SHOQ}$\xspace} 
\newcommand{\SHOQwxsp}{$\mathcal{SHOQ}$} 
\newcommand{\SHOIQ}{$\mathcal{SHOIQ}$\xspace}
\newcommand{\CSHOQ}{$C_\mathcal{SHOQ}$\xspace}
\newcommand{\CN}{\mathbf{C}} 
\newcommand{\RN}{\mathbf{R}} 
\newcommand{\IN}{\mathbf{I}} 
\newcommand{\closure}{\mathsf{closure}}
\newcommand{\Type}{\mathit{Type}}
\newcommand{\SType}{\mathit{SType}}
\newcommand{\Status}{\mathit{Status}}
\newcommand{\Label}{\mathit{Label}}
\newcommand{\EdgeLabels}{\mathit{EdgeLabels}}
\newcommand{\ELabels}{\mathit{ELabels}}
\newcommand{\piT}{\pi_T}
\newcommand{\piR}{\pi_R}
\newcommand{\piI}{\pi_I}
\newcommand{\RFormulas}{\mathit{RFmls}}
\newcommand{\DSTimeStamp}{\mathit{DSTimeStamp}}
\newcommand{\CSTimeStamp}{\mathit{CSTimeStamp}}
\newcommand{\ETimeStamp}{\mathit{ETimeStamp}}
\newcommand{\ILConstraints}{\mathit{ILConstraints}}
\newcommand{\State}{\mathsf{state}}
\newcommand{\NonState}{\mathsf{non\textrm{-}state}}
\newcommand{\Complex}{\mathsf{complex}}
\newcommand{\Simple}{\mathsf{simple}}
\newcommand{\True}{\mathsf{true}}
\newcommand{\False}{\mathsf{false}}
\newcommand{\Sat}{\mathsf{open}}
\newcommand{\Unsat}{\mathsf{closed}}
\newcommand{\Blocked}{\mathsf{blocked}}
\newcommand{\UnsatWrt}{\mathsf{closed\textrm{-}wrt}}
\newcommand{\PExpanded}{\mathsf{p\textrm{-}expanded}}
\newcommand{\FExpanded}{\mathsf{f\textrm{-}expanded}}
\newcommand{\Unexpanded}{\mathsf{unexpanded}}
\newcommand{\Null}{\mathsf{null}}
\newcommand{\TUS}{\mathsf{testingClosedness}}
\newcommand{\CQF}{\mathsf{checkingFeasibility}}
\newcommand{\Ext}{\mathit{ext}}
\newcommand{\Trans}[1]{\mathit{trans}_\mR(#1)}
\newcommand{\IFDL}[1]{\textrm{IFDL}(#1)}
\newcommand{\Size}{\mathsf{N}}
\newcommand{\UPS}{(\textrm{UPS})\xspace}
\newcommand{\UPSa}{(\textrm{UPS$_1$})\xspace}
\newcommand{\UPSb}{(\textrm{UPS$_2$})\xspace}
\newcommand{\UPSc}{(\textrm{UPS$_3$})\xspace}
\newcommand{\US}{(\textrm{US})\xspace}
\newcommand{\USa}{(\textrm{US$_1$})\xspace}
\newcommand{\USb}{(\textrm{US$_2$})\xspace}
\newcommand{\USc}{(\textrm{US$_3$})\xspace}
\newcommand{\FS}{(\textrm{FS})\xspace}
\newcommand{\DN}{(\textrm{DN})\xspace}
\newcommand{\NUS}{(\textrm{NUS})\xspace}
\newcommand{\TP}{(\textrm{TP})\xspace}
\newcommand{\TF}{(\textrm{TF})\xspace}
\newcommand{\Repl}{\mathit{IndRepl}}
\newcommand{\ILCh}{\mathit{ILC}}
\newtheorem{Explanation}{Explanation}
\newenvironment{explanation}{\begin{Explanation}\begin{em}}{\end{em}\end{Explanation}}
\newcommand{\sssection}[1]{\subsubsection{#1}}
\newcommand{\hasChild}{\mathit{hasChild}}
\newcommand{\hasParent}{\mathit{hasParent}}
\newcommand{\hasDescendant}{\mathit{hasDescendant}}
\newcommand{\Human}{\mathit{Human}}
\newcommand{\Male}{\mathit{Male}}
\newcommand{\Female}{\mathit{Female}}
\newcommand{\John}{\mathit{John}}
\newcommand{\Mary}{\mathit{Mary}}
\newcommand{\Doctor}{\mathit{Doctor}}
\newcommand{\Lawyer}{\mathit{Lawyer}}
\newcommand{\Teacher}{\mathit{Teacher}}
\newcommand{\Link}{\mathit{link}}
\newcommand{\Path}{\mathit{path}}
\newcommand{\Happy}{\mathit{Happy}}
\titlerunning{ExpTime Tableaux for SHOQ}
\author{Linh Anh Nguyen\inst{1,2} \and Joanna Goli{\'n}ska-Pilarek\inst{3}} 
\institute{ 
Institute of Informatics, University of Warsaw\\ 
Banacha 2, 02-097 Warsaw, Poland\\ 
\email{nguyen@mimuw.edu.pl}\\
\and 
Faculty of Information Technology, 
VNU University of Engineering and Technology\\ 
144 Xuan Thuy, Hanoi, Vietnam\\
\and
Institute of Philosophy, University of Warsaw\\
Krakowskie Przedmie{\'s}cie 3, 00-927 Warsaw, Poland\\
\email{j.golinska@uw.edu.pl}\\[1em]
July, 2013 (last revised: July, 2014)
}
\authorrunning{L.A. Nguyen and J. Goli{\'n}ska-Pilarek}	 
\begin{document} 
\maketitle  
\sloppy 

\begin{abstract}
\ShortVersion{We present the first tableau method with an \EXPTIME (optimal) complexity for checking satisfiability of a knowledge base in the description logic \SHOQ, which extends \ALC with transitive roles, hierarchies of roles, nominals and quantified number restrictions. The complexity is measured using unary representation for numbers. Our procedure is based on global caching and integer linear feasibility checking.} 
\LongVersion{We give the first \EXPTIME (complexity-optimal) tableau decision procedure for checking satisfiability of a knowledge base in the description logic \SHOQ, which extends the basic description logic \ALC with transitive roles, hierarchies of roles, nominals and quantified number restrictions. The complexity is measured using unary representation for numbers. Our procedure is based on global caching and integer linear feasibility checking. 

\smallskip

\noindent\textbf{Keywords:} automated reasoning, description logics, global state caching, integer linear feasibility}
\end{abstract}


\section{Introduction}

Description logics (DLs) are formal languages suitable for representing terminological knowledge. They are of particular importance in providing a logical formalism for ontologies and the Semantic Web. 
DLs represent the domain of interest in terms of concepts, individuals, and roles. A concept is interpreted as a set of individuals, while a role is interpreted as a binary relation among individuals. A knowledge base in a DL consists of axioms about roles (grouped into an RBox), terminology axioms (grouped into a TBox), and assertions about individuals (grouped into an ABox). A DL is usually specified by: i) a set of constructors that allow building complex concepts and complex roles from concept names, role names and individual names, ii) allowed forms of axioms and assertions. 
The basic DL \ALC allows basic concept constructors listed in Table~\ref{table: constr-features}, but does not allow role constructors nor role axioms. The most common additional features for extending \ALC are also listed in Table~\ref{table: constr-features} together with syntax and examples: $\mathcal{I}$ is a role constructor, $\mathcal{Q}$ and $\mathcal{O}$ are concept constructors, while $\mathcal{H}$ and $\mathcal{S}$ are allowed forms of role axioms. 
The name of a DL is usually formed by the names of its additional features, as in the cases of \SH, \SHI, \SHIQ, \SHIO, \SHOQ and \SHOIQ. \SROIQ~\cite{HorrocksKS06} is a further expressive DL used as the logical base for the Web Ontology Language OWL~2~DL. 

\begin{table}[t]
\begin{center}
\begin{tabular}{|l|c|c|}
\hline
\multicolumn{3}{|c|}{Concept constructors of \ALC} \\ \hline
Constructor & Syntax & Example \\ \hline
complement & $\lnot C$ & $\lnot \Male$ \\ \hline
intersection & $C \mand D$ & $\Human \mand \Male$ \\ \hline
union & $C \mor D$ & $\Doctor \mor \Lawyer$ \\ \hline 
existential restriction & $\E r.C$ & $\E \hasChild.\Male$ \\ \hline
universal restriction & $\V r.C$ & $\V \hasChild.\Female$ \\ \hline
\hline
\multicolumn{3}{|c|}{Additional constructors/features of other DLs} \\ \hline
Constructor/Feature & Syntax & Example \\ \hline
\underline{i}nverse roles ($\mathcal{I}$) & $r^-$ & $\hasChild^-$ (i.e., $\hasParent$) \\ \hline
\underline{q}uantified number & $\geq\!n\,R.C$ & $\geq\!3\,\hasChild.\Male$ \\ 
restrictions ($\mathcal{Q}$) & $\leq\!n\,R.C$ & $\leq\!2\,\hasParent.\top$ \\ \hline
n\underline{o}minals ($\mathcal{O}$) & $\{a\}$ & $\{\John\}$ \\ \hline
\underline{h}ierarchies of roles ($\mathcal{H}$) & $R \sqsubseteq S$ & $\hasChild \sqsubseteq \hasDescendant$ \\ \hline
tran\underline{s}itive roles ($\mathcal{S}$) & $R \circ R \sqsubseteq R$ & $\hasDescendant \circ \hasDescendant \sqsubseteq \hasDescendant$ \\ \hline
\end{tabular}
\end{center}
\caption{Concept constructors for \ALC and some additional constructors/features of other DLs.\label{table: constr-features}}
\end{table}

Automated reasoning in DLs is useful, for example, in engineering and querying ontologies. One of basic reasoning problems in DLs is to check satisfiability of a knowledge base in a considered DL. Most of other reasoning problems in DLs are reducible to this one.
In this paper, we study the problem of checking satisfiability of a knowledge base in the DL \SHOQ, which extends the basic DL \ALC with transitive roles~($\mathcal{S}$), hierarchies of roles~($\mathcal{H}$), nominals~($\mathcal{O}$) and quantified number restrictions~($\mathcal{Q}$). It is known that this problem in \SHOQ is \EXPTIME-complete~\cite{DLnavigator} (even when numbers are coded in binary). 
Nominals, interpreted as singleton sets, are a useful notion to express identity and uniqueness. However, when interacting with inverse roles~($\mathcal{I}$) and quantified number restrictions in the DL~\SHOIQ, they cause the complexity of the above mentioned problem to jump up to \NEXPTIME-complete~\cite{TobiesThesis} (while that problem in any of the DLs \SHOQ, \SHIO, \SHIQ is \EXPTIME-complete~\cite{DLnavigator,HladikM04,TobiesThesis}). 

In~\cite{HorrocksS01} Horrocks and Sattler gave a tableau algorithm for deciding the DL \SHOQwxsp(D), which is the extension of \SHOQ with concrete datatypes. Later, Pan and Horrocks~\cite{PanH02} extended the method of~\cite{HorrocksS01} to give a tableau algorithm for deciding the DL \SHOQwxsp(D$_n$), which is the extension of \SHOQ with $n$-ary datatype predicates. 
These algorithms use backtracking to deal with disjunction ($\mor$) and ``or''-branching (e.g., the ``choice''-rule) and use a straightforward way for dealing with quantified number restrictions. They have a non-optimal complexity (\NtEXPTIME) when numbers are coded in unary.\footnote{When the algorithms are improved by using ``anywhere blocking'', the complexity will be \NEXPTIME and still non-optimal.}
In~\cite{FaddoulH10} Faddoul and Haarslev gave an algebraic tableau reasoning algorithm for \SHOQ, which combines the tableau method with linear integer programming. The aim was to increase efficiency of handling quantified number restrictions. However, their algorithm still uses backtracking to deal with disjunction and ``or''-branching and has a non-optimal complexity (``double exponential''~\cite{FaddoulH10}). 

This paper is a revised and extended version of our workshop paper~\cite{SHOQ-CSP}. In this work we present the {\em first} tableau method with an \EXPTIME (optimal) complexity for checking satisfiability of a knowledge base in the DL \SHOQ when numbers are coded in unary.\footnote{This corrects the claim of~\cite{SHOQ-CSP} that the complexity is measured using binary representation for numbers.} Our method is based on global caching and integer linear feasibility checking. 

The idea of global caching comes from Pratt's work~\cite{Pratt80} on PDL. It was formally formulated for tableaux in some DLs in~\cite{GoreNguyenTab07,GoreN11} and has been applied to several modal and description logics\LongVersion{~\cite{GoreNguyen05tab,GoreNguyen07clima,NguyenSzalas09ICCCI,NguyenSzalas-KSE09,NguyenS10FI,NguyenS10TCCI,NguyenS11SL,dkns2011}}\ShortVersion{ (see~\cite{SHOQ-long} for references)} to obtain tableau decision procedures with an optimal complexity. A variant of global caching, called global state caching, was used to obtain cut-free optimal tableau decision procedures for several modal logics with converse and DLs with inverse roles\LongVersion{~\cite{GoreW09,GoreW10,Nguyen-ALCI,SHI-ICCCI,SHIQ}}\ShortVersion{~\cite{GoreW09,GoreW10,SHI-ICCCI,SHIQ}}. 

Integer linear programming was exploited for tableaux in~\cite{Farsiniamarj08,FaddoulH10} to increase efficiency of reasoning with quantified number restrictions. However, the first work that applied integer linear feasibility checking to tableaux was~\cite{SHIQ-long,SHIQ}. In~\cite{SHIQ-long}, Nguyen gave the first \EXPTIME (optimal) tableau decision procedure for checking satisfiability of a knowledge base in the DL \SHIQ when numbers are coded in unary. His procedure is based on global state caching and integer linear feasibility checking. In the current paper, we apply his method of integer linear feasibility checking to \SHOQ. 
The adaptation requires special techniques due to the following reasons: i) we use global caching for \SHOQ, while Nguyen's work~\cite{SHIQ-long} uses global state caching for \SHIQ (for dealing with inverse roles); ii) we have to deal with the interaction between number restrictions and nominals.
Our method substantially differs from Farsiniamarj's method of exploiting integer programming for tableaux~\cite{Farsiniamarj08}. Our technique for dealing with both nominals and quantified number restrictions is also essentially different from the one by Faddoul and Haarslev~\cite{FaddoulH10}. 

\ShortVersion{Due to the lack of space, we restrict ourselves to introducing the problem of checking satisfiability of a knowledge base in the DL \SHOQ, describing our techniques for dealing with nominals and presenting some examples to illustrate our tableau method. In the long version~\cite{SHOQ-long} of this paper, we give a full description of our \EXPTIME tableau decision procedure for \SHOQ together with proofs of the results.} 
\LongVersion{The rest of this paper is structured as follows. In Section~\ref{section: prel} we recall notation and semantics of \SHOQ as well as the integer feasibility problem for DLs~\cite{SHIQ-long}. In Section~\ref{section: calculus} we present our tableau decision procedure for \SHOQ together with examples for illustrating our tableau method. We conclude this work in Section~\ref{section: conc}. Proofs for our results are given in the Appendix.}


\LongVersion{
\section{Preliminaries}\label{section: prel}

\subsection{Notation and Semantics of \SHOQ}
}

\ShortVersion{\section{Notation and Semantics of \SHOQ}\label{section: prel}}

Our language uses a finite set $\CN$ of {\em concept names}, a finite set $\RN$ of {\em role names}, and a finite set $\IN$ of {\em individual names}. 
%
%
We use letters like $A$ and $B$ for concept names, $r$ and $s$ for role names, and  $a$ and $b$ for individual names. We also refer to $A$ and $B$ as {\em atomic concepts}, to $r$ and $s$ as {\em roles}, and to $a$ and $b$ as {\em (named) individuals}. 

An (\SHOQ) {\em RBox} $\mR$ is a finite set of role axioms of the form $r \sqsubseteq s$ or $r \circ r \sqsubseteq r$. 
For example, $\Link \sqsubseteq \Path$ and $\Path \circ \Path \sqsubseteq \Path$ are such role axioms.

By $\Ext(\mR)$ we denote the least extension of $\mR$ such that:
\begin{itemize}
\item $r \sqsubseteq r \in \Ext(\mR)$ for any role $r$
\item if $r \sqsubseteq r' \in \Ext(\mR)$ and $r' \sqsubseteq r'' \in \Ext(\mR)$ then $r \sqsubseteq r'' \in \Ext(\mR)$.
\end{itemize}

We write $r \sqsubseteq_\mR s$ to denote $r \sqsubseteq s \in \Ext(\mR)$, and $\Trans{r}$ to denote $(r \circ r \sqsubseteq r) \in \Ext(\mR)$. If $r \sqsubseteq_\mR s$ then $r$ is a~{\em subrole} of $s$ (w.r.t.~$\mR$).
If $\Trans{s}$ then $s$ is a~{\em transitive role} (w.r.t.~$\mR$). 
A~role is {\em simple} (w.r.t.~$\mR$) if it is neither transitive nor has any transitive subrole (w.r.t.~$\mR$).

{\em Concepts} in \SHOQ\ are formed using the following BNF grammar, where $n$ is a nonnegative integer and $s$ is a simple role:
\[
C, D ::=
        \top
        \mid \bot
        \mid A
        \mid \lnot C
        \mid C \mand D
        \mid C \mor D
        \mid \E r.C
        \mid \V r.C
	\mid \{a\}
	\mid\ \geq\!n\,s.C
	\mid\ \leq\!n\,s.C
\]

A concept stands for a set of individuals. The concept $\top$ stands for the set of all individuals (in the considered domain). The concept $\bot$ stands for the empty set. The constructors $\lnot$, $\mand$ and $\mor$ stand for the set operators: complement, intersection and union. For the remaining forms, we just give some illustrative examples: $\E \hasChild.\Male$,\ \ $\V \hasChild.\Female$,\ \ $\geq\!2\,\hasChild.\Teacher$,\ \ $\leq\!5\,\hasChild.\top$.    

We use letters like $C$ and $D$ to denote arbitrary concepts.

A {\em TBox} is a~finite set of axioms of the form $C \sqsubseteq D$ or $C \doteq D$. 

An {\em ABox} is a~finite set of assertions of the form $a\!:\!C$, $r(a,b)$ or $a \not\doteq b$.
An {\em eABox} ({\em extended ABox}) is a~finite set of assertions of the form $a\!:\!C$, $r(a,b)$, $\lnot r(a,b)$, $a \doteq b$ or $a \not\doteq b$.

An axiom $C \sqsubseteq D$ means $C$ is a subconcept of $D$, while $C \doteq D$ means $C$ and $D$ are equivalent concepts. 
An assertion $a\!:\!C$ means $a$ is an instance of concept $C$, 
$r(a,b)$ means the pair $\tuple{a,b}$ is an instance of role $r$, 
and $a \not\doteq b$ means $a$ and $b$ are distinct individuals.

A {\em knowledge base} in \SHOQ\ is a tuple $\tuple{\mR,\mT,\mA}$, where $\mR$ is an RBox, $\mT$ is a~TBox and $\mA$ is an ABox.

We say that a role $s$ is {\em numeric} w.r.t. a knowledge base $\KB = \tuple{\mR,\mT,\mA}$ if: 
\begin{itemize}
\item it is simple w.r.t.~$\mR$ and occurs in a concept $\geq\!n\,s.C$ or $\leq\!n\,s.C$ in $\KB$, or 
\item $s \sqsubseteq_\mR r$ and $r$ is numeric w.r.t.~$\KB$. 
\end{itemize}
We will simply call such an $s$ a numeric role when $\KB$ is clear from the context. 

A {\em formula} is defined to be either a concept or an eABox assertion. 
We use letters like $\varphi$, $\psi$ and $\xi$ to denote formulas.
%
Let $\Null\!:\!C$ stand for $C$. We use $\alpha$ to denote either an individual or $\Null$. Thus, $\alpha\!:\!C$ is a formula of the form $a\!:\!C$ or $\Null\!:\!C$ (which means $C$). 

An {\em interpretation} $\mI = \langle \Delta^\mI, \cdot^\mI
\rangle$ consists of a~non-empty set $\Delta^\mI$, called the {\em
domain} of $\mI$, and a~function $\cdot^\mI$, called the {\em
interpretation function} of $\mI$, that maps each concept name
$A$ to a~subset $A^\mI$ of $\Delta^\mI$, each role name
$r$ to a~binary relation $r^\mI$ on $\Delta^\mI$, and each individual name $a$ to an element $a^\mI \in \Delta^\mI$.
The interpretation function is extended to complex concepts as
follows, where $\sharp Z$ denotes the cardinality of a set~$Z$: 
\[
\begin{array}{c}
\top^\mI = \Delta^\mI \quad\quad
\bot^\mI = \emptyset \quad\quad
(\lnot C)^\mI = \Delta^\mI - C^\mI
\\[1.0ex]
(C \mand D)^\mI = C^\mI \cap D^\mI \quad\quad
(C \mor D)^\mI = C^\mI \cup D^\mI \quad\quad
\{a\}^\mI = \{a^\mI\}
\\[1.0ex]
(\E r.C)^\mI =
        \big\{ x \in \Delta^\mI \mid \E y\big[ \tuple{x,y} \in r^\mI
                 \textrm{ and } y \in C^\mI\big]\big\}
\\[1.0ex]
(\V r.C)^\mI =
        \big\{ x \in \Delta^\mI \mid \V y\big[ \tuple{x,y} \in r^\mI
                       \textrm{ implies } y \in C^\mI\big]\big\}
\\[1.0ex]
(\geq\!n\,s.C)^\mI =
        \big\{ x \in \Delta^\mI \mid \sharp\{y \mid \tuple{x,y} \in s^\mI
                 \textrm{ and } y \in C^\mI \} \geq n \big\}
\\[1.0ex]
(\leq\!n\,s.C)^\mI =
        \big\{ x \in \Delta^\mI \mid \sharp\{y \mid \tuple{x,y} \in s^\mI
                 \textrm{ and } y \in C^\mI \} \leq n \big\}.
\end{array}
\]

For a set $\Gamma$ of concepts, define $\Gamma^\mI = \{x \in \Delta^\mI \mid x \in C^\mI \textrm{ for all } C \in \Gamma\}$.

The relational composition of binary relations $R_1$ and $R_2$ is denoted by $R_1 \circ R_2$.

An interpretation $\mI$ is a~{\em model of an RBox} $\mR$ if for every axiom $r \sqsubseteq s$ (resp.\ $r \circ r \sqsubseteq r$) of $\mR$, we have that $r^\mI \subseteq s^\mI$ (resp.\ $r^\mI \circ r^\mI \subseteq r^\mI$). 
Note that if $\mI$ is a model of $\mR$ then it is also a model of $\Ext(\mR)$.

An interpretation $\mI$ is a~{\em model of a~TBox} $\mT$ if for
every axiom $C \sqsubseteq D$ (resp.\ $C \doteq D$) of $\mT$,
we have that $C^\mI \subseteq D^\mI$ (resp.\ $C^\mI = D^\mI$).

Given an interpretation $\mI$, define:
\[
\begin{array}{lcl}
\mI \models a\!:\!C & \;\textrm{iff}\; & a^\mI \in C^\mI \\
\mI \models r(a,b) & \;\textrm{iff}\; & \tuple{a^\mI,b^\mI} \in r^\mI \\
\mI \models \lnot r(a,b) & \;\textrm{iff}\; & \tuple{a^\mI,b^\mI} \notin r^\mI \\
\mI \models a \doteq b & \;\textrm{iff}\; & a^\mI = b^\mI \\
\mI \models a \not\doteq b & \;\textrm{iff}\; & a^\mI \neq b^\mI.
\end{array}
\]
If $\mI \models \varphi$ then we say that $\mI$ {\em satisfies} $\varphi$. 
An interpretation $\mI$ is a~{\em model of an eABox} $\mA$ if it satisfies all the assertions of $\mA$. In that case, we also say that $\mI$ {\em satisfies} $\mA$. 

An interpretation $\mI$ is a~{\em model of a~knowledge base}
$\tuple{\mR,\mT,\mA}$ if $\mI$ is a~model of $\mR$, $\mT$ and~$\mA$.
A~knowledge base $\tuple{\mR,\mT,\mA}$ is {\em satisfiable} if it has
a~model.

An interpretation $\mI$ {\em satisfies} a concept $C$ (resp.\ a set $X$ of concepts) if $C^\mI \neq \emptyset$ (resp.\ $X^\mI \neq \emptyset$).
It {\em validates} a concept $C$ if $C^\mI = \Delta^\mI$. 
A set $X$ of concepts is {\em satisfiable w.r.t.\ an RBox $\mR$ and a TBox $\mT$} if there exists a model of $\mR$ and $\mT$ that satisfies $X$. 
We say that an eABox $\mA$ is {\em satisfiable w.r.t.\ an RBox $\mR$ and a TBox $\mT$} if there exists an interpretation $\mI$ that is a model of $\mA$, $\mR$ and $\mT$. In that case, we also call $\mI$ a model of $\tuple{\mR,\mT,\mA}$. 

In this paper, we assume that concepts and ABox assertions are represented in
negation normal form (NNF), where $\lnot$ occurs only directly before atomic concepts.\footnote{Every formula can be transformed to an equivalent formula in NNF in polynomial time.} 
We use $\overline{C}$ to denote the NNF of $\lnot C$, and for $\varphi = (a\!:\!C)$, we use $\ovl{\varphi}$ to denote $a\!:\!\ovl{C}$.
For simplicity, we treat axioms of a TBox $\mT$ as concepts representing global assumptions: an axiom $C \sqsubseteq D$ is treated as $\overline{C} \mor D$, while an axiom $C \doteq D$ is treated as $(\overline{C} \mor D) \mand (\overline{D} \mor C)$.\footnote{As this way of handling the TBox is not efficient in practice, the absorption technique like the one discussed in~\cite{SHI-ICCCI} can be used to improve the performance of reasoning.} 
That is, we assume that $\mT$ consists of concepts in NNF. A~concept $C \in \mT$ can be thought of as an axiom $\top \sqsubseteq C$. Thus, an interpretation $\mI$ is a~model of $\mT$ iff $\mI$ validates every concept $C \in \mT$. 


\LongVersion{
\subsection{An Integer Feasibility Problem for Description Logics}

For dealing with number restrictions in \SHOQ, we consider the following integer feasibility problem, which was introduced in~\cite{SHIQ-long}:
\[
\begin{array}{c}
\displaystyle \sum_{j=1}^m a_{i,j}\cdot x_j \,\bowtie_i\, b_i,\  
	\textrm{ for } 1 \leq i \leq l;\\[1.5ex]
x_j \geq 0,\ 
	\textrm{ for } 1 \leq j \leq m;
\end{array}
\]
where each $a_{i,j}$ is either 0 or 1, each $x_j$ is a variable standing for a natural number, each $\bowtie_i$ is either $\leq$ or $\geq$, each $b_i$ is a natural number encoded by using no more than $n$ bits (i.e., $b_i \leq 2^n$). We call this an $\IFDL{l,m,n}$-problem (a~problem of Linear Integer Feasibility for Description Logics with size specified by $l,m,n$). The problem is {\em feasible} if it has a solution (i.e., values for the variables $x_j$, $1 \leq j \leq l$, that are natural numbers satisfying the constraints), and is {\em infeasible} otherwise. By solving an $\IFDL{l,m,n}$-problem we mean checking its feasibility. 

It is known from linear programming that, if the variables $x_j$ are not required to be natural numbers but can be real numbers then the above feasibility problem can be solved in polynomial time in $l$, $m$ and $n$. The general integer linear optimization problem is known to be NP-hard.\footnote{\url{http://en.wikipedia.org/wiki/Integer_programming}}

To solve an integer feasibility problem, we propose to use the decomposition technique and the ``branch and bound'' method~\cite{BranchAndBound}. One can first analyze dependencies between the variables and the constraints to decompose the problem into smaller independent subproblems, then solve the subproblems that are trivial, and after that apply the ``branch and bound'' method~\cite{BranchAndBound} to the remaining subproblems. 

The above mentioned approach may not guarantee that a given $\IFDL{l,m,n}$-problem is solved in exponential time in~$n$. We recall below an estimation of the upper bound for the complexity for some specific cases, using another approach. 

\newcommand{\LemmaIFDL}{Every $\IFDL{l,m,n}$-problem such that 
$l \leq n$, 
$m$ is (at most) exponential in $n$, 
and $b_i \leq n$ for all $1 \leq i \leq l$ 
can be solved in (at most) exponential time in~$n$.}
\begin{lemma}[\cite{SHIQ-long}]\label{lemma: IFDL}
\LemmaIFDL
\end{lemma}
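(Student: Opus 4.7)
The plan is to reduce the $\IFDL{l,m,n}$-problem to one with only a few ``effective'' variables, and then to decide feasibility by a forward dynamic programming over a state space of capped partial sums.

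\emph{Grouping by pattern.} First I would exploit the $\{0,1\}$-coefficient structure. Every variable $x_j$ is completely characterized by the set $P_j = \{i : a_{i,j}=1\} \subseteq \{1,\ldots,l\}$ of constraints that contain it; two variables sharing the same $P_j$ are interchangeable, since only the sum of such variables affects the constraints. Hence the problem is equivalent to one with at most $\min(m, 2^l) \leq 2^n$ variables $y_P$, one for each pattern $P$ occurring among the $P_j$'s; any nonnegative integer value of such a $y_P$ can be realized by concentrating its value on a single underlying $x_j$.

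\emph{Bounding the variables.} Next I claim that if the reduced problem is feasible, it admits a feasible assignment with $y_P \leq n$ for every $P$. Given any feasible $y^*$, set $y_P = \min(y^*_P, n)$. Each $\leq$ constraint remains satisfied since we only decrease variables. For each $\geq$ constraint $\sum_{P \ni i} y_P \geq b_i$: if some $y^*_P$ originally exceeded $n$ for some $P \ni i$, then its capped value is $n \geq b_i$ and alone saturates the constraint; otherwise every summand is unchanged and the constraint continues to hold. Thus we may restrict attention to assignments with $y_P \in \{0,1,\ldots,n\}$.

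\emph{Dynamic programming.} Fix an ordering $P^{(1)}, \ldots, P^{(K)}$ of the occurring patterns, where $K \leq 2^n$. A \emph{state} is a vector $(v_1,\ldots,v_l)$ with $v_i \in \{0,1,\ldots,n,n+1\}$, the value $n+1$ serving as a ``saturation'' marker (for a $\leq$ constraint it signals violation, for a $\geq$ constraint it signals satisfaction). Starting from $(0,\ldots,0)$, at step $k+1$ every currently reachable state branches on a choice $y_{P^{(k+1)}} \in \{0,1,\ldots,n\}$, producing the successor obtained by adding $y_{P^{(k+1)}}$ to each $v_i$ with $i \in P^{(k+1)}$ (capped at $n+1$). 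The problem is feasible iff some state reachable after step $K$ satisfies $v_i \bowtie_i b_i$ for every $i$. The number of distinct states is at most $(n+2)^l \leq (n+2)^n = 2^{O(n \log n)}$, and each of the $K \leq 2^n$ steps performs polynomial work per reachable state, so the total running time is $2^{O(n \log n)}$, which is exponential in $n$.

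The only delicate point is the truncation argument of the second step: one must check that a single coordinate-wise cap preserves \emph{all} $\leq$ and $\geq$ constraints simultaneously, and the key is combining $b_i \leq n$ with the $\{0,1\}$-coefficient structure so that a single capped variable suffices to fulfil a lower-bound constraint. Once that is in hand, the rest is a routine reachability DP whose size is controlled by the product bound $(n+2)^l$.
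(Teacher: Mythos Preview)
Your proposal is correct, but it follows a genuinely different route from the paper's proof.

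The paper argues via a nondeterministic guess-and-check: for each row $i$ it distributes $b_i$ ``units'' among the columns $j$ with $a_{i,j}=1$ (at most $m^{b_i}\le m^n$ choices per row, hence at most $m^{l\cdot n}$ runs in total), producing per-column targets $c_{i,j}$; it then checks whether the system $\{x_j \bowtie_i c_{i,j}\}$ of one-variable interval constraints is consistent. Since $l\le n$ and $m$ is at most exponential in $n$, this yields a $2^{\mathrm{poly}(n)}$ deterministic simulation.

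Your argument instead (i) collapses the $m$ columns to at most $2^l\le 2^n$ pattern classes, (ii) shows that any feasible solution can be truncated coordinatewise to values $\le n$ using $b_i\le n$ and the $\{0,1\}$-coefficient structure, and (iii) runs a reachability DP over capped partial-sum vectors in $\{0,\ldots,n{+}1\}^l$. All three steps are sound; in particular the truncation step is exactly right, and the cap at $n{+}1$ loses no information since every $b_i\le n$. Your approach is fully deterministic from the outset and yields the sharper bound $2^{O(n\log n)}$ versus the paper's $2^{O(n^3)}$ (when $m=2^{O(n)}$); the paper's proof, on the other hand, is shorter and avoids the grouping/bounding preprocessing.
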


\begin{proof}
Consider the following nondeterministic procedure:
\begin{enumerate}
\item initialize $c_{i,j} := 0$ for each $1 \leq i \leq l$ and $1 \leq j \leq m$ such that $a_{i,j} = 1$
\item for each $i$ from 1 to $l$ do
  \begin{itemize}
  \item[] for each $k$ from 1 to $b_i$ do
     \begin{itemize}
     \item[] choose some $j$ among $1, \ldots, m$ such that $a_{i,j} = 1$ and set $c_{i,j} := c_{i,j} + 1$
     \end{itemize}
  \end{itemize}
\item if the set of constraints $\{x_j \,\bowtie_i\, c_{i,j} \mid 1 \leq i \leq l, 1 \leq j \leq m, a_{i,j} = 1\}$ is feasible then return ``yes'', else return ``no''.
\end{enumerate}

Observe that the considered $\IFDL{l,m,n}$-problem is feasible iff there exists a run of the above procedure that returns ``yes''. Since $b_i \leq n$ for all $1 \leq i \leq l$, there are no more than $m^{l\cdot n}$ possible runs of the above procedure. All the steps of the procedure can be executed in time $O(l\cdot m \cdot n)$. Since $l \leq n$ and $m$ is (at most) exponential in $n$, we conclude that the considered $\IFDL{l,m,n}$-problem can deterministically be solved in (at most) exponential time in~$n$.
\myEnd
\end{proof}

The following lemma is more general than the above lemma. 

\begin{lemma}[\cite{SHIQ-long}]\label{lemma: IFDL2}
Every $\IFDL{l,m,n}$-problem satisfying the following properties can be solved in (at most) exponential time in~$n\,$: 
\begin{itemize}
\item $l \leq n$, $m$ is (at most) exponential in $n$, 
\item and 
   \begin{itemize}
   \item either $b_i \leq n$ for all $1 \leq i \leq l$ such that $\bowtie_i$ is $\leq$
   \item or $b_i \leq n$ for all $1 \leq i \leq l$ such that $\bowtie_i$ is $\geq$.
   \end{itemize}
\end{itemize}
\end{lemma}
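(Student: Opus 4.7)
My plan is to adapt the nondeterministic procedure from the proof of Lemma~\ref{lemma: IFDL} by applying the distribution step only to the constraints whose bounds are guaranteed to be at most $n$, and by replacing the distribution of the other constraints with a direct check on the assignment induced by the first step. By the symmetry between $\leq$ and $\geq$ (renaming the relation symbol and swapping ``lower bound'' with ``upper bound'' in the argument does not change its structure), it suffices to treat the alternative where $b_i \leq n$ for every $i$ such that $\bowtie_i$ is $\geq$; the dual alternative follows in a completely analogous way.

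Concretely, I would initialize $c_{i,j} := 0$ for each $\geq$-constraint $i$ and each $j$ with $a_{i,j}=1$, and then, for each such $i$ and each $k \in \{1,\ldots,b_i\}$, nondeterministically choose some $j$ with $a_{i,j}=1$ and set $c_{i,j} := c_{i,j}+1$. This mirrors the loop in the proof of Lemma~\ref{lemma: IFDL}, except that it is restricted to $\geq$-constraints, whose bounds are at most $n$. Next, set $v_j := \max\{c_{i,j} \mid a_{i,j}=1,\ \bowtie_i \text{ is } \geq\}$, or $v_j := 0$ if this set is empty. The procedure accepts iff $\sum_{j:\, a_{i,j}=1} v_j \leq b_i$ holds for every $\leq$-constraint $i$; the possibly huge bounds of the $\leq$-constraints are never distributed, only checked against the candidate assignment $x_j := v_j$.

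For correctness I would argue in both directions. If the original problem admits a solution $(x_j^*)$, then for each $\geq$-constraint $i$ one can pick integers $c_{i,j}^* \leq x_j^*$ summing to $b_i$, since $\sum_{j:\, a_{i,j}=1} x_j^* \geq b_i$; along the corresponding nondeterministic branch, $v_j \leq x_j^*$, so every $\leq$-bound remains respected. Conversely, any accepting branch yields the witness $x_j := v_j$: the $\geq$-constraints hold because $\sum_{j:\, a_{i,j}=1} v_j \geq \sum_{j:\, a_{i,j}=1} c_{i,j} = b_i$, and the $\leq$-constraints hold by the acceptance test. The complexity bound is then the same as in Lemma~\ref{lemma: IFDL}: the number of branches is at most $\prod_i m^{b_i} \leq m^{l\cdot n}$, which under $l \leq n$ and $m$ at most exponential in $n$ stays exponential in $n$, and each branch only performs polynomial work.

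The main subtlety I expect to confront is the one-sidedness of the argument, namely justifying that the $\leq$-constraints may be skipped during distribution and only verified against $v_j$. What is really needed is that, among all assignments $(x_j)$ compatible with the fixed $\geq$-distribution (that is, $x_j \geq c_{i,j}$ for every processed $c_{i,j}$), the pointwise-minimum choice $x_j = v_j$ is optimal for every $\leq$-constraint. This relies on the fact that the coefficients $a_{i,j}$ are nonnegative (in fact $0/1$), so shrinking any $x_j$ can only help a $\leq$-constraint; pinning down this monotonicity cleanly, together with a careful treatment of variables that occur in no $\geq$-constraint (for which $v_j = 0$ is safe for $\leq$-constraints and vacuous for $\geq$-constraints), is where the proof demands the most care.
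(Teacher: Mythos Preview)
Your proposal is correct and follows essentially the same approach as the paper: nondeterministically distribute the bounds of the constraints with $b_i \leq n$ into per-variable allotments $c_{i,j}$, aggregate these into a candidate assignment (via $\max$ for $\geq$-constraints, dually $\min$ for $\leq$-constraints), and then verify the remaining constraints directly. The only cosmetic difference is that the paper spells out the case where the $\leq$-bounds are small (using $d_j := \min_i c_{i,j}$ and a residual feasibility check for the $\geq$-constraints with the unconstrained variables left free), whereas you spell out the dual case; the paper itself dismisses that case as ``similar and omitted,'' so your argument matches.
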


\begin{proof}
Suppose $l \leq n$, $m$ is (at most) exponential in $n$, and $b_i \leq n$ for all $1 \leq i \leq l$ such that $\bowtie_i$ is $\leq$. The other case is similar and omitted. Consider the following nondeterministic procedure:
\begin{enumerate}
\item let $J = \{ j \mid 1 \leq j \leq m$ and there exists $1 \leq i \leq l$ such that $\bowtie_i$ is $\leq$ and $a_{i,j} = 1\}$
\item for each $1 \leq i \leq l$ and $1 \leq j \leq m$ such that $\bowtie_i$ is $\leq$ and $a_{i,j} = 1$, set $c_{i,j} := 0$
\item for each $i$ from 1 to $l$ such that $\bowtie_i$ is $\leq$, do
  \begin{itemize}
  \item[] for each $k$ from 1 to $b_i$ do
     \begin{itemize}
     \item[] choose some $j$ among $1, \ldots, m$ such that $a_{i,j} = 1$ and set $c_{i,j} := c_{i,j} + 1$
     \end{itemize}
  \end{itemize}
\item for each $j \in J$ do
  \begin{itemize}
  \item[] $d_j := \min\{c_{i,j} \mid 1 \leq i \leq l,\; \bowtie_i$ is $\leq$ and $a_{i,j} = 1\}$
  \end{itemize}
\item if the set of constraints $\{\sum_{j=1}^m a_{i,j}\cdot x_j \geq b_i \mid 1 \leq i \leq l,\; \bowtie_i$ is $\geq\} \cup \{x_j = d_j \mid j \in J\}$ is feasible then return ``yes'', else return ``no''.
\end{enumerate}

Observe that the considered $\IFDL{l,m,n}$-problem is feasible iff there exists a run of the above procedure that returns ``yes''. Under the assumptions of the lemma, there are no more than $m^{l\cdot n}$ possible runs of the above procedure. All the steps of the procedure can be executed in time $O(l\cdot m \cdot n)$. Since $l \leq n$ and $m$ is (at most) exponential in $n$, we conclude that the considered $\IFDL{l,m,n}$-problem can deterministically be solved in (at most) exponential time in~$n$.
\myEnd
\end{proof}
}


\LongVersion{
\section{The Traditional Tableau Method and Its Problems}
\label{section: HDSJM}

The problem we study is to check whether a given knowledge base $\KB = \tuple{\mR,\mT,\mA}$ in \SHOQ is satisfiable. 
The {\em traditional tableau method} for this task is as follows. We start from the ABox $\mA$ and try to modify it to obtain a model of $\KB$. At each moment, we have an ABox, which is like a graph. 
At the beginning, each (named) individual occurring in $\mA$ is a node labeled by the set $\Label(a) = \{C \mid a\!:\!C \in \mA\} \cup \mT$, and each assertion $r(a,b)$ in $\mA$ forms an edge from~$a$ to~$b$ that is labeled by~$r$. 
The concepts in $\Label(a)$ are treated as requirements to be realized for $a$. As $\mT$ consists of the global assumptions that should be satisfied for all individuals, the concepts from $\mT$ are included in $\Label(a)$. For example, an axiom $\top \sqsubseteq \Human$ is encoded in NNF as $\Human$, and such a global assumption states that all individuals in the domain should be human beings. 
To see how the requirements for nodes can be realized, let us consider several cases:
\begin{itemize}
\item If $C \mand D \in \Label(v)$ then to realize the requirement $C \mand D$ for $v$ we add both $C$ and $D$ to $\Label(v)$. To see the intuition of this, assume that $\John$ is an individual and $\Male \mand \Happy \in \Label(\John)$. In this case, $\John$ is required to satisfy the property $\Male \mand \Happy$, and to realize this, we add both the requirements $\Male$ and $\Happy$ to $\Label(\John)$. 

\item If $C \mor D \in \Label(v)$ then to realize the requirement $C \mor D$ for $v$ we add either $C$ or $D$ to $\Label(v)$. That is, we make an ``or''-branching, which is dealt with by backtracking (since at each moment we consider only one ABox). If the current ``or''-branch leads to inconsistency, we will backtrack to the nearest ``or''-branching point and try another ``or''-branch. To see the intuition of this, assume that $\Doctor \mor \Lawyer \in \Label(\John)$. In this case, $\John$ is required to satisfy the property $\Doctor \mor \Lawyer$, which states that he is either a doctor or a lawyer, and to realize this requirement, we make a choice: either add the requirement $\Doctor$ or add the requirement $\Lawyer$ to $\Label(\John)$. 

\item If $\E r.C \in \Label(v)$ then to realize the requirement $\E r.C$ for $v$ we connect $v$ to a new node $w$ with $\Label(w) = \{C\} \cup \mT$ via an edge labeled by $r$. Once again, $\mT$ is included in $\Label(w)$ because it consists of the global assumptions that should be realized for all individuals. (Instead of creating a new node, one may use an existing node for $w$ as in the approach with global caching, but this should be done appropriately, e.g., as in our tableau method discussed in the next section. Alternatively, one can use a blocking technique as in~\cite{HorrocksS01,PanH02}.) To see the intuition of the expansion, assume that $\E\hasChild.\Female \in \Label(\John)$. In this case, $\John$ should satisfy the requirement $\E\hasChild.\Female$, which states that he has a female child (a daughter). To realize this, we connect the node $\John$ of the graph to a new node $w$ with $\Label(w) = \{\Female\} \cup \mT$ via an edge labeled by $\hasChild$. From this, it can be seen that the graph contains not only named individuals occurring in $\mA$, but it may also contain nodes like $w$, which are called unnamed individuals. 

\item If $\V r.C \in \Label(v)$ then to realize the requirement $\V r.C$ for $v$, for every node $w$ such that there is an edge with the label $r$ from $v$ to $w$, we add $C$ to $\Label(w)$. To see the intuition of this, assume that $\V\hasChild.\Happy \in \Label(\John)$ and there are edges with the label $\hasChild$ from the node $\John$ to the nodes $\Mary$ and $w$ (i.e., $\Mary$ and $w$ are children of $\John$). In this case, $\John$ should satisfy the requirement $\V\hasChild.\Happy$, which states that all the children of $\John$ should be happy. To realize this, we add the requirement $\Happy$ to both $\Label(\Mary)$ and $\Label(w)$.

\item If $\{a\} \in \Label(v)$ then $v$ and $a$ should denote the same individual (this is the semantics of nominals), and to realize the requirement $\{a\}$ for $v$ we merge the nodes $v$ and $a$ together in an appropriate way. 

\item If $\geq\!n\,r.C \in \Label(v)$ then to realize the requirement $\geq\!n\,r.C$ for $v$ we connect $v$ to $n$ new nodes $w_1,\ldots,w_n$ with $\Label(w_i) = \{C\} \cup \mT$ via an edge labeled by $r$ for all $1 \leq i \leq n$, and keep the constraints $w_i \not\doteq w_j$ for all $1 \leq i \neq j \leq n$. (Once again, an appropriate caching or blocking technique can be used to reduce the number of created nodes.) To see the intuition of the expansion, assume that $\geq\!2\,\hasChild.\Female \in \Label(\John)$. In this case, $\John$ should satisfy the requirement $\geq\!2\,\hasChild.\Female$, which states that he has at least two female children (daughters). To realize this, we connect the node $\John$ of the graph to new nodes $w_1$ and $w_2$ with $\Label(w_1) = \Label(w_2) = \{\Female\} \cup \mT$ via an edges labeled by $\hasChild$ and keep the constraint $w_1 \not\doteq w_2$. 

\item If $\leq\!n\,r.C \in \Label(v)$ and there are pairwise different nodes $w_1,\ldots,w_{n+1}$ such that $v$ is connected to $w_i$ via an edge labeled by $r$ and $C \in \Label(w_i)$ for all $1 \leq i \leq n+1$, then:
  \begin{itemize}
  \item if there exist different $i$ and $j$ among $1,\ldots, n$ such that the constraint $w_i \not\doteq w_j$ is absent then we merge $w_i$ and $w_j$ together in an appropriate way,
  \item otherwise, the current ABox is inconsistent and we do backtracking.
  \end{itemize}
\end{itemize}

Inconsistency may occur, for example, in the following cases:
\begin{itemize}
\item when $\bot \in \Label(v)$ for some $v$; or
\item when $\{A,\lnot A\} \subseteq \Label(v)$ for some $A$ and $v$; or
\item when $a$ and $b$ were merged together, but the assertion $a \not\doteq b$ is a kept constraint; or 
\item when the current ABox contains an edge with the label $r$ from $a$ to $b$, but $(\lnot r(a,b)) \in \mA$.
\end{itemize}

As mentioned before, when the current ABox is inconsistent, backtracking occurs, and if there is no ``or''-branching point to come back, the process terminates with the result ``$\KB$ is unsatisfiable''. 

The above discussion only gives a sketch on how the traditional tableau method works. We did not discuss how role axioms can be dealt with and how a blocking technique can be applied to guarantee termination. Furthermore, merging nodes causes merging edges, and hence an edge may be labeled by a set of roles. 
In general, a tableau algorithm is usually designed so that, if it does not terminate with the result ``$\KB$ is unsatisfiable'', then $\KB$ is satisfiable and we can directly construct a model of $\KB$ from the resulting ({\em clash-free} and {\em completed}) ABox. We refer the reader to~\cite{HorrocksS01,PanH02} for details. 

The traditional tableau method for \SHOQ has the advantage of being intuitive, but it has two disadvantages that make the complexity non-optimal (\NtEXPTIME or \NEXPTIME, depending on the applied blocking technique, in comparison with the optimal complexity \EXPTIME) and the reasoning process not scalable w.r.t.\ number restrictions: 
\begin{itemize}
\item An ABox is like an ``and''-structure (i.e., all of its assertions must hold together) and the search space for the traditional tableau method is an ``or''-tree of ``and''-structures. Recall that ``or''-branchings are caused, amongst others, by the rule for realizing requirements of the form $C \mor D$. The problem is that two nodes in ABoxes in different ``or''-branches may have the same label and the same ``neighborhood'', and both of them are expanded with no reuse, which causes a kind of redundant computation~\cite{GoreN11}. 
\item Reconsider the traditional tableau rule for realizing a requirement of the form $\geq\!n\,r.C$. If $n$ is big, for example, 1000 or 1000000, then the rule creates many new nodes. In the DL literature, this is called ``pay-as-you-go'', but this payment is unnecessarily too high when $n$ is big and it causes the reasoning process not scalable w.r.t.\ number restrictions.
\end{itemize}
} 


\LongVersion{
\section{\EXPTIME Tableaux for \SHOQ}
\label{section: calculus}

In this section, we first define the data structures and outline the framework of our tableau method. We then describe our techniques for dealing with nominals. After that, we specify the used tableau rules and state properties of the resulting tableau decision procedure.
}

\ShortVersion{\section{An \EXPTIME Tableau Method for \SHOQ}

In this section, we first define the data structures and outline the framework of our tableau method. We then describe our techniques for dealing with nominals. After that, we state properties of the resulting tableau decision procedure.
} 


\subsection{Data Structures and the Tableau Framework}

\LongVersion{Recall that the search space for the traditional tableau method for \SHOQ~\cite{HorrocksS01,PanH02} is an ``or''-tree of ``and''-structures, and this causes the complexity of the reasoning process to become non-optimal (even in the case without number restrictions).}
\ShortVersion{As discussed in the long version~\cite{SHOQ-long} of the current paper, the search space of the traditional tableau method for \SHOQ~\cite{HorrocksS01,PanH02} is an ``or''-tree of ``and''-structures, and this causes the complexity of the reasoning process to become non-optimal (even in the case without number restrictions).} 
The idea for overcoming this problem is to use global caching~\cite{Pratt80,GoreN11,SHI-ICCCI}. With global caching, the search space is like a single ``and-or'' graph. For checking satisfiability of a concept w.r.t.\ an RBox and a TBox~\cite{GoreN11}, each node of the graph is a {\em simple node} like an individual (in an ABox). For checking satisfiability of a knowledge base~\cite{SHI-ICCCI,SHIQ-long}, each node of the graph is either a {\bf complex node} like an eABox, or a {\bf simple node} like an individual. More precisely, the label of a complex node is a set of eABox assertions, while the label of a simple node is a set of concepts. \ShortVersion{The {\bf label of a node $v$} is denoted by $\Label(v)$. The formulas in $\Label(v)$ are treated as requirements to be realized for the node~$v$.} The information about whether a node $v$ is complex or simple is kept by $\SType(v)$ (the {\bf subtype of $v$}). 

At the beginning, the graph has only one node, called the {\bf root}, which is a complex node. Then, in the first stage, complex nodes are expanded only by so called {\bf static (tableau) rules} that do not create new (unnamed) individuals. This creates a layer of complex nodes. When no static tableau rules are applicable to a complex node $v$, if $\Label(v)$ contains a requirement of the form $a\!:\!\E r.C$ then to realize this requirement we can connect $v$ to a {\em simple} node $w$ with $\Label(w) = \{C\} \cup \mT$ via an edge~$e$. This edge is related to $a$ and $r$. To keep this information we store $\piI(e) = a$ (the letter $\pi$ stands for ``projection'' and the letter $I$ stands for ``individual'') and $\piR(e) = \{r\}$ (the letter $R$ stands for ``roles''; as mentioned \LongVersion{earlier}\ShortVersion{in~\cite{SHOQ-long}}, due to merging nodes, an edge may be labeled by more than one role, and hence we use a set of roles).

A {\bf transitional (tableau) rule} is a rule that realizes a requirement of the form $a\!:\!\E r.C$, $a\!:(\geq\!n\,r.C)$, $\E r.C$ or $\geq\!n\,r.C$ for a node $v$ by connecting $v$ to a new node or a number of new nodes or by using some existing nodes. If no static rules are applicable to a node $v$ then $v$ is called a {\bf state}, otherwise it is called a {\bf non-state}. This information is kept by $\Type(v)$ (the {\bf type of $v$}). Transitional tableau rules are applied only to states. A non-state is like an ``or''-node in an ``and-or'' graph, but a state is a structure more sophisticated than an ``and''-node in an ``and-or'' graph (due to feasibility checking of the set of integer linear constraints related to the state).\footnote{In tableaux for simpler DLs like \ALC~\cite{GoreN11} or \SHI~\cite{SHI-ICCCI}, a state is simply an ``and''-node.} 

Consider a simple state $v$ (i.e., a state that is a simple node) with $\E r.C \in \Label(v)$. To realize this requirement for $v$, we can connect $v$ to a new simple node $w$ with $\Label(w) = \{C\} \cup \mT$ by an edge~$e$. For such an edge~$e$, let $\piI(e) = \Null$ (i.e., no named individual is related to~$e$). 

Consider a state $v$. To realize requirements of the form $a\!:\!(\geq\!n\,r.C)$, $a\!:\!(\leq\!n\,r.C)$, $\geq\!n\,r.C$ or $\leq\!n\,r.C$ for $v$, we may have to connect $v$ to some simple nodes $w_i$ by edges $e_i$, respectively, and check feasibility of a certain set of integer linear constraints. The {\bf set of integer linear constraints for $v$} is kept by $\ILConstraints(v)$. For such mentioned edges $e_i$, let $\piT(e_i) = \CQF$ (the letter $T$ stands for ``type''). For other edges $e$, which are created for realizing a requirement of the form $a\!:\!\E r.C$ or $\E r.C$, let $\piT(e) = \TUS$. 

We have explained the attributes $\piT(e)$, $\piR(e)$ and $\piI(e)$ that should be kept for an edge $e$ outgoing from a state. Summing up, we have the following formal definition:
\begin{definition}\LongVersion{\em}
Let $\EdgeLabels = \{\TUS, \CQF\} \times \mathcal{P}(\RN) \times (\IN \cup \{\Null\})$. For $e \in \EdgeLabels$, let $e = \tuple{\piT(e),\piR(e),\piI(e)}$. Thus, $\piT(e)$ is called the type of the edge label $e$, $\piR(e)$ is a set of roles, and $\piI(e)$ is either an individual or $\Null$. (Each edge is specified by the source, the target and the label.)
\myEnd
\end{definition}

We have explained the attributes $\Label(v)$, $\Type(v)$, $\SType(v)$ and $\ILConstraints(v)$ for a node $v$. We need three more attributes for $v$, which are described and justified below.
\begin{itemize}
\item To realize the requirement $C \mor D \in \Label(v)$ for a simple node $v$, we expand $v$ by a static rule that connects $v$ to two simple nodes $w_1$ and $w_2$ such that $\Label(w_1) = \Label(v) \cup \{C\} \setminus \{C \mor D\}$ and $\Label(w_2) = \Label(v) \cup \{D\} \setminus \{C \mor D\}$. The requirement $C \mor D$ is put to the sets $\RFormulas(w_1)$ and $\RFormulas(w_2)$ to record that it has been realized for $w_1$ and $w_2$, respectively. In general, the attribute $\RFormulas(w)$ for a node $w$ keeps the set of the requirements that have been realized for $w$ by using static rules. It is called the {\bf set of reduced formulas of~$w$}.  

\item Suppose $v$ is a complex node and either $a \doteq b$ or $a\!:\!\{b\}$ belongs to $\Label(v)$. Then, to realize that requirement for $v$, we merge the individual $b$ to the individual $a$ in an appropriate way and record this fact by keeping $\Repl(v)(b) = a$. The attribute $\Repl(v)$ is called the {\bf partial mapping specifying replacements of individuals for the node $v$}.

\item The last attribute needed for a node $v$ is called the {\bf status of $v$} and denoted by $\Status(v)$. Possible statuses of nodes are: unexpanded, partially-expanded, fully-expanded, closed, open, blocked, and closed w.r.t.\ a set of complex states. Informally, $\Unsat$ means ``unsatisfiable w.r.t.\ $\mR$ and $\mT$'', $\Sat$ means ``satisfiable w.r.t.\ $\mR$ and $\mT$'', and $\UnsatWrt(U)$ means ``unsatisfiable w.r.t.\ $\mR$, $\mT$ and any node from $U$''. 
\end{itemize}
 
We arrive at the following formal definition.

\begin{definition}\LongVersion{\em}
A {\em tableau} is a rooted graph $G = \tuple{V,E,\nu}$, where $V$ is a set of nodes, $E \subseteq V \times V$ is a set of edges, $\nu \in V$ is the root, each node $v \in V$ has a number of attributes, and each edge $\tuple{v,w}$ may have a number of labels from $\EdgeLabels$.\footnote{An edge $\tuple{v,w}$ may have a number of labels from $\EdgeLabels$ because of global caching, which we will briefly discuss later.} 
The attributes of a tableau node $v$ are:
\begin{itemize}
\item $\Type(v) \in \{\State, \NonState\}$. 

\item $\SType(v) \in \{\Complex, \Simple\}$ is called the subtype of $v$. 

\item $\Status(v) \in \{\Unexpanded$, $\PExpanded$, $\FExpanded$, $\Unsat$, $\Sat$, $\Blocked\} \cup \{\UnsatWrt(U) \mid$ $U \subseteq V$ and $\Type(u) = \State \land \SType(u) = \Complex$ for all $u \in U\}$, where $\PExpanded$ and $\FExpanded$ mean ``partially expanded'' and ``fully expanded'', respectively. $\Status(v)$ may be $\PExpanded$ only when $\Type(v) = \State$. 
If $\Status(v) = \UnsatWrt(U)$ then we say that the node $v$ is closed w.r.t.\ any node from $U$. 

\item $\Label(v)$ is a finite set of formulas called the label of $v$. 

\item $\RFormulas(v)$ is a finite set of formulas called the set of reduced formulas of~$v$.

\item $\Repl(v) : \IN \to \IN$ is a partial mapping specifying replacements of individuals. It is available only when $v$ is a complex node. If $\Repl(v)(a) = b$ then at the node $v$ we have $a \doteq b$ and $b$ is the representative of its equivalence class. 

\item $\ILConstraints(v)$ is a set of integer linear constraints. It is available only when $\Type(v) = \State$. The constraints use variables $x_{w,e}$ indexed by a pair $\tuple{w,e}$ such that $\tuple{v,w} \in E$, $e \in \ELabels(v,w)$ and $\piT(e) = \CQF$. Such a variable specifies how many copies of the successor $w$ using the edge label $e$ will be created for~$v$. 
\myEnd
\end{itemize}
\end{definition}

If $\tuple{v,w} \in E$ then we call $v$ a {\em predecessor} of $w$ and $w$ a {\em successor} of~$v$. 
An edge outgoing from a node $v$ has labels iff $\Type(v) = \State$. 
When defined, the set of labels of an edge $\tuple{v,w}$ is denoted by $\ELabels(v,w)$.   
If $e \in \ELabels(v,w)$ then $\piI(e) = \Null$ iff $\SType(v) = \Simple$.  

Formally, a node $v$ is called a {\em state} if $\Type(v) = \State$, and a {\em non-state} otherwise. It is called a {\em complex node} if $\SType(v) = \Complex$, and a {\em simple node} otherwise. 
The root $\nu$ is a complex non-state. 

A node may have status $\Blocked$ only when it is a simple node with the label containing a nominal $\{a\}$. The status $\Blocked$ can be updated only to $\Unsat$ or $\UnsatWrt(\ldots)$. We write $\UnsatWrt(\ldots)$ to mean $\UnsatWrt(U)$ for some~$U$. By $\Status(v) \neq \UnsatWrt(\{u,\ldots\})$ we denote that $\Status(v)$ is not of the form $\UnsatWrt(U)$ with $u \in U$. 

The graph $G$ consists of two layers: the layer of complex nodes and the layer of simple nodes. There are no edges from simple nodes to complex nodes. The edges from complex nodes to simple nodes are exactly the edges outgoing from complex states. That is, if $\tuple{v,w}$ is an edge from a complex node $v$ to a simple node $w$ then $\Type(v) = \State$, if $\Type(v) = \State$ and $\tuple{v,w} \in E$ then $\SType(w) = \Simple$. 
Each complex node of $G$ is like an eABox (more formally, its label is an eABox), which can be treated as a graph whose vertices are named individuals. On the other hand, a simple node of $G$ stands for an unnamed individual. 
If $e$ is a label of an edge from a complex state $v$ to a simple node $w$ then the triple $\tuple{v,e,w}$ can be treated as an edge from the named individual $\piI(e)$ (an inner node in the graph representing~$v$) to the unnamed individual corresponding to $w$, and that edge is via the roles from~$\piR(e)$.

We will use also assertions of the form $a\!:\!(\preceq\!n\,s.C)$ and $a\!:\!(\succeq\!n\,s.C)$, where $s$ is a numeric role. The difference between $a\!:\!(\preceq\!n\,s.C)$ and $a\!:\!(\leq\!n\,s.C)$ is that, for checking $a\!:\!(\preceq\!n\,s.C)$, we do not have to pay attention to assertions of the form $s(a,b)$ or $r(a,b)$ with $r$ being a subrole of $s$. The aim for $a\!:\!(\succeq\!n\,s.C)$ is similar. We use $a\!:\!(\preceq\!n\,s.C)$ and $a\!:\!(\succeq\!n\,s.C)$ only as syntactic representations of some expressions, and do not provide semantics for them. 
We define 
\[
\FullLabel(v) = \Label(v) \cup \RFormulas(v) - \{\textrm{formulas of the form } a\!:(\preceq\!n\,s.C) \textrm{ or } a\!:(\succeq\!n\,s.C)\}.
\]

We apply global caching: if $v_1,v_2 \in V$, $\Label(v_1) = \Label(v_2)$ and ($\SType(v_1) = \SType(v_2) = \Simple$ or ($\SType(v_1) = \SType(v_2) = \Complex$ and $\Type(v_1) = \Type(v_2)$)) then $v_1 = v_2$. 
Due to global caching, an edge outgoing from a state may have a number of labels as the result of merging edges.
\LongVersion{Creation of a new node or a new edge is done by Procedure $\ConToSucc$ (connect to a successor) given on page~\pageref{proc: ConToSucc}. This procedure creates a connection from a node $v$ given as the first parameter to a node $w$ with $\Type(w)$, $\SType(w)$, $\Label(w)$, $\RFormulas(w)$, $\Repl(w)$, $\ELabels(v,w)$ specified by the remaining parameters. 


\begin{function}[t]
\caption{ConToSucc($v, type, sType, label, rFmls, indRepl, eLabel$)\label{proc: ConToSucc}}
\GlobalData{a rooted graph $\tuple{V,E,\nu}$.}
\Purpose{connect a node $v$ to a successor, which is created if necessary.}

\uIf{$v \neq \Null$ and there exists $w \in V$ such that $\Label(w) = label$ and $(\Type(w) = type$ or $\SType(w) = \Simple)$}{
   $E := E \cup \{\tuple{v,w}\}$,\ \ 
   $\RFormulas(w) := \RFormulas(w) \cup rFmls$\;
   \lIf{$\Type(v) = \State$}{$\ELabels(v,w) := \ELabels(v,w) \cup \{eLabel\}$}\;
}
\Else{
   create a new node $w$,\ \ 
   $V := V \cup \{w\}$,\ \ 
   \lIf{$v \neq \Null$}{$E := E \cup \{\tuple{v,w}\}$}\;

   $\Type(w) := type$,\ 
   $\SType(w) := sType$,\ 
   $\Status(w) := \Unexpanded$\; 
   $\Label(w) := label$,\ 
   $\RFormulas(w) := rFmls$\; 

   \lIf{$type = \State$}{$\ILConstraints(w) := \emptyset$}\;
   \lIf{$v \neq \Null$ and $\Type(v) = \State$}{$\ELabels(v,w) := \{eLabel\}$}\;
   \lIf{$indRepl \neq \Null$}{$\Repl(w) := indRepl$}\\
   \ElseIf{$sType = \Complex$}{
	\lForEach{individual $a$ occurring in $\Label(w)$}{$\Repl(w)(a) := a$}\;
   }
}

\Return{$w$}\;
\end{function}
} 

We say that a node $v$ may affect the status of the root $\nu$ if there exists a path consisting of nodes $v_0 = \nu, v_1, \ldots, v_{n-1}, v_n = v$ such that, for every $0 \leq i < n$, $\Status(v_i)$ differs from $\Sat$ and $\Unsat$, and if it is $\UnsatWrt(U)$ then $U$ is disjoint from $\{v_0,\ldots,v_i\}$. In that case, if $u \in \{v_1,\ldots,v_n\}$ then we say that $v$ may affect the status of the root~$\nu$ via a path through~$u$.


From now on, let $\tuple{\mR,\mT,\mA}$ be a knowledge base in NNF of the logic \SHOQ, with $\mA \neq \emptyset$.$\,$\footnote{If $\mA$ is empty, we can add $a\!:\!\top$ to it, where $a$ is a special individual.} In this section we present a tableau calculus \CSHOQ for checking satisfiability of $\tuple{\mR,\mT,\mA}$.
A~\CSHOQ-tableau for $\tuple{\mR,\mT,\mA}$ is a rooted graph $G = \tuple{V,E,\nu}$ constructed as follows.

\LongVersion{\subsubsection*{Initialization:} 
Set $V := \emptyset$ and $E := \emptyset$. Then, create the root node by executing $\nu := \ConToSucc(\Null, \NonState, \Complex, label, \emptyset, \Null, \Null)$, where $label = \mA\ \cup$ $\{(a\!:\!C) \mid$ $C \in \mT$ and $a$ is an individual occurring in $\mA$ or $\mT\}$.}
\ShortVersion{

\bigskip

\noindent{\bf Initialization:} 
$V := \{\nu\}$, 
$E := \emptyset$, 
$\Type(\nu) := \NonState$, 
$\SType(\nu) := \Complex$, 
$\Status(\nu) := \Unexpanded$, 
$\RFormulas(\nu) := \emptyset$, 
$\Label(\nu) := \mA\ \cup$ $\{(a\!:\!C) \mid$ $C \in \mT$ and $a$ is an individual occurring in $\mA$ or $\mT\}$, 
for each individual $a$ occurring in $\Label(\nu)$ set $\Repl(\nu)(a) := a$.
}

\LongVersion{\subsubsection*{Rules' Priorities and Expansion Strategies:}}
\ShortVersion{

\bigskip

\noindent{\bf Rules' Priorities and Expansion Strategies:}}
The graph is then expanded by the following rules, 
\LongVersion{which will be specified shortly:}
\ShortVersion{which are specified in detail in~\cite{SHOQ-long}:}

\begin{tabbing}
mm \= mmmm \= \kill
\> \UPS \> rules for updating statuses of nodes,\\[0.5ex]
\> \US \> unary static expansion rules,\\[0.5ex]
\> \DN \> a rule for dealing with nominals,\\[0.5ex]
\> \NUS \> a non-unary static expansion rule,\\[0.5ex]
\> \FS \> the forming-state rule,\\[0.5ex]
\> \TP \> a transitional partial-expansion rule,\\[0.5ex]
\> \TF \> a transitional full-expansion rule.
\end{tabbing}

Each of the rules is parametrized by a node $v$. We say that a rule is {\em applicable} to $v$ if it can be applied to $v$ to make changes to the graph. 
The rule \UPS has a higher priority than \US, which has a higher priority than the remaining rules in the list. 
If neither \UPS nor \US is applicable to any node, then choose a node $v$ with status $\Unexpanded$ or $\PExpanded$, choose the first rule applicable to $v$ among the rules in the last five items of the above list, and apply it to~$v$. Any strategy can be used for choosing $v$, but it is worth to choose $v$ for expansion only when $v$ may affect the status of the root $\nu$ of the graph. 
Note that the priorities of the rules are specified by the order in the above list, but the rules \UPS and \US are checked globally (technically, they are triggered immediately when possible), while the remaining rules are checked for a chosen node.

\ShortVersion{

\bigskip

\noindent{\bf Termination:} }
The construction of the graph ends when the root $\nu$ receives the status $\Unsat$ or $\Sat$ or when no more changes that may affect the status of $\nu$ can be made.\footnote{That is, ignoring nodes that are unreachable from $\nu$ via a path without nodes with status $\Unsat$ or $\Sat$, no more changes can be made to the graph.} Theorem~\ref{theorem: s-c} states that the knowledge base $\tuple{\mR,\mT,\mA}$ is satisfiable iff $\Status(\nu) \neq \Unsat$.


\subsection{Techniques for Dealing with Nominals}

As usual, to deal with assertions of the form $a\!:\!\{b\}$ (resp.\ $a\!:\!\lnot\{b\}$) we use the predicate $\doteq$ (resp.~$\not\doteq$) and the replacement technique. 
Recall also that we use statuses of the form $\UnsatWrt(U)$ in order to be able to apply global caching in the presence of nominals. Updating statuses of nodes is defined appropriately\ShortVersion{ (see~\cite{SHOQ-long})}. 
Our other techniques for dealing with nominals are described below. 

Suppose $v$ is a simple node with $\Status(v) \notin \{\Unsat,\Sat\}$ and $\{a\} \in \Label(v)$, a complex state~$u$ is an ancestor of $v$, and $v$ may affect the status of the root $\nu$ via a path through~$u$. Let $u_0$ be a predecessor of $u$. The node $u_0$ has only $u$ as a successor and it was expanded by the forming-state rule. There are three cases:
\begin{itemize}
\item If, for every $C \in \Label(v)$, the formula obtained from $a\!:\!C$ by replacing every individual $b$ with $\Repl(u)(b)$, when $\Repl(u)(b)$ is defined, belongs to $\FullLabel(u)$, then $v$ is ``consistent'' with $u$. 
\item If there exists $C \in \Label(v)$ such that the formula obtained from $a\!:\!\ovl{C}$ (where $\ovl{C}$ is the negation of $C$ in NNF) by replacing every individual $b$ with $\Repl(u)(b)$, when $\Repl(u)(b)$ is defined, belongs to $\FullLabel(u)$, then $v$ is ``inconsistent'' with $u$. In this case, if $\Status(v)$ is of the form $\UnsatWrt(U)$ then we update it to $\UnsatWrt(U \cup \{u\})$, else we update it to $\UnsatWrt(\{u\})$. 
\item In the remaining case, the node $u$ is ``incomplete'' w.r.t.\ $v$, which means that the expansion of $u_0$ was not appropriate. Thus, we delete the edge $\tuple{u_0,u}$ and re-expand $u_0$ by an appropriate ``or''-branching\ShortVersion{ (see~\cite{SHOQ-long})}.
\end{itemize}

For dealing with interaction between number restrictions and nominals, to guarantee that every nominal represents a singleton set and a named individual cannot be cloned we use concepts of the form $\leq\!1\,r.\{a\}$ and assertions of the form $a\!:\,\leq\!1\,r.\{b\}$ that are {\em relevant} w.r.t.\ the TBox $\mT$ and the label of the considered node. One can define this relation to be the full one (i.e., so that such formulas are always relevant). However, to increase efficiency we define this notion as follows. 

Let $X$ be a set of formulas. We say that a formula $\varphi$ {\em occurs positively at the modal depth $0$ in $X$} if there exist $\psi \in X$ and an occurrence of $\varphi$ in $\psi$ that is not in the scope of $\lnot$, $\E r$, $\V r$, $\geq\!r\,n$, $\leq\!r\,n$ for any $r \in \RN$. (Recall that formulas are in NNF.)

\begin{definition}\LongVersion{\em}
We say that a concept $\leq\!1\,r.\{a\}$ is {\em relevant} w.r.t.\ a TBox $\mT$ and a set $X$ of concepts if the following conditions hold:
\begin{itemize}
\item either of the following conditions holds:
   \begin{itemize}
   \item some concept $\geq\!m\,s.C$ with $m \geq 2$ and $s \sqsubseteq_\mR r$ occurs positively at the modal depth $0$ in $X$ -- in this case, let $s_1 = s_2 = s$ and $C_1 = C_2 = C$,
   \item some different concepts $C'_1$ and $C'_2$ occur positively at the modal depth $0$ in $X$ and each $C'_i$ is of the form $\E s_i.C_i$ or $\geq\!1\,s_i.C_i$ with $s_i \sqsubseteq_\mR r$; 
   \end{itemize}
\item one of the following conditions holds:
   \begin{itemize}
   \item the nominal $\{a\}$ occurs positively at the modal depth $0$ in $\mT$ or $\{C_1\}$ or $\{C_2\}$, 
   \item some concept $\V r'.D$ satisfying $s_1 \sqsubseteq r'$ and $s_2 \sqsubseteq r'$ occurs positively at the modal depth~$0$ in $\mT$ and the nominal $\{a\}$ occurs positively at the modal depth $0$ in $\{D\}$,
   \item some concept $\leq\!n\,r'.D$ satisfying $s_1 \sqsubseteq r'$ and $s_2 \sqsubseteq r'$ occurs positively at the modal depth~$0$ in $\mT$ and either $\{a\}$ or $\lnot\{a\}$ occurs positively at the modal depth $0$ in $\{D\}$.
   \end{itemize}
\end{itemize}

A concept $a\!:\,\leq\!1\,r.\{b\}$ is {\em relevant} w.r.t.\ a TBox $\mT$ and a set $X$ of eABox assertions if the concept $\leq\!1\,r.\{b\}$ is relevant w.r.t.\ $\mT$ and the set $\{C \mid a\!:\!C \in X\}$.
\myEnd
\end{definition}

Note that every interpretation $\mI$ always validates $\leq\!1\,r.\{a\}$ (i.e., $(\leq\!1\,r.\{a\})^\mI = \Delta^\mI$) and satisfies $a\!:\,\leq\!1\,r.\{b\}$ (i.e., $\mI \models a\!:\,\leq\!1\,r.\{b\}$). 


\LongVersion{
\LongVersion{\subsection{Illustrative Examples}

Before specifying the tableau rules in detail, we present simple examples to illustrate some ideas (but not all aspects) of our method. Despite that these examples refer to the tableau rules, we choose this place for presenting them because the examples are quite intuitive and the reader can catch the ideas of our method without knowing the detailed rules. He or she can always consult the rules in the next subsection.
}

\ShortVersion{\section{Illustrative Examples}}


\label{section: examples}

\begin{figure}[t!]
\begin{center}
\includegraphics{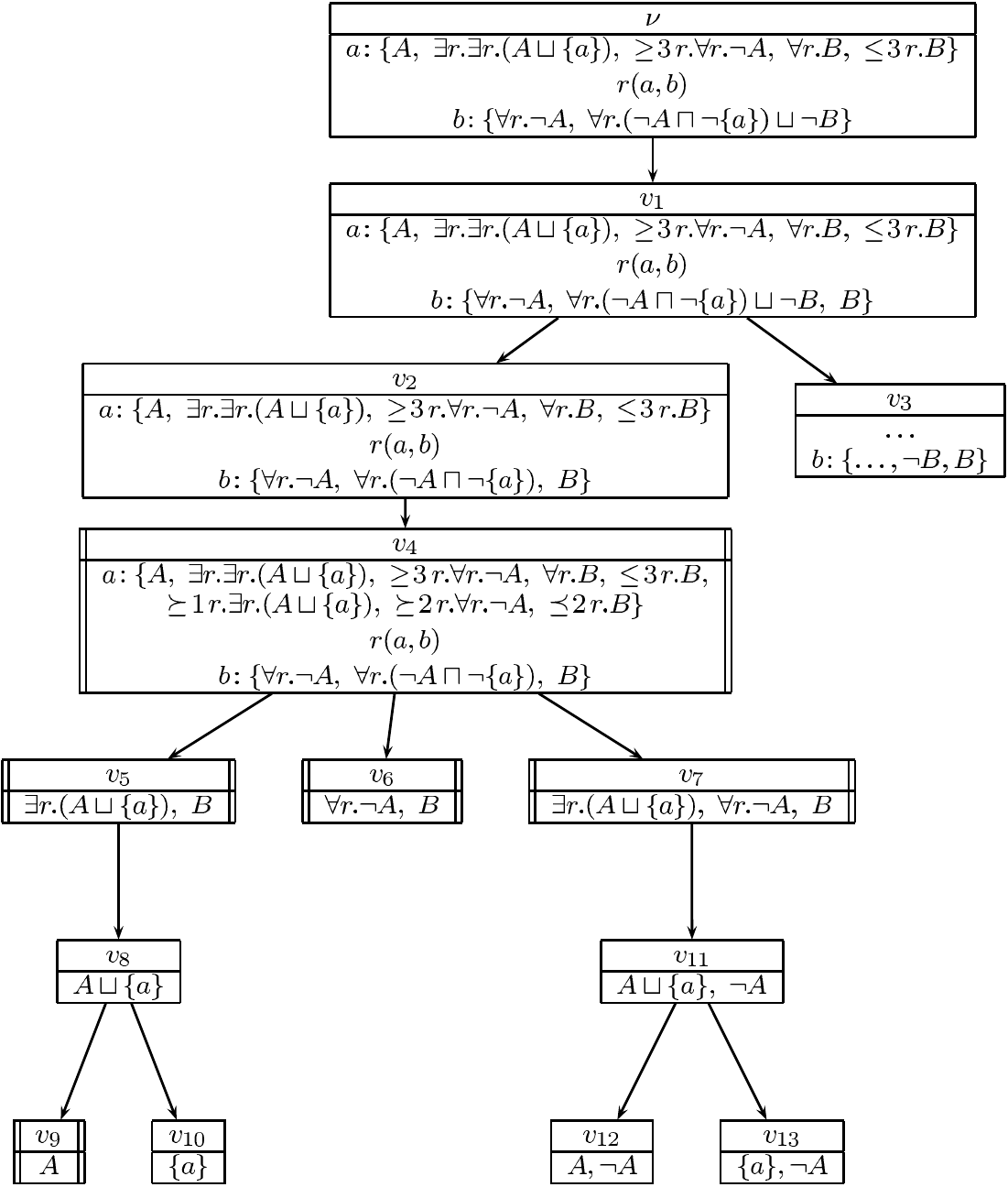}
\end{center}
\caption{An illustration of the tableau described in Example~\ref{example1}. 
The marked nodes $v_4$ -- $v_7$ and $v_9$ are states. The nodes $\nu$, $v_1$ -- $v_4$ are complex nodes, the remaining are simple nodes. In each node, we display the formulas of its label.
\label{fig-1}}
\end{figure} 

\begin{example}\label{example1}
Let us construct a \CSHOQ-tableau for $\tuple{\mR,\mT,\mA}$, where
\[
\begin{array}{rcl}
\mA & = & \{a\!:\!A,\ a\!:\!\E r.\E r.(A \mor \{a\}),\ a\!:\,\geq\!3\,r.\V r.\lnot A,\ a\!:\!\forall r.B,\ a\!:\,\leq\!3\,r.B,\\[0.5ex] 
& & \;\,r(a,b),\ b\!:\!\V r.\lnot A,\ b\!:\!(\V r.(\lnot A \mand \lnot\{a\}) \mor \lnot B)\},
\end{array}
\]
$\mR = \emptyset$ and $\mT = \emptyset$. 
An illustration is presented in Figure~\ref{fig-1}.

At the beginning, the graph has only the root $\nu$ which is a complex non-state with \mbox{$\Label(\nu) = \mA$}. 
Since $\{a\!:\!\V r.B, r(a,b)\} \subset \Label(\nu)$, applying a unary static expansion rule to $\nu$, we connect it to a new complex non-state $v_1$ with $\Label(v_1) = \Label(\nu) \cup \{b\!:\!B\}$. 

Since $b\!:\!(\V r.(\lnot A \mand \lnot\{a\}) \mor \lnot B) \in \Label(v_1)$, applying the non-unary static expansion rule to $v_1$, we connect it to new complex non-states $v_2$ and $v_3$ with 
\begin{eqnarray*}
\Label(v_2) & = & \Label(v_1) - \{b\!:\!(\V r.(\lnot A \mand \lnot\{a\}) \mor \lnot B)\} \cup \{b\!:\!\V r.(\lnot A \mand \lnot\{a\})\}\\
\Label(v_3) & = & \Label(v_1) - \{b\!:\!(\V r.(\lnot A \mand \lnot\{a\}) \mor \lnot B)\} \cup \{b\!:\!\lnot B\}. 
\end{eqnarray*}

Since both $b\!:\!B$ and $b\!:\!\lnot B$ belong to $\Label(v_3)$, the node $v_3$ receives the status $\Unsat$. 
Applying the forming-state rule to $v_2$, we connect it to a new complex state $v_4$ with 
\[ \Label(v_4) = \Label(v_2) \cup \{a\!:\,\succeq\!1\,r.\E r.(A \mor \{a\}),\ a\!:\,\succeq\!2\,r.\V r.\lnot A,\ a\!:\,\preceq\!2\,r.B\}. \]
The assertion $a\!:\,\succeq\!1\,r.\E r.(A \mor \{a\}) \in \Label(v_4)$ is due to $a\!:\!\E r.\E r.(A \mor \{a\}) \in \Label(v_2)$ and the fact that the negation of $b\!:\!\E r.(A \mor \{a\})$ in NNF belongs to $\Label(v_2)$ (notice that $r(a,b) \in \Label(v_2)$). 
The assertion $a\!:\,\succeq\!2\,r.\V r.\lnot A \in \Label(v_4)$ is due to $a\!:\,\geq\!3\,r.\V r.\lnot A \in \Label(v_2)$ and the fact that $\{r(a,b)$, \mbox{$b\!:\!\V r.\lnot A\} \subset \Label(v_2)$}. 
Similarly, the assertion $a\!:\,\preceq\!2\,r.B \in \Label(v_4)$ is due to $a\!:\,\leq\!3\,r.B \in \Label(v_2)$ and the fact $\{r(a,b), b\!:\!B\} \subset \Label(v_2)$.

As $r$ is a numeric role, applying the transitional partial-expansion rule\footnote{which is used for making transitions via non-numeric roles} to $v_4$, we just change the status of $v_4$ to $\PExpanded$. After that, applying the transitional full-expansion rule to $v_4$, we connect it to new simple non-states $v_5$, $v_6$, $v_7$, using the edge label $e = \tuple{\CQF,\{r\},a}$, such that $\Label(v_5) = \{\E r.(A \mor \{a\}),\ B\}$, $\Label(v_6) = \{\V r.\lnot A,\ B\}$, $\Label(v_7) = \{\E r.(A \mor \{a\}),\ \V r.\lnot A,\ B\}$. 
The creation of $v_5$ is caused by $a\!:\,\succeq\!1\,r.\E r.(A \mor \{a\}) \in \Label(v_4)$, while the creation of $v_6$ is caused by $a\!:\,\succeq\!1\,r.\V r.\lnot A$. The node $v_7$ results from merging $v_5$ and $v_6$. Furthermore, $\ILConstraints(v_4)$ consists of $x_{v_i,e} \geq 0$, for $5 \leq i \leq 7$, and 
\begin{eqnarray*}
x_{v_5,e} + x_{v_7,e} & \geq & 1\\
x_{v_6,e} + x_{v_7,e} & \geq & 2\\
x_{v_5,e} + x_{v_6,e} + x_{v_7,e} & \leq & 2.
\end{eqnarray*}

Applying the forming-state rule to $v_5$, the type of this node is changed from $\NonState$ to $\State$. Next, applying the transitional partial-expansion rule to $v_5$, its status is changed to $\PExpanded$. Then, applying the transitional full-expansion rule to $v_5$, we connect $v_5$ to a new simple non-state $v_8$ with $\Label(v_8) = \{A \mor \{a\}\}$ using the edge label $e' = \tuple{\CQF,\{r\},\Null}$ and set $\ILConstraints(v_5) = \{x_{v_8,e'} \geq 0, x_{v_8,e'} \geq 1\}$. 

Applying the non-unary static expansion rule to $v_8$, we connect it to new simple non-states $v_9$ and $v_{10}$ with $\Label(v_9) = \{A\}$ and $\Label(v_{10}) = \{\{a\}\}$. The status of $v_9$ is then changed to $\Sat$, which causes the statuses of $v_8$ and $v_5$ to be updated to $\Sat$. The node $v_{10}$ is not expanded as it does not affect the status of the root node $\nu$. 

Applying the forming-state rule to $v_6$, the type of this node is changed from $\NonState$ to $\State$. Next, applying the transitional partial-expansion rule and then the transitional full-expansion rule to $v_6$, its status is changed to $\FExpanded$. The status of $v_6$ is then updated to $\Sat$. 

Applying the forming-state rule to $v_7$, the type of this node is changed from $\NonState$ to $\State$. Next, applying the transitional partial-expansion rule to $v_7$, its status is changed to $\PExpanded$. Then, applying the transitional full-expansion rule to $v_7$, we connect $v_7$ to a new simple non-state $v_{11}$ with $\Label(v_{11}) = \{A \mor \{a\},\ \lnot A\}$ using the mentioned edge label $e'$ and set $\ILConstraints(v_7) =$ \mbox{$\{x_{v_{11},e'} \geq 0$}, \mbox{$x_{v_{11},e'} \geq 1\}$}. 

Applying the non-unary static expansion rule to $v_{11}$, we connect it to new simple non-states $v_{12}$ and $v_{13}$ with $\Label(v_{12}) = \{A, \lnot A\}$ and $\Label(v_{13}) = \{\{a\}, \lnot A\}$. The status of $v_{12}$ is then changed to $\Unsat$.
Since $a\!:\!A \in \Label(v_4)$, the status of $v_{13}$ is updated to $\UnsatWrt(\{v_4\})$, which causes the status of $v_{11}$ to be updated also to $\UnsatWrt(\{v_4\})$. As the set $\ILConstraints(v_7) \cup \{x_{v_{11},e'} = 0\}$ is infeasible, the status of $v_7$ is updated to $\UnsatWrt(\{v_4\})$.  
Next, as the set $\ILConstraints(v_4) \cup \{x_{v_7,e} = 0\}$ is infeasible, the status of $v_4$ is first updated to $\UnsatWrt(\{v_4\})$ and then to $\Unsat$. After that, the statuses of $v_2$, $v_1$, $\nu$ are sequentially updated to $\Unsat$. Thus, we conclude that the knowledge base $\tuple{\mR,\mT,\mA}$ is unsatisfiable. 
\myEnd
\end{example}


\begin{figure}[t!]
\begin{center}
\includegraphics{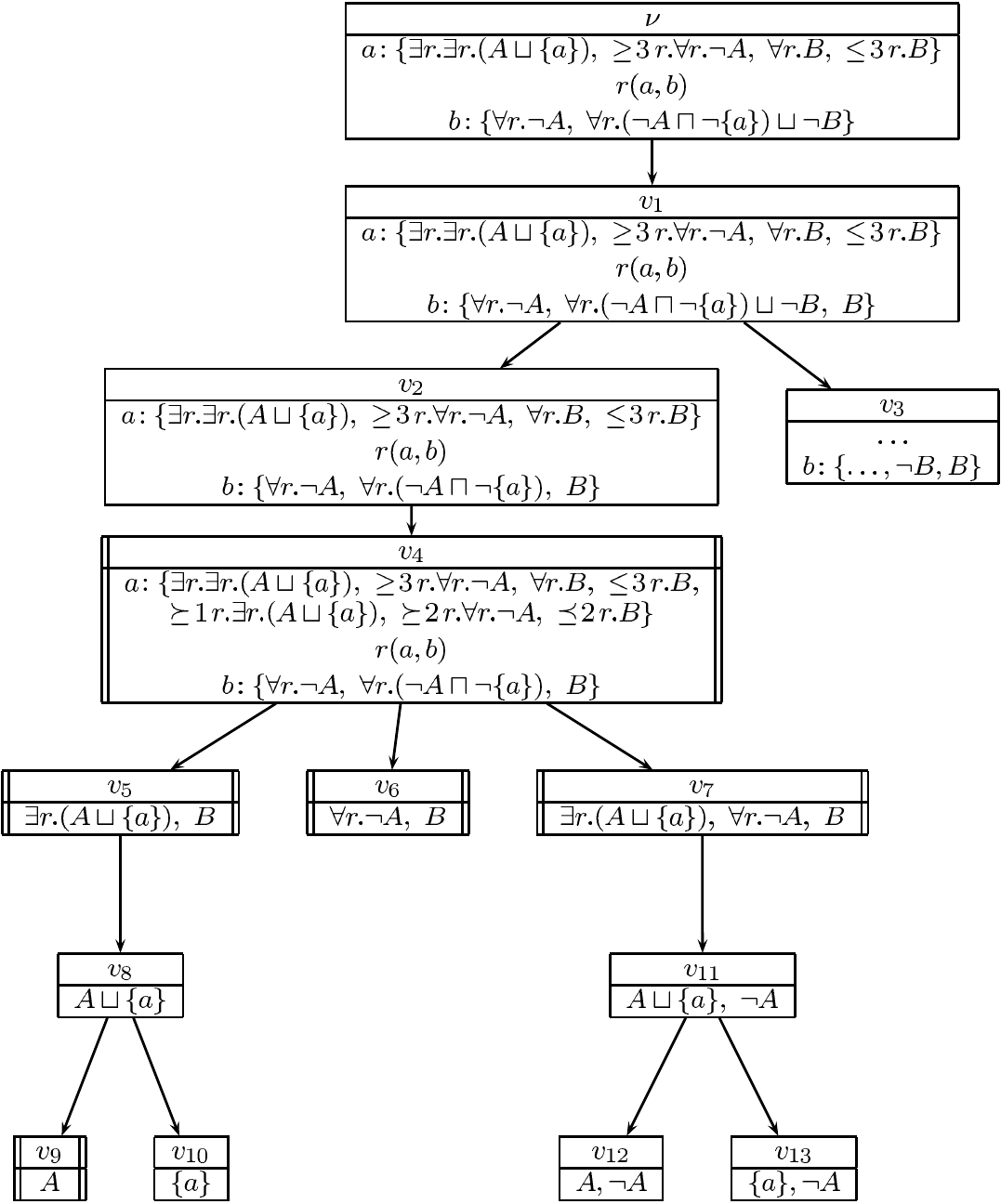}
\end{center}
\caption{An illustration for Example~\ref{example2} -- Part~I.\label{fig-2}}
\end{figure} 


\begin{figure}[t!]
\begin{center}
\includegraphics[scale=1.0]{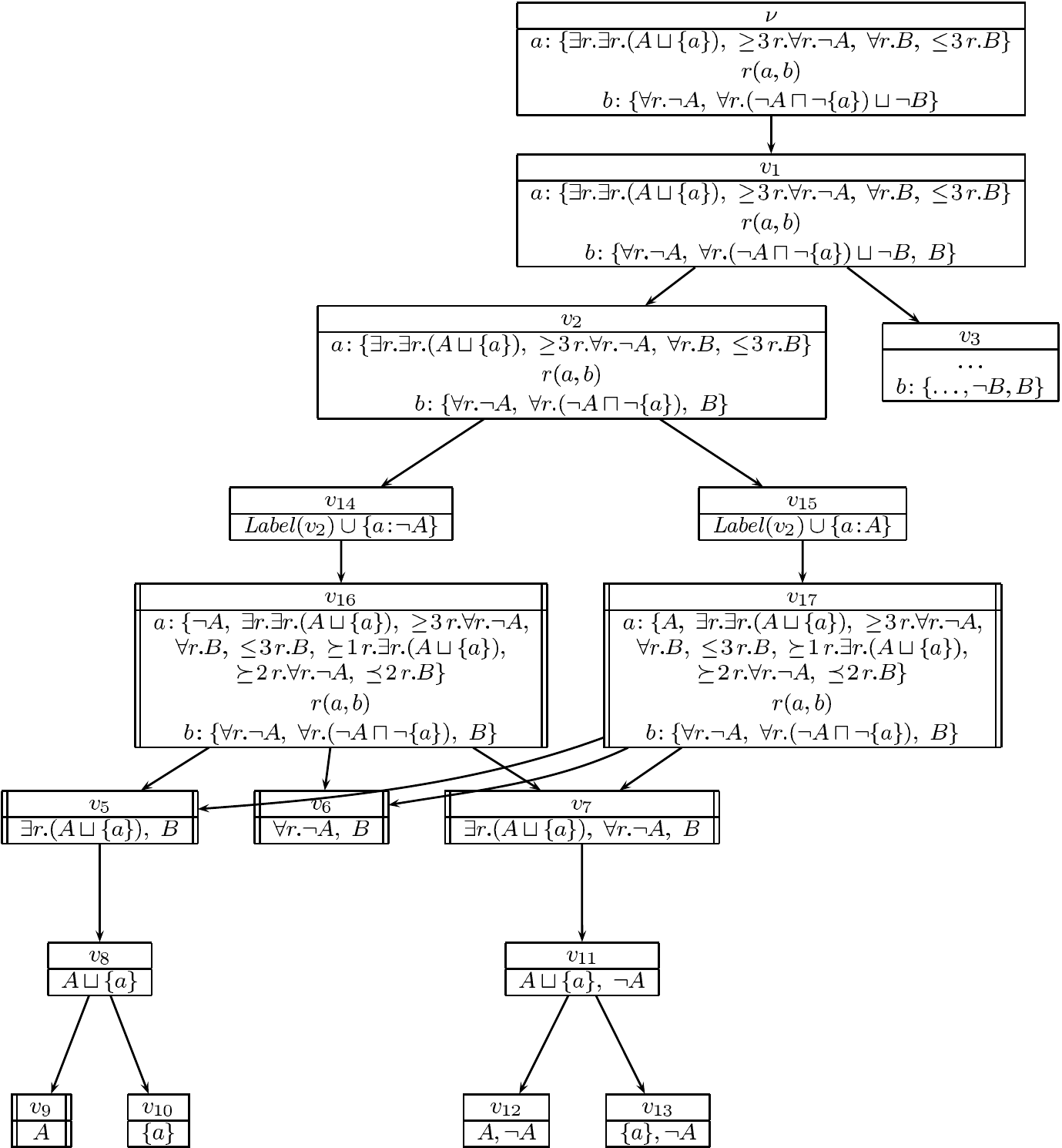}
\end{center}
\caption{An illustration for Example~\ref{example2} -- Part~II.\label{fig-3}}
\end{figure} 


\begin{example}\label{example2}
Let us modify Example~\ref{example1} by deleting the assertion $a\!:\!A$ from the ABox. That is, we are now constructing a \CSHOQ-tableau for $\tuple{\mR,\mT,\mA}$, where
\[
\begin{array}{rcl}
\mA & = & \{a\!:\!\E r.\E r.(A \mor \{a\}),\ a\!:\,\geq\!3\,r.\V r.\lnot A,\ a\!:\!\forall r.B,\ a\!:\,\leq\!3\,r.B,\\[0.5ex] 
& & \;\,r(a,b),\ b\!:\!\V r.\lnot A,\ b\!:\!(\V r.(\lnot A \mand \lnot\{a\}) \mor \lnot B)\},
\end{array}
\]
$\mR = \emptyset$ and $\mT = \emptyset$. 
The first stage of the construction is similar to the one of Example~\ref{example1}, up to the step of updating the status of $v_{12}$ to $\Unsat$. This stage is illustrated in Figure~\ref{fig-2}, which is similar to Figure~\ref{fig-1} except that the labels of the nodes $\nu$ and $v_1$ -- $v_4$ do not contain $a\!:\!A$. The continuation is described below and illustrated by Figure~\ref{fig-3}. 

Since $\Label(v_{13}) = \{\{a\}, \lnot A\}$, applying the rule for dealing with nominals to $v_{13}$, we delete the edge $\tuple{v_2,v_4}$ (from $E$) and re-expand $v_2$ by connecting it to new complex non-states $v_{14}$ and $v_{15}$ with $\Label(v_{14}) = \Label(v_2) \cup \{a\!:\!\lnot A\}$ and $\Label(v_{15}) = \Label(v_2) \cup \{a\!:\!A\}$ as shown in Figure~\ref{fig-3}. The status of $v_{13}$ is updated to $\Blocked$. The node $v_4$ is not deleted, but we do not display it in Figure~\ref{fig-3}. 

Applying the forming-state rule to $v_{14}$ we connect it to a new complex state $v_{16}$. The label of $v_{16}$ is computed using $\Label(v_{14})$ in a similar way as in Example~\ref{example1} when computing $\Label(v_4)$. 

Applying the transitional partial-expansion rule to $v_{16}$ we change its status to $\PExpanded$. After that, applying the transitional full-expansion rule to $v_{16}$ we connect it to the existing nodes $v_5$, $v_6$, $v_7$ using the edge label $e = \tuple{\CQF,\{r\},a}$. The set $\ILConstraints(v_{16})$ is the same as $\ILConstraints(v_4)$.  

Applying the forming-state rule to $v_{15}$ we connect it to a new complex state $v_{17}$. The label of $v_{17}$ is computed using $\Label(v_{15})$ in a similar way as in Example~\ref{example1} when computing $\Label(v_4)$. 

The expansion of $v_{17}$ is similar to the expansion of $v_{16}$. The set $\ILConstraints(v_{17})$ is the same as $\ILConstraints(v_{16})$ and $\ILConstraints(v_4)$. Analogously to updating the statuses of the nodes $v_{13}$, $v_{11}$, $v_7$ in Example~\ref{example1} to $\UnsatWrt(\{v_4\})$, the statuses of $v_{13}$, $v_{11}$, $v_7$ are updated to $\UnsatWrt(\{v_{17}\})$. Next, as $\ILConstraints(v_{17}) \cup \{x_{v_7,e} = 0\}$ is infeasible, the status of $v_{17}$ is first updated to $\UnsatWrt(\{v_{17}\})$ and then to $\Unsat$. After that, the status of $v_{15}$ is also updated to $\Unsat$. 
As no more changes that may affect the status of $\nu$ can be made and $\Status(\nu) \neq \Unsat$, we conclude that the knowledge base $\tuple{\mR,\mT,\mA}$ is satisfiable.
\myEnd
\end{example}

\subsection{Tableau Rules}

In this subsection we formally specify the tableau rules of our calculus \CSHOQ. We also give explanations for them. They are informal and should be understood in the context of the described rule. 

We will use the auxiliary procedure $\SetClosedWrt(v,u)$ defined as follows: ``if $\Status(v)$ is of the form $\UnsatWrt(U)$ then $\Status(v) := \UnsatWrt(U \cup \{u\})$, else $\Status(v) := \UnsatWrt(\{u\})$''. This procedure updates the status of $v$ to reflect that $v$ is closed w.r.t.~$u$. 

\sssection{The Rules for Updating Statuses of Nodes:} 

\begin{description}
\item[\UPSa] The first rule is as follows:
  \begin{enumerate}
  \item\label{item: UFRMS} if one of the following conditions holds:
	\begin{enumerate}
	\item there exists $\alpha\!:\!\bot \in \Label(v)$ or $\{\varphi,\ovl{\varphi}\} \subseteq \FullLabel(v)$,
	\item there exists $a \not\doteq a \in \Label(v)$,
	\item $\Type(v) = \NonState$, $a\!:\!(\leq\!n\,s.C) \in \Label(v)$ and there are $b_0,\ldots,b_n \in \IN$ such that, for all $0 \leq i,j \leq n$ with $i \neq j$, $\{s(a,b_i), b_i\!:\!C\} \subseteq \FullLabel(v)$ and $\{b_i \not\doteq b_j$, $b_j \not\doteq b_i\} \cap \Label(v) \neq \emptyset$, 
	\item $\Status(v) = \UnsatWrt(U)$ and $v \in U$, 
	\end{enumerate}
   then $\Status(v) := \Unsat$ 
  \item else if $\Type(v) = \State$, $\Status(v) = \FExpanded$ and $v$ has no successors then $\Status(v) := \Sat$.
  \end{enumerate}

  \begin{explanation}
  Informally, $\Unsat$ means ``unsatisfiable w.r.t.\ $\mR$ and $\mT$'', $\Sat$ means ``satisfiable w.r.t.\ $\mR$ and $\mT$'', and $\UnsatWrt(U)$ means ``unsatisfiable w.r.t.\ $\mR$, $\mT$ and any node from $U$''. The above rule is thus intuitive. For a formal characterization of the statuses $\Unsat$ and $\UnsatWrt(U)$, we refer the reader to Lemma~\ref{lemma: SHQWD} (on page~\pageref{lemma: SHQWD}).
\koniec
  \end{explanation}

\item[\UPSb] 
  If $\SType(v) = \Simple$, $\Status(v) \notin \{\Unsat,\Sat\}$ and $\Label(v)$ contains a concept $\{a\}$ then
     \begin{enumerate}
     \item[] if there exist $u \in V$ and $C \in \Label(v)$ such that 
	\begin{itemize} 
	\item $\Type(u) = \State \land \SType(u) = \Complex$,
	\item $v$ may affect the status of the root $\nu$ via a path through $u$,
	\item the assertion obtained from $a\!:\!\ovl{C}$ by replacing every individual $b$ by $\Repl(u)(b)$ when $\Repl(u)(b)$ is defined belongs to $\FullLabel(u)$ 
	\end{itemize}
	then $\SetClosedWrt(v,u)$.
     \end{enumerate}

  \begin{explanation}
  This rule deals with nominals. 
  If $u$ is a complex state and $v$ is a simple node such that $\Label(v)$ contains a nominal $\{a\}$ and $v$ may affect the status of the root $\nu$ via a path through $u$ then, when considering $u$ for constructing a model for the considered knowledge base, the simple node $v$ should be merged with the named individual $a$ in the complex node $u$. If such a merging causes inconsistency then $v$ is closed w.r.t.\ $u$ and we update $\Status(v)$ accordingly. 
\koniec
  \end{explanation}

\item[\UPSc] This rule states that, if $v$ is a predecessor of a node $w$ then, whenever the status of $w$ changes to $\Unsat$, $\UnsatWrt(\ldots)$ or $\Sat$, the status of $v$ should be updated (as soon as possible by using a priority queue of tasks). The update is done by one of the following subrules: 
  \begin{enumerate}
  \item If $\Type(v) = \NonState$ and $\Status(v) \notin \{\Unexpanded,\Unsat,\Sat\}$ then
     \begin{enumerate}
     \item\label{item: IUQKD} if some successor of $v$ received status $\Sat$ then $\Status(v) := \Sat$
     \item\label{item: YUSAO} else if all successors of $v$ have status $\Unsat$ then $\Status(v) := \Unsat$
     \item\label{item: YUSLO} else if every successor of $v$ has status $\Unsat$ or $\UnsatWrt(\ldots)$ then:
	\begin{enumerate}
	\item let $w_1,\ldots,w_k$ be all the successors of $v$ such that, for $1 \leq i \leq k$, $\Status(w_i)$ is of the form $\UnsatWrt(U_i)$, and let $U = \bigcap_{1 \leq i \leq k} U_i$
	\item for each $u \in U$ do $\SetClosedWrt(v,u)$.
	\end{enumerate}
     \end{enumerate}

  \item If $\Type(v) = \State$, $\Status(v) \notin \{\Unexpanded,\Unsat,\Sat\}$ and a successor $w$ of $v$ received status $\Unsat$ then 
	\begin{enumerate}
	\item\label{item: YTDSA} if there exists $e \in \ELabels(v,w)$ such that $\piT(e) = \TUS$\\ then $\Status(v) := \Unsat$
	\item else\label{item: HGWSS}
	  \begin{itemize}
	  \item for each $e \in \ELabels(v,w)$ such that $\piT(e) = \CQF$ do\\ 
		\mbox{\hspace{1em}}add the constraint $x_{w,e} = 0$ to $\ILConstraints(v)$
	  \item if $\ILConstraints(v)$ is infeasible then $\Status(v) := \Unsat$.
	  \end{itemize}
	\end{enumerate}

  \item If $\Type(v) = \State$, $\Status(v) \notin \{\Unexpanded,\Unsat,\Sat\}$, a successor $w$ of $v$ received status $\UnsatWrt(U)$, and $v$ may affect the status of the root $\nu$ via a path through $u \in U$ then
	\begin{enumerate}
	\item\label{item: IDSMG} if there exists $e \in \ELabels(v,w)$ such that $\piT(e) = \TUS$\\ then $\SetClosedWrt(v,u)$
	\item else\label{item:IRDMS}
	  \begin{itemize}
	  \item let $\tuple{w_1,e_1},\ldots,\tuple{w_k,e_k}$ be all the pairs such that, for $1 \leq i \leq k$, $w_i$ is a successor of $v$, $\Status(w_i)$ is of the form $\UnsatWrt(U_i)$ with $u \in U_i$, $e_i \in \ELabels(v,w_i)$, and $\piT(e_i) = \CQF$
 	  \item if $\ILConstraints(v) \cup \{x_{w_i,e_i} = 0 \mid 1 \leq i \leq k\}$ is infeasible\\ then $\SetClosedWrt(v,u)$.
	  \end{itemize}
	\end{enumerate}

  \item\label{item: HFDDE} If
	\begin{itemize}
	\item $\Type(v) = \State \land \Status(v) = \FExpanded$, 
	\item every successor $w$ of $v$ with some $e \in \ELabels(v,w)$ having $\piT(e) = \TUS$ has status $\Sat$, and 
	\item $\ILConstraints(v) \cup \{x_{w,e} = 0 \mid$ $\tuple{v,w} \in E$, $\Status(w) \neq \Sat$, $e \in \ELabels(v,w)$ and $\piT(e) = \CQF\}$ is feasible 
	\end{itemize}
     then $\Status(v) := \Sat$.
  \end{enumerate}

  \begin{explanation}
  For simplicity of understanding, one can first consider the case without nominals and statuses $\UnsatWrt(\ldots)$. 
  A non-state is like an ``or''-node, whose status is the disjunction of the statuses of its successors, treating $\Sat$ as $\True$ and $\Unsat$ as $\False$. 
  A state is more sophisticated than an ``and''-node. The status of a state $v$ is different from $\Unsat$ iff the following conditions hold:
    \begin{itemize}
    \item for all successors $w$ of $v$, if there exists $e \in \ELabels(v,w)$ with $\piT(e) = \TUS$ then $\Status(w) \neq \Unsat$,
    \item $\ILConstraints(v) \cup \{x_{w,e} = 0 \mid$ $\tuple{v,w} \in E$, $\Status(w)=\Unsat$, $e \in \ELabels(v,w)$ and $\piT(e) = \CQF\}$ is feasible.
    \end{itemize}

The subrule~2 updates the status of a state $v$ according to the above observation. The status $\Sat$ is a special case of being different from $\Unsat$, which can be detected earlier, and the subrule~4 is defined appropriately, reflecting that observation. 

\smallskip

Recall that $\UnsatWrt(U)$ means $\Unsat$ w.r.t.\ any node $u \in U$, and such statuses are used for dealing with nominals. In the case with nominals and statuses $\UnsatWrt(\ldots)$, for simplicity of understanding, one can imagine the traditional approach that does not use global caching but uses backtracking for dealing with ``or''-branchings. With that approach, each node has at most one ancestor node $u$ that is a complex state, and a status $\UnsatWrt(\ldots)$ behaves similarly to the status $\Unsat$. Our approach uses global caching and deals with nominals, and we use statuses $\UnsatWrt(\ldots)$ in appropriate way to simulate the status $\Unsat$. 
\koniec
  \end{explanation}

\end{description}

\sssection{The Unary Static Expansion Rules:}
  \begin{description}
  \item[\USa] If $\Type(v) = \NonState$ and $\Status(v) = \Unexpanded$ then 
    \begin{enumerate}
     \item let $X = \RFormulas(v) \cup \{(\alpha\!:\!C) \in \Label(v) \mid C$ is of the form $D \mand D'$ or $\geq\!0\,s.D$ or $\leq\!0\,s.D\} \cup \{a\!:\!\lnot\{b\} \in \Label(v)\}$ 
     \item let $label = \Label(v) \cup \{(\alpha\!:\!D), (\alpha\!:\!D') \mid \alpha\!:\!(D \mand D') \in \Label(v)\}$ \\
	\mbox{\hspace{4.5em}} $\cup\ \{\alpha\!:\!\V s.\ovl{D} \mid\ (\alpha\!:\,\leq\!0\,s.D) \in \Label(v)\}$ \\
	\mbox{\hspace{4.5em}} $\cup\ \{\alpha\!:\!\V r.D \mid \alpha\!:\!\V s.D \in \Label(v) \textrm{ and } r \sqsubseteq_\mR s\}$ \\
	\mbox{\hspace{4.5em}} $\cup\ \{s(a,b) \mid r(a,b) \in \Label(v) \textrm{ and } r \sqsubseteq_\mR s\}$ \\
	\mbox{\hspace{4.5em}} $\cup\ \{b\!:\!D \mid \{a\!:\!\V r.D, r(a,b)\} \subseteq \Label(v)\}$ \\
	\mbox{\hspace{4.5em}} $\cup\ \{b\!:\!\V r.D \mid \{a\!:\!\V r.D, r(a,b)\} \subseteq \Label(v) \textrm{ and } \Trans{r}\}$ \\
	\mbox{\hspace{4.5em}} $\cup\ \{a \not\doteq b \mid a\!:\!\lnot\{b\} \in \Label(v)\}$ \\
	\mbox{\hspace{4.5em}} $-\ X$
     \item if $label - \Label(v) \neq \emptyset$ then
	\begin{enumerate}
	\item $\ConToSucc(v,\NonState,\SType(v),label,X,\Repl(v),\Null)$
	\item $\Status(v) := \FExpanded$.
	\end{enumerate}
     \end{enumerate}

  \begin{explanation}
  This rule makes a necessary expansion for a non-state $v$ by connecting it to only one successor $w$ which is a copy of $w$ with intuitive changes like:
  \begin{itemize}
  \item if $\alpha\!:\!(D \mand D') \in \Label(v)$ then $\alpha\!:\!(D \mand D')$ in $\Label(w)$ is replaced by $\alpha\!:\!D$ and $\alpha\!:\!D'$ and we remember this by adding it to $\RFormulas(w)$;
  \item if $\{a\!:\!\V r.D, r(a,b)\} \subseteq \Label(v)$ then we add $b\!:\!D$ to $\Label(w)$; and so on.
  \end{itemize}
  Note that $\Label(w) - (\Label(v) \cup \RFormulas(v)) \neq \emptyset$. That is, $w$ contains some ``new'' formulas. 
\koniec 
  \end{explanation}

  \item[\USb] If $\Status(v) = \Unexpanded$ and $\Label(v)$ contains $a\!:\!\{b\}$ then
     \begin{enumerate}
     \item let $X$ be the set obtained from $\Label(v) - \{a\!:\!\{b\}\}$ by replacing every occurrence of $b$ not in $\doteq$~expressions by~$a$
     \item let $Y$ be the set obtained from $\RFormulas(v)$ by replacing every occurrence of $b$ by~$a$
     \item $w := \ConToSucc(v,\NonState,\Complex,X \cup \{a \doteq b, b \doteq a\},Y \cup \{a\!:\!\{a\}\},\Repl(v),\Null)$
     \item $\Repl(w)(b) := a$
     \item for each $c \in \IN$, if $\Repl(v)(c) = b$ then $\Repl(w)(c) := a$.
     \item $\Status(v) := \FExpanded$.
     \end{enumerate}
  \begin{explanation}
  If $v$ is an unexpanded complex node with $\Label(v)$ containing $a\!:\!\{b\}$ then $a$ and $b$ should denote the same individual and we expand $v$ by connecting it to only one successor $w$ which is a copy of $v$ with $b$ replaced by $a$ in an appropriate way. 
\koniec 
  \end{explanation}

  \item[\USc] If $\Type(v) = \NonState$ and $\Status(v) = \Unexpanded$ then 
     \begin{enumerate}
     \item if $\SType(v) = \Simple$ then let $X$ be the set of all concepts of the form $\leq\!1\,r.\{a\}$ that are relevant w.r.t.\ $\mT$ and $\Label(v)$, else let $X$ be the set of all formulas of the form $a\!:\,\leq\!1\,r.\{b\}$ that are relevant w.r.t.\ $\mT$ and $\Label(v)$
     \item if $X - \Label(v) \neq \emptyset$ then\\
	$\ConToSucc(v,\NonState,\SType(v),\Label(v) \cup X,\RFormulas(v),\Repl(v),\Null)$.
     \end{enumerate}
   
  \begin{explanation}\label{exp: HGALC}
This rule deals with interaction between number restrictions and nominals. We want to guarantee that every nominal represents a singleton set and a named individual cannot be cloned.
\koniec 
  \end{explanation}
  \end{description}

\sssection{The Rule for Dealing with Nominals:}
  \begin{description}
  \item[\DN] If $\SType(v) = \Simple$, $\Status(v) \notin \{\Unsat,\Sat\}$ and $\Label(v)$ contains $\{a\}$ then
     \begin{enumerate}
     \item for each complex state $u$ such that 
	$v$ is not closed w.r.t. $u$ (i.e., $\Status(v)$ is not of the form $\UnsatWrt(U)$ with $u \in U$) 
	and $v$ may affect the status of the root $\nu$ via a path through $u$, 
	do
	\begin{enumerate}
	\item\label{step: EROSA} let $X = \{\varphi_1,\ldots,\varphi_k\}$ be the set obtained from $\{a\!:\!C \mid C \in \Label(v)\}$ by replacing every individual $b$ by $\Repl(u)(b)$ when $\Repl(u)(b)$ is defined
	\item if $X \nsubseteq \FullLabel(u)$ then:\\ for each predecessor $u_0$ of $u$ do
	   \begin{enumerate}
	   \item delete the edge $\tuple{u_0,u}$ from $E$ and its labels from $\ELabels$
	   \item $\ConToSucc(u_0$, $\NonState$, $\Complex$, $\Label(u_0) \cup X$, $\RFormulas(u_0)$, $\Repl(u_0)$, $\Null)$
	   \item for each $1 \leq i \leq k$ such that $\varphi_i \notin \FullLabel(u)$ do:\\
		$\ConToSucc(u_0$, $\NonState$, $\Complex$, $\Label(u_0) \cup \{\ovl{\varphi_i}\}$, $\RFormulas(u_0)$, $\Repl(u_0)$, $\Null)$
	   \end{enumerate}
	\end{enumerate}
     \item $\Status(v) := \Blocked$.
     \end{enumerate}

  \begin{explanation}
  If $u$ is a complex state and $v$ is a simple node such that $\Label(v)$ contains a nominal $\{a\}$ and $v$ may affect the status of the root $\nu$ via a path through $u$ then, when considering $u$ for constructing a model for the considered knowledge base, the simple node $v$ should be merged with the named individual $a$ in the complex node $u$. The case when such a merging causes inconsistency is dealt with by the rule~\UPSb (with a higher priority). Consider the other case. The set $X = \{\varphi_1,\ldots,\varphi_k\}$ of assertions computed at the step~\ref{step: EROSA} of the rule would be added to $\Label(u)$. However, we do not want to modify labels of nodes. In the case when $X \nsubseteq \FullLabel(u)$, the label of $u$ is ``incomplete'' and we re-expand every predecessor $u_0$ of $u$ by deleting the edge $\tuple{u_0,u}$ and connecting $u_0$ to $k+1$ successors, where the label of the successor number 0 extends $\Label(u_0)$ with $X = \{\varphi_1,\ldots,\varphi_k\}$ and the label of the successor number $i$ ($1 \leq i \leq k$) extends $\Label(u_0)$ with $\ovl{\varphi_i}$ (the negation of $\varphi_i$ in NNF). This is like an on-demand cut.
\koniec
  \end{explanation}
  \end{description}

\sssection{The Non-unary Static Expansion Rule:}
  \begin{description}
  \item[\NUS] If $\Type(v) = \NonState$ and $\Status(v) = \Unexpanded$ then
     \begin{enumerate}
     \item\label{item: JHREA} if $\alpha\!:\!(C \mor D) \in \Label(v)$ and $\{\alpha\!:\!C, \alpha\!:\!D\} \cap \FullLabel(v) = \emptyset$ then
	\begin{enumerate}
	\item let $X = \Label(v) - \{\alpha\!:\!(C \mor D)\}$ 
	\item let $Y = \RFormulas(v)\cup\{\alpha\!:\!(C \mor D)\}$
	\item $\ConToSucc(v,\NonState,\SType(v),X\cup\{\alpha\!:\!C\},Y,\Repl(v),\Null)$
	\item $\ConToSucc(v,\NonState,\SType(v),X\cup\{\alpha\!:\!D\},Y,\Repl(v),\Null)$
	\item $\Status(v) := \FExpanded$
	\end{enumerate}

     \begin{explanation}
     This subrule deals with syntactic branching on $\alpha\!:\!(C \mor D) \in \Label(v)$. We expand $v$ by connecting it to two successors $w_1$ and $w_2$, whose labels are the label of $v$ with $\alpha\!:\!(C \mor D)$ replaced by $\alpha\!:\!C$ or $\alpha\!:\!D$, respectively. The formula $\alpha\!:\!(C \mor D)$ is put into both $\RFormulas(w_1)$ and $\RFormulas(w_2)$. The expansion is done only when both $w_1$ and $w_2$ have a larger $\FullLabel$ than $v$. 
\koniec
     \end{explanation}

     \item\label{item: HGW3A} else if $\SType(v) = \Complex$, $s(a,b) \in \Label(v)$ and 
       \begin{itemize}
       \item $\Label(v)$ contains $a\!:\!(\leq\!n\,s.C)$, or 
       \item $\Label(v)$ contains $a\!:\!(\geq\!n\,s.C)$ or $a\!:\!(\E s.C)$, where $s$ is a numeric role, 
       \end{itemize}
     and $\{b\!:\!C, b\!:\!\ovl{C}\} \cap \FullLabel(v) = \emptyset$ then 
	\begin{enumerate}
	\item let $X = \Label(v)\cup\{b\!:\!C\}$ and $X' = \Label(v)\cup\{b\!:\!\ovl{C}\}$
	\item $\ConToSucc(v, \NonState, \Complex, X, \RFormulas(v), \Repl(v), \Null)$
	\item $\ConToSucc(v, \NonState, \Complex, X', \RFormulas(v), \Repl(v), \Null)$
	\item $\Status(v) := \FExpanded$
	\end{enumerate}

     \begin{explanation}
  This subrule deals with the case when there is a lack of information about $b$ for deciding how to satisfy the number restrictions about $a$. We want to have either $b\!:\!C$ or $b\!:\!\ovl{C}$ in $\FullLabel(v)$. So, we expand $v$ by semantic branching: we connect it to two successors, one with label $\Label(v)\cup\{b\!:\!C\}$ and the other with label \mbox{$\Label(v)\cup\{b\!:\!\ovl{C}\}$}. The expansion is done only when both the successors have a larger $\FullLabel$ than~$v$.
\koniec
     \end{explanation}

     \item\label{item: JHEAA} else if $\SType(v) = \Complex$, $\{a\!:\!(\leq\!n\,s.C)$, $s(a,b)$, $s(a,b')$, $b\!:\!C$, $b'\!:\!C\} \subseteq$ $\FullLabel(v)$, $b \neq b'$ and $\{b \not\doteq b', b' \not\doteq b\} \cap \Label(v) = \emptyset$ then\footnote{Fix a linear order between individual names. Then we can also assume that $b$ is less than $b'$ in that order.}
	\begin{enumerate}
	\item let $X_1 = \Label(v)\cup\{b \not\doteq b',b' \not\doteq b\}$ and let $X_2$ be the set obtained from $\Label(v)$ by replacing every occurrence of $b'$ not in $\doteq$~expressions by $b$
	\item let $Y$ be the set obtained from $\RFormulas(v)$ by replacing every occurrence of $b'$ by $b$
	\item $\ConToSucc(v, \NonState, \Complex, X_1, \RFormulas(v), \Repl(v), \Null)$
	\item\label{item: HJSDA} $w := \ConToSucc(v,\NonState,\Complex,X_2 \cup \{b \doteq b', b' \doteq b\},Y,\Repl(v),\Null)$
	\item $\Repl(w)(b') := b$
	\item for each $c \in \IN$, if $\Repl(v)(c) = b'$ then $\Repl(w)(c) := b$
	\item $\Status(v) := \FExpanded$
	\end{enumerate}

     \begin{explanation}
  This subrule deals with the case when there is a lack of information about whether $b$ and $b'$ denote the same individual for deciding how to satisfy the number restrictions about $a$. We expand $v$ by semantic branching: either $b$ and $b'$ denote the same individual or they do not. Technically, we connect $v$ to two successors with appropriate contents. 
\koniec
     \end{explanation}

     \item\label{item: OSJRS} else if $\SType(v) = \Complex$, $\{a\!:\!(\leq\!m\,r.C)$, $r(a,b)\} \subseteq \Label(v)$, $\Label(v)$ contains \mbox{$a\!:\!(\geq\!n\,s.D)$} or $a\!:\!\E s.D$ with $s \sqsubseteq_\mR r$, and $\{s(a,b), \lnot s(a,b)\} \cap \Label(v) = \emptyset$ then
	\begin{enumerate}
	\item let $X_1 = \Label(v) \cup \{s(a,b)\}$ and $X_2 = \Label(v) \cup \{\lnot s(a,b)\}$  
	\item $\ConToSucc(v,\NonState,\Complex,X_1,\RFormulas(v),\Repl(v),\Null)$
	\item $\ConToSucc(v,\NonState,\Complex,X_2,\RFormulas(v),\Repl(v),\Null)$
	\item $\Status(v) := \FExpanded$.
	\end{enumerate}

     \begin{explanation}
  This subrule deals with the case when there is a lack of information for deciding how to satisfy the number restrictions about $a$. We want to decide whether $b$ is an $s$-successor of $a$ or not. So, we expand $v$ by semantic branching: we connect it to two successors, one with label containing $s(a,b)$ and the other with label containing $\lnot s(a,b)$. The expansion is done only when both the successors have a larger label than~$v$.
\koniec
     \end{explanation}
     \end{enumerate}
  \end{description}

\sssection{The Forming-State Rule:}
\begin{description}
\item[\FS] If $\Type(v) = \NonState$ and $\Status(v) = \Unexpanded$ then 
   \begin{enumerate}
   \item if $\SType(v) = \Simple$ then $\Type(v) := \State$ and $\ILConstraints(v) := \emptyset$
   \item else 
     \begin{enumerate}
     \item set $X := \Label(v)$
     \item for each $a\!:\!(\leq\!n\,s.D) \in \Label(v)$ do
	\begin{enumerate}
	\item let $m = \sharp\{b \mid \{s(a,b), b\!:\!D\} \subseteq \FullLabel(v)\}$
	\item add $a\!:\!(\preceq\!(n-m)\,s.D)$ to $X$
	\end{enumerate}
     \item for each $(a\!:\!C) \in \Label(v)$, where $C$ is $\geq\!n\,s.D$ or $\E s.D$ and $s$ is a numeric role, do
	\begin{enumerate}
	\item if $C = \E s.D$ then let $n = 1$
	\item let $m = \sharp\{b \mid \{s(a,b), b\!:\!D\} \subseteq \FullLabel(v)\}$
	\item if $n > m$ then add $a\!:\!(\succeq\!(n-m)\,s.D)$ to $X$
	\end{enumerate}
     \item $\ConToSucc(v,\State,\Complex,X,\RFormulas(v),\Repl(v),\Null)$
     \item $\Status(v) := \FExpanded$.
     \end{enumerate}
   \end{enumerate}

\begin{explanation}
When the rules \UPS, \US, \DN and \NUS are not applicable to the non-state $v$, we apply this forming-state rule to $v$. If $v$ is a complex node then we connect it to a complex state $w$. When computing contents for $w$ we put into $\Label(w)$ the requirements from $\Label(v)$ after an appropriate modification that takes into account the assertions in $\Label(v)$ that represent the relationship between named individuals. For example, if \mbox{$a\!:\!(\leq\!n\,s.D) \in \Label(v)$} and there are $m$ pairwise different individuals $b_1,\ldots,b_m$ such that $\{s(a,b_i), b_i\!:\!D \mid 1 \leq i \leq m\} \subseteq \FullLabel(v)$ then we add to $\Label(w)$ the requirements $a\!:\!(\preceq\!(n-m)\,s.D)$. Notice the use of $\preceq$ instead of $\leq$. Note that, since the rule \NUS is not applicable to $v$, we must have that $(b_i \not\doteq b_j) \in \Label(v)$ for any pair $i \neq j$, and for any individual $b$ such that $s(a,b) \in \Label(v)$, either $b\!:\!D$ or $b\!:\!\ovl{D}$ must belong to $\FullLabel(v)$. When expanding $w$ we will not have to pay attention to the relationship between the individuals occurring in $\Label(w)$. 

\smallskip

If $v$ is a simple node then we just change $\Type(v)$ to $\State$ and initialize $\ILConstraints(v)$ to $\emptyset$. Number restrictions about $v$ are dealt with later by the transitional full-expansion rule.

\smallskip

The way of forming a state for a complex node $v$ is more sophisticated (than for a simple node) because we may need to re-expand $v$ later due to nominals (as done by the rule \DN for $u_0$).
\koniec
\end{explanation}
\end{description}

\sssection{The Transitional Partial-Expansion Rule:}
  \begin{description}
  \item[\TP] If $\Type(v) = \State$ and $\Status(v) = \Unexpanded$ then
     \begin{enumerate}
     \item for each $(\alpha\!:\!\E r.D) \in \Label(v)$, where $r$ is a non-numeric role, do
       \begin{enumerate}
	\item $X := \{D\} \cup \{D' \mid \alpha\!:\!\V r.D' \in \Label(v)\}\ \cup$\\
		\mbox{\hspace{2.55em}}$\{\V s.D' \mid \alpha\!:\!\V s.D' \in \Label(v), r \sqsubseteq_\mR s$ and $\Trans{s}\} \cup \mT$
	\item $eLabel := \langle \TUS, \{s \mid r \sqsubseteq_\mR s\}, \alpha\rangle$
	\item $\ConToSucc(v,\NonState,\Simple,X,\emptyset,\Null,eLabel)$
       \end{enumerate}

     \item $\Status(v) := \PExpanded$.
     \end{enumerate}
  \begin{explanation}
  To realize a requirement $\alpha\!:\!\E r.D$ at a state $v$, where $r$ is a non-numeric role, we connect $v$ to a new simple non-state $w$ with appropriate contents as shown in the rule.
\koniec
  \end{explanation}
  \end{description}

\sssection{The Transitional Full-Expansion Rule:}
  \begin{description}
  \item[\TF] If $\Type(v) = \State$ and $\Status(v) = \PExpanded$ then
     \begin{enumerate}
     \item\label{step: UIRJS} if $\SType(v) = \Complex$\\ 
	then let $\Gamma = \Label(v)$\\ 
	else let $\Gamma = \Label(v) \cup \{\preceq\!n\,r.D \mid$ $\leq\!n\,r.D \in \Label(v)\}\ \cup$\\ 
	\mbox{\hspace{5.4em}}$\{\succeq\!n\,r.D \mid$ $\geq\!n\,r.D \in \Label(v)\}\ \cup$\\ 
	\mbox{\hspace{5.4em}}$\{\succeq\!1\,r.D \mid$ $\E r.D \in \Label(v)$ and $r$ is a numeric role$\}$
     \item $\mE := \emptyset$, $\mE' := \emptyset$
     \item\label{item: HJFDU} for each $(\alpha\!:\,\succeq\!n\,r.D) \in \Gamma$ do
       \begin{enumerate}
	\item $X := \{s \mid r \sqsubseteq_\mR s\}$
	\item $Y := \{D\} \cup \{D' \mid \alpha\!:\!\V r.D' \in \Gamma\}\ \cup$\\
		\mbox{\hspace{2.55em}}$\{\V s.D' \mid \alpha\!:\!\V s.D' \in \Gamma, r \sqsubseteq_\mR s$ and $\Trans{s}\} \cup \mT$
	\item $\mE := \mE \cup \{\tuple{X,Y,\alpha}\}$
       \end{enumerate}
     \item\label{item: HGDFW} for each $\alpha\!:\!(\preceq\!n\,r.C) \in \Gamma$ do
	\begin{enumerate}
	\item for each $\tuple{X,Y,\alpha} \in \mE$ do
	   \begin{enumerate}
	   \item if $r \in X$ and $\{C,\ovl{C}\} \cap Y = \emptyset$ then\\
		$\mE' := \mE' \cup \{\tuple{X,Y \cup \{C\},\alpha},\tuple{X,Y\cup\{\ovl{C}\},\alpha}\}$\\
	(i.e., $\tuple{X,Y,\alpha}$ is replaced by $\tuple{X,Y \cup \{C\},\alpha}$ and $\tuple{X,Y\cup\{\ovl{C}\},\alpha}$)
	   \item else $\mE' := \mE' \cup \{\tuple{X,Y,\alpha}\}$
	   \end{enumerate}
	\item $\mE := \mE'$, $\mE' := \emptyset$
	\end{enumerate}

     \item\label{item: HMFHS} repeat
	\begin{enumerate}
	\item[] for each $\alpha\!:\!(\preceq\!n\,r.C) \in \Gamma$, $\tuple{X,Y,\alpha} \in \mE$ and $\tuple{X',Y',\alpha} \in \mE$ such that $r \in X$, $C \in Y$, $r \in X'$, $C \in Y'$, $\tuple{X \cup X', Y \cup Y', \alpha} \notin \mE$ and $Y \cup Y'$ does not contain any pair of the form $\varphi$, $\ovl{\varphi}$ do add $\tuple{X \cup X', Y \cup Y', \alpha}$ to $\mE$ (i.e., the merger of $\tuple{X,Y,\alpha}$ and $\tuple{X',Y',\alpha}$ is added to $\mE$)
	\end{enumerate}
	until no tuples were added to $\mE$ during the last iteration

     \item\label{item: JHFSA} for each $\tuple{X,Y,\alpha} \in \mE$ do
	\begin{enumerate}
	\item[] $\ConToSucc(v,\NonState,\Simple,Y,\emptyset,\Null,\tuple{\CQF,X,\alpha})$
	\end{enumerate}

     \item\label{item: JHDSA0} $\ILConstraints(v) := \{ x_{w,e} \geq 0 \mid$ $\tuple{v,w} \in E$, $e \in \ELabels(v,w)$ and\\ 
	\mbox{\hspace{14.2em}}$\piT(e) = \CQF\}$
     \item\label{item: JEROS} for each $(\alpha\!:\!C) \in \Gamma$ do
       \begin{enumerate}
       \item if $C$ is of the form $\succeq\!n\,r.D$ 
	then add to $\ILConstraints(v)$ the constraint $\sum \{ x_{w,e} \mid$ $\tuple{v,w} \in E$, $e \in \ELabels(v,w)$, $\piT(e) = \CQF$, $r \in \piR(e)$, $\piI(e) = \alpha$, $D \in \Label(w)\} \geq n$

       \item if $C$ is of the form $\preceq\!n\,r.D$ 
	then add to $\ILConstraints(v)$ the constraint $\sum \{ x_{w,e} \mid$ $\tuple{v,w} \in E$, $e \in \ELabels(v,w)$, $\piT(e) = \CQF$, $r \in \piR(e)$, $\piI(e) = \alpha$, $D \in \Label(w)\} \leq n$
	\end{enumerate}

     \item $\Status(v) := \FExpanded$.
     \end{enumerate}

  \begin{explanation}
  Let $\Gamma$ be the set computed at the step~\ref{step: UIRJS}. It consists of the requirements to be realized for~$v$. 
  To satisfy a requirement $\varphi = (\alpha\!:\,\succeq\!n\,r.C) \in \Gamma$ for $v$, one can first connect $v$ to a successor $w_\varphi$ using an edge label $e$ specified by the tuple $\tuple{X,Y,\alpha}$ computed at the step~\ref{item: HJFDU}, where $\piT(e) = \CQF$, $\piR(e) = X$, $\piI(e) = \alpha$ and $Y$ represents $\Label(w_\varphi)$, and then clone $w_\varphi$ to create $n$ successors for $v$ (or only record the intention somehow). The label of $w_\varphi$ contains only formulas necessary for realizing the requirement $\alpha\!:\!\E r.C$ and related ones of the form $\alpha\!:\!\V r'.C'$ in $\Gamma$. 
  To satisfy requirements of the form $\alpha\!:\preceq\!n'\,r'.C'$ for $v$, where $r \sqsubseteq_\mR r'$, we tend to use only copies of $w_\varphi$ extended with either $C'$ or $\ovl{C'}$ (for easy counting) as well as the mergers of such extended nodes. 
  So, we first start with the set $\mE$ constructed at the step~\ref{item: HJFDU}, which consists of tuples with information about successors to be created for~$v$. 
  We then modify $\mE$ by taking necessary extensions of the nodes (see the step~\ref{item: HGDFW}). After that, we continue modifying $\mE$ by adding to it also appropriate mergers of nodes and edges (see the step~\ref{item: HMFHS}). Successors for $v$ are created at the step~\ref{item: JHFSA}. The number of copies of a node $w$ that are intended to be used as successors of $v$ using an edge label $e$ is represented by a variable $x_{w,e}$ (we will not actually create such copies). The case when $w$ would be a named individual and cannot be cloned is dealt with by the rule \USc (see Explanation~\ref{exp: HGALC}). The set $\ILConstraints(v)$ consisting of appropriate constraints about such variables are set at the steps \ref{item: JHDSA0}-\ref{item: JEROS}. 
\koniec
  \end{explanation}
  \end{description}
} 


\subsection{Properties of \CSHOQ-Tableaux}

Define the size of a knowledge base $\KB = \tuple{\mR,\mT,\mA}$ to be the number of bits used for the usual sequential representation of $\KB$. It is greater than the number of symbols occurring in $\KB$. If $\Size$ is the size of $\KB$ and $\leq\!n\,r.C$ or $\geq\!n\,r.C$ is a number restriction occurring in $\KB$ then:
\begin{itemize}
\item when numbers are coded in unary we have that $n \leq \Size$,  
\item when numbers are coded in binary we have that $n \leq 2^\Size$.  
\end{itemize}

\newcommand{\LemmaComplexity}{Let $\tuple{\mR,\mT,\mA}$ be a knowledge base in NNF of the logic \SHOQ and let $\Size$ be the size of $\tuple{\mR,\mT,\mA}$. Then a \CSHOQ-tableau for $\tuple{\mR,\mT,\mA}$ can be constructed in (at most) exponential time in~$\Size$  in the following cases:
\begin{enumerate}
\item numbers are coded in unary, 
\item numbers are coded in binary and $n \leq \Size$ for every concept $\leq\!n\,r.C$ occurring in $\tuple{\mR,\mT,\mA}$,  
\item numbers are coded in binary and $n \leq \Size$ for every concept $\geq\!n\,r.C$ occurring in $\tuple{\mR,\mT,\mA}$.
\myEnd
\end{enumerate}
} 
\begin{lemma}[Complexity]\label{lemma: Complexity}
\LemmaComplexity
\end{lemma}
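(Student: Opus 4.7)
The plan is to bound the size of the tableau and the cost of each rule application, and then combine these bounds using the integer feasibility results Lemma~\ref{lemma: IFDL} and Lemma~\ref{lemma: IFDL2}. First I would fix a ``closure'' $\closure(\KB)$ consisting of all concepts (and eABox assertions $\alpha\!:\!C$, $r(a,b)$, $\lnot r(a,b)$, $a \doteq b$, $a \not\doteq b$, $a\!:\!(\preceq\!n\,s.C)$, $a\!:\!(\succeq\!n\,s.C)$) that can possibly appear in the label of any node during tableau construction. Using induction on the rule applications, one checks that every formula in some $\Label(v)$ is built from subformulas of $\KB$, from the extra $\{a\}$-concepts introduced by rule \USc (bounded by the number of relevant $\leq\!1\,r.\{a\}$ concepts, which is polynomial), and from the modified $\preceq$/$\succeq$ assertions produced by \FS; the cardinality of $\closure(\KB)$ is therefore polynomial in $\Size$.

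Global caching enforces that two nodes with the same label and compatible $(\Type,\SType)$ are identified, so the total number of nodes in the graph is at most $4 \cdot 2^{|\closure(\KB)|}$, i.e.\ exponential in $\Size$. The number of edges is also exponential since each edge is labeled by a triple from $\EdgeLabels$ of polynomial cardinality. Next I would argue that each rule application, ignoring feasibility checks, runs in time polynomial in the graph size and therefore exponential in $\Size$. The only subtle point here is rule \DN, which can delete an existing edge and re-expand a predecessor $u_0$; however, since every re-expansion strictly enlarges $\FullLabel(u_0)$ or creates a genuinely new labeled successor (both bounded by $|\closure(\KB)|$), the total work spent by \DN across the whole run remains exponential. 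Similarly, the number of status updates \UPS is bounded because each status can only change monotonically along the chain $\Unexpanded \to \PExpanded \to \FExpanded \to \{\Sat, \Unsat, \UnsatWrt(\cdot), \Blocked\}$, and $\UnsatWrt(U)$ can only grow its set $U$ monotonically.

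The main obstacle is bounding the cost of the integer linear feasibility tests invoked in \UPSc and in the \FExpanded-check of rule \TF. For each state $v$, $\ILConstraints(v)$ is an $\IFDL{l,m,n}$-problem where $l$ is the number of $\succeq$/$\preceq$ assertions in $\Label(v)$ (polynomial in $\Size$, in particular $l \leq \Size$), $m$ is the number of outgoing $\CQF$-edges of $v$ (exponential in $\Size$ since successors are cached), and the right-hand sides $b_i$ come from the original number restrictions of $\KB$. In case~1 (unary coding) every $b_i \leq \Size$, so Lemma~\ref{lemma: IFDL} applies and each feasibility test runs in exponential time. In case~2 (resp.\ case~3) the restriction $n \leq \Size$ on $\leq$- (resp.\ $\geq$-) restrictions guarantees exactly the asymmetric bound required by Lemma~\ref{lemma: IFDL2}, again yielding exponential time per test.

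Putting everything together: the number of feasibility tests and rule applications during the whole construction is polynomial in the size of the graph, hence exponential in $\Size$, and each such step costs at most exponential time in $\Size$. Therefore the full \CSHOQ-tableau is produced in exponential time in $\Size$ in each of the three cases, establishing Lemma~\ref{lemma: Complexity}.
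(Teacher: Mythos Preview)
Your proposal follows essentially the same approach as the paper's proof: fix a polynomial-size closure, use global caching to bound the number of nodes by $2^{O(\mathrm{poly}(\Size))}$, bound the outgoing edges of each state exponentially, and invoke Lemma~\ref{lemma: IFDL} (case~1) or Lemma~\ref{lemma: IFDL2} (cases~2 and~3) to handle the feasibility checks. The overall architecture is correct.

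There is, however, one concrete flaw in your treatment of rule \DN. You write that ``every re-expansion strictly enlarges $\FullLabel(u_0)$''. This is false: when \DN\ fires, it deletes the edge $\tuple{u_0,u}$ and connects $u_0$ to new successors whose labels extend $\Label(u_0)$, but it does \emph{not} touch $\Label(u_0)$ or $\RFormulas(u_0)$ themselves, so $\FullLabel(u_0)$ is unchanged. Your fallback clause ``or creates a genuinely new labeled successor (both bounded by $|\closure(\KB)|$)'' does not obviously bound the number of re-expansions by something polynomial either, since the number of distinct successor labels is exponential, not $|\closure(\KB)|$. The paper's argument here is simpler and structural: after \DN\ re-expands $u_0$, the new successors of $u_0$ are complex \emph{non-states}, so $u_0$ is no longer a predecessor of any complex state and can never be targeted by \DN\ again. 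Hence each node is re-expanded at most once, and the total number of expansions is linear in the number of nodes.

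A minor point: your assertion ``in particular $l \leq \Size$'' for the number of $\succeq$/$\preceq$ constraints deserves a word of justification in the complex-state case, since a priori one could have (number of individuals) $\times$ (number of number restrictions) many such assertions; the bound holds because the constraints decompose by the individual $\piI(e)$ into independent subproblems, each with at most $\Size$ constraints. The paper glosses over this as well, so it is not a divergence from their proof, just something to be aware of.
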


\begin{theorem}[Soundness and Completeness]
\label{theorem: s-c}
Let $\tuple{\mR,\mT,\mA}$ be a knowledge base in NNF of the logic \SHOQ and $G = \tuple{V,E,\nu}$ be an arbitrary \CSHOQ-tableau for $\tuple{\mR,\mT,\mA}$. Then $\tuple{\mR,\mT,\mA}$ is satisfiable iff $\Status(\nu) \neq \Unsat$. 
\myEnd
\end{theorem}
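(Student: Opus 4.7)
The plan is to establish the two directions separately and connect them through a status-correctness lemma, which is essentially the formal content of the informal reading of the statuses ($\Unsat$ means ``unsatisfiable w.r.t.\ $\mR,\mT$'', and $\UnsatWrt(U)$ means ``unsatisfiable w.r.t.\ $\mR,\mT$ and any $u \in U$''), i.e.\ the lemma referenced in the text as Lemma \ref{lemma: SHQWD}. The easy half is the ``only if''~direction, which follows from the invariant: whenever a node receives status $\Unsat$ (respectively $\UnsatWrt(U)$), the formulas of $\FullLabel(v)$ are jointly unsatisfiable w.r.t.\ $\mR$ and $\mT$ (respectively w.r.t.\ $\mR$, $\mT$ and the complex state $u$ for every $u \in U$). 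I would prove this by induction on the time step at which the status is assigned, with one case per subrule of \UPS; the non-trivial cases are \UPSb, and cases \ref{item: HGWSS} and \ref{item:IRDMS} of \UPSc, where one has to check that infeasibility of a set of $\IFDL{l,m,n}$ constraints (or the exhaustion of open $\TUS$-successors) faithfully reflects that no model of the state can be extended through its successors while respecting the multiplicities imposed by number restrictions.

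The harder direction is ``if''. Assuming $\Status(\nu) \neq \Unsat$, I would construct a model $\mI$ of $\tuple{\mR,\mT,\mA}$ by extracting a ``witness subgraph'' $G'$ from $G$ together with integer solutions of the associated $\ILConstraints$: for each non-state $v$ in $G'$, pick one successor whose status is not $\Unsat$ and not $\UnsatWrt(U)$ with $U$ meeting the already-fixed witness states (the priority order on rules and the definition of ``may affect the status of $\nu$'' guarantee such a successor exists whenever \FS-to-\Sat has not fired); for each complex state $u$ in $G'$, take a feasible integer solution of $\ILConstraints(u)\cup\{x_{w,e}=0\mid \Status(w)\in\{\Unsat\}\cup\UnsatWrt(\ldots)\}$ (its existence follows because \UPSc did not close $u$). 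From this data I would build $\Delta^\mI$ by instantiating, for each complex state $u$ and each simple successor $w$ via edge label $e$, either the specified number of fresh copies of $w$ (when $\piT(e)=\CQF$) or a single copy (when $\piT(e)=\TUS$), reusing the named individual $\piI(e)$ when it is not $\Null$, and identifying any simple node whose label contains a nominal $\{a\}$ with $a^\mI$. Roles would be interpreted by the edge-label role sets, then closed under $\Ext(\mR)$; concepts/assertions would then be checked by induction on structure, using the fact that the static rules saturate the labels.

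The main obstacle I expect is step~(iv) of the model extraction: verifying that the identification of simple-node copies with named individuals does not conflict with number restrictions, nor produce illegal duplications of a named individual. This is exactly where rule \DN, the $\UnsatWrt(\cdot)$ machinery in \UPSb, and the ``relevant'' singleton axioms of the form $\leq\!1\,r.\{a\}$ generated by \USc interact. Concretely, one must prove: (1) if a simple node $w$ with $\{a\}\in\Label(w)$ is a successor of a complex state $u$ chosen for the model, then every $C\in\Label(w)$ lifts to an assertion about $a$ that is already present in $\FullLabel(u)$ (otherwise \DN would have re-expanded the predecessor of $u$ and the incomplete version of $u$ would not have been chosen); (2) the ILP constraints at $u$, after augmentation by the $\leq\!1\,r.\{a\}$ constraints produced by \USc, force the chosen multiplicity of any $\piI(e)=a$ successor to be at most $1$, so the named individual is never cloned. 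Together with a standard termination argument showing that \DN can only re-expand each complex-state predecessor finitely often (bounded by the finite pool of assertions $a\!:\!C$ over the finite closure of subformulas), these ingredients will give a well-defined model witnessing satisfiability.

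The final step is to combine Lemma \ref{lemma: Complexity} (termination in exponential time) with the two invariants above to obtain the equivalence stated in the theorem.
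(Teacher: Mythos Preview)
Your proposal is essentially correct and follows the same decomposition as the paper: soundness via the status-correctness lemma (Lemma~\ref{lemma: SHQWD}), proved by induction on the time-stamp at which a status is assigned, and completeness by extracting from a non-$\Unsat$ root a saturation path to a complex state $u$ and then unrolling the tableau into a (possibly infinite) model graph using feasible solutions of the integer constraints at each state. One small caution: when you zero out variables for the feasibility check at a state $v$, you should restrict to successors $w$ with $\Status(w)=\Unsat$ or $\Status(w)=\UnsatWrt(U)$ with $u\in U$ for the \emph{fixed} complex state $u$ on your witness path, not for arbitrary $\UnsatWrt(\ldots)$; otherwise the augmented system need not be feasible even though $v$ is not closed w.r.t.~$u$.
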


\LongVersion{See the Appendix for the proofs of the above lemma and theorem.}
\ShortVersion{The proofs of the above lemma and theorem can be found in~\cite{SHOQ-long}.}

To check satisfiability of $\tuple{\mR,\mT,\mA}$ one can construct a \CSHOQ-tableau for it, then return ``no'' when the root of the tableau has status $\Unsat$, or ``yes'' in the other cases. We call this the {\em \CSHOQ-tableau decision procedure}. The corollary given below immediately follows from Theorem~\ref{theorem: s-c} and Lemma~\ref{lemma: Complexity}.

\begin{corollary}
The \CSHOQ-tableau decision procedure has \EXPTIME complexity when numbers are coded in unary.
\myEnd
\end{corollary}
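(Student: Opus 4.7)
The plan is to derive this corollary directly by combining the two results stated immediately above it, namely the Soundness and Completeness Theorem and the Complexity Lemma. The \CSHOQ-tableau decision procedure, as defined, consists of two phases: (i) construct a \CSHOQ-tableau $G = \tuple{V,E,\nu}$ for the input knowledge base $\tuple{\mR,\mT,\mA}$, and (ii) inspect $\Status(\nu)$ and answer ``no'' if it is $\Unsat$, ``yes'' otherwise. So the correctness of the procedure is a direct consequence of the Soundness and Completeness Theorem, which guarantees that $\tuple{\mR,\mT,\mA}$ is satisfiable iff $\Status(\nu) \neq \Unsat$ for any \CSHOQ-tableau.

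For the complexity bound, first I would note that the knowledge base must be transformed to NNF, which, as remarked in the paper, can be done in polynomial time in the size $\Size$ of the input. After that, by the Complexity Lemma (case 1), when numbers are coded in unary, a \CSHOQ-tableau for the NNF form of $\tuple{\mR,\mT,\mA}$ can be constructed in (at most) exponential time in~$\Size$. The final inspection of $\Status(\nu)$ is an $O(1)$ operation on the constructed graph. Summing these contributions still yields an exponential bound in $\Size$.

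Consequently, the overall running time of the \CSHOQ-tableau decision procedure is (at most) exponential in the size of the input knowledge base when numbers are coded in unary, establishing the claimed \EXPTIME complexity. There is no real obstacle here; the corollary is a two-line composition of the preceding lemma and theorem, provided one observes that the NNF-conversion and the trivial final status lookup contribute at most polynomial overhead on top of the tableau construction.
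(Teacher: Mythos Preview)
Your proposal is correct and matches the paper's own justification, which simply states that the corollary ``immediately follows from Theorem~\ref{theorem: s-c} and Lemma~\ref{lemma: Complexity}.'' Your additional remarks about the polynomial NNF conversion and the constant-time status lookup are harmless elaborations that the paper leaves implicit.
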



\ShortVersion{}


\section{Conclusions}
\label{section: conc}

Recall that \SHIQ, \SHOQ, \SHIO are the three most well-known expressive DLs with \EXPTIME complexity. (Due to the interaction between $\mathcal{I}$, $\mathcal{Q}$ and $\mathcal{O}$, the complexity of the DL \SHOIQ is \NEXPTIME-complete).
\ShortVersion{In this paper, we have presented the first tableau method with an \EXPTIME (optimal) complexity for checking satisfiability of a knowledge base in the DL \SHOQ when numbers are coded in unary. Our detailed tableau decision procedure for \SHOQ is given in~\cite{SHOQ-long}.} 
\LongVersion{In this paper, we have presented the first \EXPTIME tableau decision procedure for checking satisfiability of a knowledge base in the DL \SHOQ when numbers are coded in unary.} 

We applied Nguyen's method~\cite{SHIQ-long} of integer linear feasibility checking for dealing with number restrictions. 
This work differs from the work~\cite{SHIQ-long} in that nominals are allowed instead of inverse roles. Without inverse roles, global caching is used instead of global state caching to allow more cache hits. We used special techniques for dealing with nominals and their interaction with number restrictions.


\medskip

\noindent\textbf{Acknowledgments.} 
This work was supported by Polish National Science Centre (NCN) under Grants No.~2011/01/B/ST6/02759 (for the first author) and~2011/02/A/HS1/00395 (for the second author).


\LongVersion{\bibliography{modal}}
\ShortVersion{\bibliography{modal-short}}
\bibliographystyle{plain}


\newpage
\appendix

\section{Appendix: Proofs}
\label{section: proofs}

\subsection{Complexity}

Let $\Size$ be the size of $\tuple{\mR,\mT,\mA}$. 
Define $\closure(\mR,\mT,\mA)$ to be the smallest set $\Gamma$ of formulas such that:
\begin{enumerate}
\item\label{item JHFDO 1} all concepts (and subconcepts) used in $\tuple{\mR,\mT,\mA}$ belong to $\Gamma$,
\item\label{item JHFDO 1b} if $r$ is a role and $a$ is an individual used in $\tuple{\mR,\mT,\mA}$ then $(\leq\!1\,r.\{a\}) \in \Gamma$, 
\item\label{item JHFDO 2} if $\V s.C \in \Gamma$ and $r \sqsubseteq_\mR s$ then $\V r.C \in \Gamma$,
\item\label{item JHFDO 3} if $\leq\!0\,s.C \in \Gamma$ then $\V s.\ovl{C} \in \Gamma$,
\item\label{item JHFDO 4} if $C \in \Gamma$ and $C$ is not of the form $\preceq\!n\,r.C$ nor $\succeq\!n\,r.C$ then $\ovl{C} \in \Gamma$,  
\item\label{item JHFDO 5} if $\E r.C \in \Gamma$ and $r$ is a numeric role then $\succeq\!1\,r.C \in \Gamma$,  
\item\label{item JHFDO 6} if $\geq\!n\,r.C \in \Gamma$, $0 \leq m \leq \Size$ and $m < n$ then $\succeq\!(n-m)\,r.C \in \Gamma$,  
\item\label{item JHFDO 7} if $\leq\!n\,r.C \in \Gamma$, $0 \leq m \leq \Size$ and $m \leq n$ then $\preceq\!(n-m)\,r.C \in \Gamma$,  
\item\label{item JHFDO 8} all assertions of $\mA$ belong to $\Gamma$, 
\item\label{item JHFDO 9} if $C \in \Gamma$ and $a$ is an individual used in $\tuple{\mR,\mT,\mA}$ then $a\!:\!C \in \Gamma$, 
\item\label{item JHFDO 10} if $a$ and $b$ are individuals used in $\tuple{\mR,\mT,\mA}$ then $a \doteq b$ and $a \not\doteq b$ belong to $\Gamma$, 
\item\label{item JHFDO 11} if $r$ is a role and $a$, $b$ are individuals used in $\tuple{\mR,\mT,\mA}$ then $r(a,b)$ and $\lnot r(a,b)$ belong to~$\Gamma$. 
\end{enumerate}

\begin{lemma}
The number of formulas of $\closure(\mR,\mT,\mA)$ is of rank $O(\Size^3)$, where $\Size$ is the size of $\tuple{\mR,\mT,\mA}$. 
\end{lemma}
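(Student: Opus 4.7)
The plan is to count, clause by clause, the formulas introduced into $\closure(\mR,\mT,\mA)$ and verify that each clause contributes at most $O(\Size^3)$ and that the closure under the propagation clauses does not blow up. I will separate the count into concepts first, then assertions built from those concepts.

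First I would handle the concepts. Clause~\ref{item JHFDO 1} yields only subconcepts of the finitely many concepts occurring in $\tuple{\mR,\mT,\mA}$, of which there are $O(\Size)$. Clause~\ref{item JHFDO 1b} contributes $O(\Size^2)$ formulas, since there are at most $\Size$ roles and $\Size$ individuals. Clause~\ref{item JHFDO 2} replaces the outer role of any $\V s.C$ by a sub-role of $s$, so the number of such formulas it generates is bounded by $(\text{roles}) \times (\text{bodies } C) = O(\Size) \cdot O(\Size) = O(\Size^2)$. Clause~\ref{item JHFDO 3} adds at most one new formula per $\leq\!0\,s.C$ already in $\Gamma$, so $O(\Size)$ more, which then feed clause~\ref{item JHFDO 2} but stay within the $O(\Size^2)$ budget. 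Clause~\ref{item JHFDO 4} at most doubles the concept count. Clause~\ref{item JHFDO 5} adds $O(\Size)$ formulas. Clauses~\ref{item JHFDO 6} and~\ref{item JHFDO 7} are the key ones to watch: for each of the $O(\Size)$ number restrictions $\geq\!n\,r.C$ or $\leq\!n\,r.C$ in $\Gamma$, the value $m$ ranges over $\{0,\ldots,\Size\}$, so each such restriction generates at most $\Size+1$ new $\succeq$ or $\preceq$ formulas, giving $O(\Size^2)$ in total. Summing, the set of concepts in $\Gamma$ has cardinality $O(\Size^2)$.

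Next I would count the assertions. Clause~\ref{item JHFDO 8} gives $O(\Size)$ assertions. Clause~\ref{item JHFDO 9} pairs each concept of $\Gamma$ (there are $O(\Size^2)$) with each individual name (there are $O(\Size)$), yielding $O(\Size^3)$ assertions. Clause~\ref{item JHFDO 10} gives $O(\Size^2)$ equality/inequality assertions between individuals. Clause~\ref{item JHFDO 11} gives $O(\Size)$ roles times $O(\Size^2)$ ordered pairs of individuals, which is $O(\Size^3)$ role assertions. Adding everything up, $|\closure(\mR,\mT,\mA)| = O(\Size^3)$.

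The main obstacle is ensuring that the propagation clauses (\ref{item JHFDO 2}--\ref{item JHFDO 7}) do not iterate to produce super-polynomially many formulas. The key observation is that each of these clauses strictly shrinks or transforms the ``shape'' of formulas in a controlled way: clause~\ref{item JHFDO 2} only varies the outer role, clause~\ref{item JHFDO 3} only fires on $\leq\!0\,s.C$ and outputs a $\V$-formula of strictly smaller size, clause~\ref{item JHFDO 4} is an involution, and clauses~\ref{item JHFDO 6}--\ref{item JHFDO 7} generate $\succeq$/$\preceq$ formulas whose numeric parameter is explicitly bounded by $\Size$. Hence the closure reaches a fixpoint within the stated bounds, establishing the $O(\Size^3)$ rank.
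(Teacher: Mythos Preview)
Your proof is correct and follows essentially the same approach as the paper's: count the concepts contributed by clauses~\ref{item JHFDO 1}--\ref{item JHFDO 7} to get $O(\Size^2)$, then multiply by the $O(\Size)$ individual names for clause~\ref{item JHFDO 9} and add the $O(\Size^3)$ role assertions from clause~\ref{item JHFDO 11}. Your clause-by-clause breakdown is more explicit than the paper's (which lumps clauses~\ref{item JHFDO 1b}--\ref{item JHFDO 7} into a single $O(\Size^2)$ remark), and your final paragraph addressing why the propagation clauses reach a fixpoint within the stated bounds is a welcome addition.
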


\begin{proof}
The set $\Gamma = \closure(\mR,\mT,\mA)$ can be constructed by initializing $\Gamma$ according to the items~\ref{item JHFDO 1} and \ref{item JHFDO 8}, and then repeatedly applying the rules stated in the remaining items of the list. After initialization, the set $\Gamma$ has $O(\Size)$ formulas. The rules in the items~\ref{item JHFDO 1b}-\ref{item JHFDO 7} add $O(\Size^2)$ formulas to~$\Gamma$.  
The rule in the item~\ref{item JHFDO 9} adds $O(\Size^3)$ formulas to $\Gamma$ (as $\Gamma$ contains $O(\Size^2)$ concepts and there are $O(\Size)$ individual names). The rules in the items~\ref{item JHFDO 10} and~\ref{item JHFDO 11} add $O(\Size^3)$ formulas to $\Gamma$. Thus, at the end, $\Gamma$ is of rank $O(\Size^3)$. 
\myEnd
\end{proof}

We recall below Lemma~\ref{lemma: Complexity} before presenting its proof. 

\medskip

\noindent\textbf{Lemma~\ref{lemma: Complexity}.} {\em \LemmaComplexity}
\begin{proof}
Let us construct an arbitrary \CSHOQ-tableau $G = \tuple{V,E,\nu}$ for $\tuple{\mR,\mT,\mA}$. 

Let $\Size'$ be the number of formulas of $\closure(\mR,\mT,\mA)$. We have $\Size' = O(\Size^3)$. 
For each $v \in V$, $\Label(v) \subseteq \closure(\mR,\mT,\mA)$. Since nodes of $G$ are globally cached, it follows that $G$ has no more than $2^{\Size'}$ nodes. 

Each state has no more than $O(2^\Size \cdot 2^{\Size'} \cdot \Size)$ outgoing edges (since each outgoing edge created by the transitional full-expansion rule is characterized by a tuple $\tuple{X,Y,\alpha}$, where $X$ is a set of roles, $Y$ is a set of concepts, and $\alpha$ is $\Null$ or an individual name).\footnote{The bound can be made tighter, e.g., using $O(\Size^2)$ instead of $\Size'$.} Thus, checking feasibility of $\ILConstraints(v)$ for a state $v$ is an $\IFDL{\Size$, $2^\Size \cdot 2^{\Size'} \cdot \Size$, $\Size}$-problem that satisfies the assumptions of Lemma~\ref{lemma: IFDL} for the first case and satisfies the assumptions of Lemma~\ref{lemma: IFDL2} for the remaining two cases, and hence can be solved in (at most) exponential time in~$\Size$. 

Therefore, choosing a node to expand, checking whether a rule is applicable, and applying a rule can be done in (at most) exponential time in~$\Size$. As each node is re-expanded at most once (by the rule \DN), we conclude that the graph $G$ can be constructed in (at most) exponential time in~$\Size$ for the considered cases.
\myEnd
\end{proof}

\subsection{Soundness}

\begin{lemma} \label{lemma: UYDHW}
Let $G = \tuple{V,E,\nu}$ be a \CSHOQ-tableau for $\tuple{\mR,\mT,\mA}$. Then, for every $v \in V$, $\FullLabel(v)$ is equivalent to $\Label(v)$. That is, for any interpretation $\mI$: 
\begin{itemize}
\item if $v$ is a simple node then $(\FullLabel(v))^\mI = (\Label(v))^\mI$,
\item if $v$ is a complex node then $\mI$ satisfies $\FullLabel(v)$ iff it satisfies $\Label(v)$.
\end{itemize}
\end{lemma}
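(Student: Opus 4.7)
The plan is to prove the stronger invariant, maintained throughout the construction of $G$: for every node $v$ and every formula $\psi \in \RFormulas(v)$ that is not of the form $\alpha\!:\!(\preceq\!n\,s.C)$ or $\alpha\!:\!(\succeq\!n\,s.C)$, the set $\Label(v)$ semantically implies~$\psi$. Since $\FullLabel(v) \subseteq \Label(v) \cup \RFormulas(v)$ and differs from $\Label(v)$ only by dropping the $\preceq/\succeq$ markers (for which no semantics is defined and which we therefore treat as trivially satisfied), this invariant immediately yields both formulations of equivalence in the lemma (equal extensions for simple~$v$, coinciding models for complex~$v$).

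The induction is on the construction of $G$. The base case is trivial: $\RFormulas(\nu) = \emptyset$, and every fresh successor created by \TP or \TF is initialized with $\RFormulas = \emptyset$. For the inductive step I would go through each rule that enlarges some $\RFormulas(w)$, recalling that \ConToSucc either creates $w$ fresh with the prescribed contents or reuses an existing node by $\RFormulas(w) := \RFormulas(w) \cup rFmls$; in both sub-cases it suffices to check that the new $\Label(w) \models rFmls$.

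The straightforward cases are \USa and \NUS item~1. In \USa, each formula placed into the new $\RFormulas$ has a classically equivalent counterpart added simultaneously to $\Label(w)$: the pair $\{\alpha\!:\!D, \alpha\!:\!D'\}$ for a conjunction $\alpha\!:\!(D\mand D')$, the universal $\alpha\!:\!\V s.\ovl{D}$ for $\alpha\!:\!(\leq\!0\,s.D)$, and $a\not\doteq b$ for $a\!:\!\lnot\{b\}$; the trivially valid $\alpha\!:\!(\geq\!0\,s.D)$ needs no counterpart. The entries of $\RFormulas(v)$ carried over are handled by the induction hypothesis, since the new $\Label(w)$ replaces each removed ``reducible'' formula of $\Label(v)$ by an implying formula and therefore still implies $\Label(v)$. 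In \NUS item~1, both successors contain $\alpha\!:\!(C\mor D)$ in $\RFormulas$ and the matching disjunct in $\Label$, so the implication is immediate. Rules \USc, \DN and \FS (complex branch) carry $\RFormulas$ over unchanged while only enlarging $\Label$ (modulo $\preceq/\succeq$ markers), so the invariant is preserved.

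The delicate case is \USb. There the new complex successor $w$ is obtained by renaming $b$ to $a$ throughout $\Label(v) \setminus \{a\!:\!\{b\}\}$ (keeping $b$ inside $\doteq$~expressions) and throughout $\RFormulas(v)$, and then augmenting $\Label(w)$ with $\{a\doteq b, b\doteq a\}$ and $\RFormulas(w)$ with $a\!:\!\{a\}$. The crucial observation is that every model $\mI$ of $\Label(w)$ satisfies $a^\mI = b^\mI$, so satisfaction of any formula is invariant under the $b \mapsto a$ renaming in $\mI$. Renaming back, $\mI$ satisfies the unrenamed $\Label(v)$ (noting that $\mI \models a\!:\!\{b\}$ because $a^\mI = b^\mI$); the induction hypothesis at~$v$ yields $\mI \models \RFormulas(v)$; renaming forward gives $\mI \models Y$; and $a\!:\!\{a\}$ is valid. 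I expect the main obstacle to be exactly this renaming bookkeeping, in particular checking that the asymmetric exclusion of $\doteq$~expressions from the renaming in $X$ but not in $Y$ causes no mismatch under the new equality constraints in $\Label(w)$.
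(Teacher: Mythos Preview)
Your proposal is essentially correct and far more detailed than the paper's own proof, which reads in full: ``The proof of this lemma is straightforward.'' The inductive invariant you isolate (that $\Label(v)$ semantically entails every non-$\preceq/\succeq$ formula of $\RFormulas(v)$) is exactly the right one, and your treatment of \USa, \NUS subrule~1, and especially the renaming in \USb is sound.

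There is one omission in your case split. You group \USc, \DN and \FS under ``carry $\RFormulas$ over unchanged while only enlarging $\Label$'', and you treat \NUS only via its subrule~1. But \NUS has three further subrules. Subrules~2 and~4 do indeed carry $\RFormulas(v)$ over unchanged and only enlarge the label, so they fit your easy case. Subrule~3, however, merges an individual $b'$ into $b$: it forms $Y$ by renaming $b'$ to $b$ throughout $\RFormulas(v)$ and adds $\{b\doteq b', b'\doteq b\}$ to the new label. This is structurally the same renaming situation as \USb, and the same argument you give there (any model of the new label identifies the two individuals, so satisfaction is invariant under the renaming) disposes of it. So the gap is minor and your machinery already covers it; you simply need to list this case.
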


The proof of this lemma is straightforward. 

\begin{lemma} \label{lemma: IUFDN}
Let $G = \tuple{V,E,\nu}$ be a \CSHOQ-tableau for $\tuple{\mR,\mT,\mA}$. Then, for every $v \in V$, if $\Type(v) = \NonState$ and $w_1,\ldots,w_k$ are all the successors of $v$ then $\FullLabel(v)$ is satisfiable w.r.t.\ $\mR$ and $\mT$ iff there exists $1 \leq i \leq k$ such that $\FullLabel(w_i)$ is satisfiable w.r.t.~$\mR$ and~$\mT$.
\end{lemma}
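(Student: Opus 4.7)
The plan is to do a case analysis on the tableau rule that was applied to $v$ to generate its successors. Since $v$ is a non-state with at least one successor, that rule is one of the unary static expansion rules \USa, \USb, \USc, the non-unary static expansion rule \NUS (in one of its four subrules), or the forming-state rule \FS (restricted to the complex case, since the simple case of \FS only changes $\Type(v)$ without producing a successor). By Lemma~\ref{lemma: UYDHW} it suffices to prove the iff with $\Label$ in place of $\FullLabel$ in both directions.

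For each unary rule there is a single successor $w$, and the argument reduces to showing that $\Label(w)$ is equisatisfiable with $\Label(v)$ w.r.t.\ $\mR$ and $\mT$. In \USa and \USc, every formula added to $\Label(v)$ to form $\Label(w)$ is a semantic consequence of $\Label(v)$ together with $\mR$: decompositions of $\mand$, the equivalence of $\leq\!0\,s.D$ with $\V s.\ovl{D}$, propagation of $\V$ along $\sqsubseteq_\mR$ and under $\Trans{s}$, instantiation of $\V r.D$ at $b$ when $r(a,b)$ is present, and the tautology $\top \sqsubseteq\; \leq\!1\,r.\{a\}$ whose instances are added by \USc. In \USb the successor is obtained by substituting $b$ by $a$ together with the assertions $a \doteq b$ and $b \doteq a$; since $a\!:\!\{b\}\in\Label(v)$ forces $a^\mI = b^\mI$ in any model, this substitution preserves satisfaction in both directions. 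In the complex case of \FS, the only new formulas are of the form $\alpha\!:\!(\preceq\!n\,s.D)$ or $\alpha\!:\!(\succeq\!n\,s.D)$, which are excluded from $\FullLabel$ by definition, so $\FullLabel(w) = \FullLabel(v)$ and the claim is immediate.

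For \NUS there are exactly two successors $w_1, w_2$, and I would verify the iff subrule by subrule. Each subrule performs a semantic case split on a classically exhaustive dichotomy: $\alpha\!:\!C$ versus $\alpha\!:\!D$ from $\alpha\!:\!(C \mor D)$; $b\!:\!C$ versus $b\!:\!\ovl{C}$; $b \doteq b'$ versus $b \not\doteq b'$; and $s(a,b)$ versus $\lnot s(a,b)$. In each case any model of $\Label(v)$ satisfies one side of the dichotomy and hence models the corresponding $\Label(w_i)$, while conversely each $\Label(w_i)$ entails $\Label(v)$ because it extends $\Label(v)$ (possibly after rewriting under an asserted equality) with an assertion that is consistent with $\Label(v)$.

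The main technical obstacle lies in the identity-branching subrule of \NUS, which in the $b \doteq b'$ branch performs a textual substitution of $b'$ by $b$ throughout $\Label(v)$ and $\RFormulas(v)$. For this branch I would argue that any model $\mI$ of $\Label(v)$ with $b^\mI = b'^\mI$ remains a model of the substituted label, and conversely any model of the substituted label together with $b \doteq b'$ can be extended to a model of $\Label(v)$ by reinterpreting $b'$ as the denotation of $b$. The bookkeeping map $\Repl(w)(b') := b$ and the assertions $b \doteq b'$, $b' \doteq b$ added to $\Label(w)$ make this correspondence explicit. The remaining cases are routine verifications of semantic entailment in classical description logic.
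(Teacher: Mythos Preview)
Your case analysis is essentially the right approach and matches the spirit of the paper, which simply declares the lemma ``straightforward.'' However, your enumeration of rules that can produce the successors of a non-state is incomplete: you omit the rule \DN.

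Recall that \DN, when triggered by a simple node containing a nominal, may \emph{re-expand} a complex non-state $u_0$ that was previously a predecessor of a complex state $u$. It deletes the edge $\tuple{u_0,u}$ and connects $u_0$ to new complex non-states with labels $\Label(u_0) \cup X$ and $\Label(u_0) \cup \{\ovl{\varphi_i}\}$ for each $\varphi_i \in X$ with $\varphi_i \notin \FullLabel(u)$, where $X = \{\varphi_1,\ldots,\varphi_k\}$. In the final tableau, such a $u_0$ is a non-state whose successors arose from \DN, not from any rule on your list. (Example~\ref{example2} exhibits exactly this: after \DN, the non-state $v_2$ has successors $v_{14}$ and $v_{15}$.)

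The missing case is still routine, but requires a small observation. For the forward direction, any model $\mI$ of $\Label(u_0)$ either satisfies all of $X$ (hence the first successor), or fails some $\varphi_i$; in the latter case one must check that a successor with label $\Label(u_0) \cup \{\ovl{\varphi_i}\}$ actually exists, i.e., that $\varphi_i \notin \FullLabel(u)$. This holds because $u$ was produced from $u_0$ by \FS, whence $\FullLabel(u) = \FullLabel(u_0)$ (the only formulas added by \FS are of the $\preceq/\succeq$ form, which $\FullLabel$ discards), and by Lemma~\ref{lemma: UYDHW} $\mI$ satisfies everything in $\FullLabel(u_0)$. The backward direction is immediate since each successor's label extends $\Label(u_0)$.
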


The proof of this lemma is straightforward. 

Let $G = \tuple{V,E,\nu}$ be a \CSHOQ-tableau. For each node $v$ of $G$ with $\Status(v) \in \{\Unsat$, $\Sat\}$, let $\DSTimeStamp(v)$ be the moment at which $\Status(v)$ was changed to its final value (i.e., determined to be $\Unsat$ or $\Sat$). $\DSTimeStamp$ stands for ``determined-status time-stamp''. 
For each node $v$ of $G$ with $\Status(v) = \UnsatWrt(U)$ and for each $u \in U$, let $\CSTimeStamp(v,u)$ be the first moment at which $\Status(v)$ was changed to some $\UnsatWrt(U')$ with $u \in U'$. $\CSTimeStamp$ stands for ``$\UnsatWrt$-status time-stamp''. 
For each non-state $v$ of $G$, let $\ETimeStamp(v)$ be the moment at which $v$ was expanded the last time.\footnote{Each non-state may be re-expanded at most once (by the rule \DN) and each state is expanded at most once.}

\begin{lemma} \label{lemma: SHQWD}
Let $G = \tuple{V,E,\nu}$ be a \CSHOQ-tableau for $\tuple{\mR,\mT,\mA}$. Then, for every $v \in V :$
\begin{enumerate}
\item if $\Status(v) = \Unsat$ then $\FullLabel(v)$ is unsatisfiable w.r.t.\ $\mR$ and $\mT$, 
\item if $\Status(v) = \UnsatWrt(U)$ and $u \in U$ then there does not exist any model of $\tuple{\mR,\mT,\FullLabel(u)}$ that satisfies $\FullLabel(v)$.
\end{enumerate}
\end{lemma}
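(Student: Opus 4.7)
The plan is to prove both clauses simultaneously, by induction on the time-stamp at which the relevant status was set: use $\DSTimeStamp(v)$ for clause~1 and $\CSTimeStamp(v,u)$ for clause~2, merged into one linear order. The induction hypothesis will be that both clauses already hold for every pair whose time-stamp is strictly earlier. Throughout I will appeal freely to Lemma~\ref{lemma: UYDHW} to pass between $\Label$ and $\FullLabel$, and to Lemma~\ref{lemma: IUFDN} to translate satisfiability of a non-state into satisfiability of one of its successors.

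For clause~1, if $\Status(v)=\Unsat$ was set by rule~\UPSa, the four triggers are handled directly: a clash of the form $\alpha\!:\!\bot$, $\{\varphi,\ovl{\varphi}\}$, $a\not\doteq a$, or $\leq\!n\,s.C$ with $n{+}1$ witnesses forced pairwise distinct is patently unsatisfiable, while the case $\Status(v)=\UnsatWrt(U)$ with $v\in U$ reduces by the induction hypothesis applied to clause~2 at the pair $(v,v)$, whose $\CSTimeStamp$ is strictly earlier than the current moment. If the status was derived by propagation via~\UPSc, split on the type of $v$: for a non-state (subrule~\ref{item: YUSAO}) every successor $w_i$ already has $\FullLabel(w_i)$ unsatisfiable by IH, and Lemma~\ref{lemma: IUFDN} yields the conclusion; for a state (subrules~\ref{item: YTDSA} and~\ref{item: HGWSS}) a hypothetical model $\mI$ of $\FullLabel(v)$ would either supply a satisfiable successor reached by a $\TUS$-edge (contradicting IH), or yield a non-negative integer assignment to $x_{w,e}$, taken as the cardinalities of its chosen witness sets, satisfying $\ILConstraints(v)$ and thereby contradicting the recorded infeasibility.

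For clause~2, the base step is rule~\UPSb: there exist $C\in\Label(v)$ and a complex state $u$ such that the rewriting of $a\!:\!\ovl{C}$ by $\Repl(u)$ belongs to $\FullLabel(u)$, so any model $\mI$ of $\tuple{\mR,\mT,\FullLabel(u)}$ satisfies $a^\mI\in(\ovl{C})^\mI$; since $\{a\}\in\Label(v)$, a witness of $\FullLabel(v)$ in $\mI$ would have to be $a^\mI$ and thus lie in $C^\mI$, a contradiction. The propagation subrules~\ref{item: YUSLO},~\ref{item: IDSMG} and~\ref{item:IRDMS} mirror the $\Unsat$-arguments above, but every application of IH is now restricted to models that also satisfy $\FullLabel(u)$ for the fixed $u\in U$; the intersection $U=\bigcap_i U_i$ used in subrule~\ref{item: YUSLO} is precisely what guarantees $u\in U_i$ for every non-$\Unsat$ successor, so IH applies uniformly, while successors whose status is already $\Unsat$ are dispatched by clause~1.

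The main obstacle will be the state case with number restrictions. Here one must verify that $\ILConstraints(v)$, as assembled in steps~\ref{item: JHFSA}--\ref{item: JEROS} of rule~\TF, faithfully encodes the semantics of the $\succeq$- and $\preceq$-requirements: a model of $\FullLabel(v)$, together with $\FullLabel(u)$ in the $\UnsatWrt$-variant, must induce a non-negative integer assignment to $x_{w,e}$ by partitioning, for each named individual $\alpha^\mI$, its required $r$-successors according to the edge-label signature $\tuple{\piT(e),\piR(e),\piI(e)}$ and the concept content of the target label. The merger construction at step~\ref{item: HMFHS} is exactly what guarantees that every such partition class corresponds to some $e\in\ELabels(v,w)$, so that each $\succeq$-requirement yields a $\geq$-inequality and each $\preceq$-requirement yields a $\leq$-inequality of the constraint set; the assumed infeasibility, or infeasibility after augmentation by $x_{w_i,e_i}=0$ in subrule~\ref{item:IRDMS}, then contradicts the existence of $\mI$. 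Once this bookkeeping is laid out, the remaining cases follow by direct application of IH.
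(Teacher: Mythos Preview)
Your overall architecture matches the paper's: simultaneous induction on the merged $\DSTimeStamp$/$\CSTimeStamp$ order, with case splits on which rule set the status. The treatment of \UPSa, \UPSb, and the non-state propagation subrules of \UPSc is essentially what the paper does (it dismisses those cases as ``straightforward'').

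There are, however, two places where your sketch of the feasibility argument glosses over real work that the paper carries out explicitly.

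First, ``partitioning'' the $r$-successors of $\alpha^\mI$ into buckets indexed by $(w,e)$ is \emph{not} what the paper does, and a literal partition will not obviously satisfy the $\geq$- and $\leq$-constraints simultaneously. The paper instead assumes without loss of generality that $\mI$ is finitely branching (you omit this), and for each relevant element $z$ increments $n_{w,e}$ for \emph{every} pair $(w,e)$ that fits $z$ and is \emph{maximal} in the sense that no other fitting pair dominates it in both $\piR(e)$ and $\Label(w)$. A single $z$ may thus contribute to several buckets. For a $\succeq$-constraint this is harmless (at least one relevant bucket is hit); for a $\preceq\!n\,r.C$-constraint the paper then argues that among buckets with $r\in\piR(e)$ and $C\in\Label(w)$ at most one is maximal for a given $z$, precisely because the merger step~\ref{item: HMFHS} of \TF would otherwise have produced a strictly larger fitting pair. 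Your sentence ``the merger construction \ldots\ is exactly what guarantees that every such partition class corresponds to some $e$'' points at the right object but inverts its role: mergers are used not to ensure existence of a target class but to force uniqueness of the maximal fitting pair within each $(\alpha,r,C)$-slice.

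Second, for a \emph{complex} state $v$ the bookkeeping is more delicate than you indicate. The $\preceq$/$\succeq$-requirements in $\Label(v)$ have already discounted the named successors $b$ with $\{s(a,b),\,b\!:\!D\}\subseteq\FullLabel(v)$, so in building the solution one must distinguish those $z=b^\mI$ that were already counted from the residual case where $z=b^\mI$ but $s(a,b)\notin\Label(v)$. The paper's verification that this residual case cannot overcount a $\preceq$-constraint relies essentially on the fact that the subrules~\ref{item: HGW3A}, \ref{item: JHEAA} and~\ref{item: OSJRS} of \NUS were exhausted before the forming-state rule fired; the argument traces a chain of subrole assertions $s_j(a,b)\in\Label(v)$ forced by that exhaustion. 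None of this is visible in your sketch, and it is where most of the actual work lies.
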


\begin{proof} 
We prove this lemma by induction on the above mentioned time-stamps. 

Consider the first assertion of the lemma for the case when $v$ is a simple state and $\Status(v)$ is changed to $\Unsat$ by the subrule~\ref{item: HGWSS} of \UPSc because $\ILConstraints(v)$ is infeasible. We prove the contrapositive: suppose $\mI$ is a model of $\mR$ and $\mT$, and $y \in (\FullLabel(v))^\mI$; we show that $\ILConstraints(v)$ is feasible.
Without loss of generality, assume that $\mI$ is finitely-branching.\footnote{It is known that the DL \SHOQ has the finitely-branching model property.} 
Thus, the set $Z = \{z \in \Delta^\mI \mid \tuple{y,z} \in r^\mI$ for some $r \in \RN\}$ is finite. 
We compute a solution $\mS$ for $\ILConstraints(v)$ as follows.

\begin{itemize}
\item For each $\tuple{v,w} \in E$ and $e \in \ELabels(v,w)$ such that $\piT(e) = \CQF$, set \mbox{$n_{w,e} := 0$}.
\item For each $z \in Z$ do:
  \begin{itemize}
  \item let $\tuple{w_1,e_1},\ldots,\tuple{w_k,e_k}$ be all the pairs such that, for each $1 \leq i \leq k\,$:
     \begin{itemize}
     \item $\tuple{v,w_i} \in E$, $e_i \in \ELabels(v,w_i)$ and $\piT(e_i) = \CQF$, 
     \item $z \in (\FullLabel(w_i))^\mI$,
     \item $\tuple{y,z} \in r^\mI$ for all $r \in \piR(e_i)$,
     \item the pair $\tuple{w_i,e_i}$ is ``maximal'' in the sense that there does not exist any pair $\tuple{w'_i,e'_i} \neq \tuple{w_i,e_i}$ such that
	\begin{itemize}
	\item $\tuple{v,w'_i} \in E$, $e'_i \in \ELabels(v,w'_i)$ and $\piT(e'_i) = \CQF$, 
	\item $z \in (\FullLabel(w'_i))^\mI$,
	\item $\tuple{y,z} \in r^\mI$ for all $r \in \piR(e'_i)$;
	\end{itemize}
     \end{itemize} 

  \item for each $1 \leq i \leq k$, set $n_{w_i,e_i} := n_{w_i,e_i} + 1$.
  \end{itemize}
\item $\mS := \{x_{w,e} = n_{w,e} \mid \tuple{v,w} \in E$, $e \in \ELabels(v,w)$ and $\piT(e) = \CQF\}$.
\end{itemize}

We prove that $\mS$ is a solution for $\ILConstraints(v)$. 

If a constraint $x_{w,e} = 0$ was added to $\ILConstraints(v)$ because $w$ got status $\Unsat$ then, by the inductive assumption with $v$ replaced by $w$, we can conclude that $n_{w,e}$ was not increased at all and hence must be 0, which means that the constraint $x_{w,e} = 0$ is satisfied by the solution~$\mS$. 

Consider a constraint $\sum \{ x_{w,e} \mid$ $\tuple{v,w} \in E$, $e \in \ELabels(v,w)$, $\piT(e) = \CQF$, $r \in \piR(e)$, $\piI(e) = \Null$, $D \in \Label(w)\} \geq n$ of $\ILConstraints(v)$ and the corresponding concept $\succeq\!n\,r.D$. By the assumptions about $v$ and $y$, it can be derived that $Z$ contains pairwise different $z_1, \ldots, z_n$ such that $\tuple{y,z_i} \in r^\mI$ and $z_i \in D^\mI$, for $1 \leq i \leq n$. Each $z_i$ makes $n_{w,e}$ increased by 1 for some pair $\tuple{w,e}$ such that $\tuple{v,w} \in E$, $e \in \ELabels(v,w)$, $\piT(e) = \CQF$, $r \in \piR(e)$ and $D \in \FullLabel(w)$. Therefore, the considered constraint is satisfied by the solution~$\mS$.

Consider a constraint $\sum \{ x_{w,e} \mid$ $\tuple{v,w} \in E$, $e \in \ELabels(v,w)$, $\piT(e) = \CQF$, $r \in \piR(e)$, $\piI(e) = \Null$, $D \in \Label(w)\} \leq n$ of $\ILConstraints(v)$ and the corresponding concept $\preceq\!n\,r.D$. By the assumptions about $v$ and $y$, it can be derived that $Z$ contains no more than $n$ pairwise different elements $z_1, \ldots, z_n$ such that $\tuple{y,z_i} \in r^\mI$ and $z_i \in D^\mI$, for $1 \leq i \leq n$. For each $z_i$, there exists at most one pair $\tuple{w,e}$ such that $\tuple{v,w} \in E$, $e \in \ELabels(v,w)$, $\piT(e) = \CQF$, $r \in \piR(e)$, $D \in \FullLabel(w)$ and the consideration of $z_i$ causes $n_{w,e}$ to be increased by 1. This is due to the ``maximality'' of $\tuple{w,e}$ and the nature of the transitional full-expansion rule. Therefore, the considered constraint is satisfied by the solution~$\mS$.

Now, consider the second assertion of the lemma for the case when $v$ is a complex state and $\Status(v)$ becomes $\UnsatWrt(U)$ with $u \in U$ because of the call of $\SetClosedWrt(v,u)$ at the step~\ref{item:IRDMS} of the rule \UPSc due to infeasibility of the set $\ILCh = \ILConstraints(v)\ \cup$ $\{x_{w_i,e_i} = 0 \mid$ $1 \leq i \leq k\}$, where $\tuple{w_1,e_1}$, \ldots, $\tuple{w_k,e_k}$ are the pairs mentioned at that step of \UPSc. 
We prove the contrapositive: suppose $\mI$ is a model of $\tuple{\mR,\mT,\FullLabel(u)}$ that satisfies $\FullLabel(v)$; we show that the mentioned set $\ILCh$ of constraints is feasible. Without loss of generality, assume that $\mI$ is finitely-branching. 
We compute a solution $\mS$ for $\ILCh$ as follows.
 
\begin{enumerate}
\item For each $\tuple{v,w} \in E$ and $e \in \ELabels(v,w)$ such that $\piT(e) = \CQF$, set \mbox{$n_{w,e} := 0$}.
\item For each individual $a$ occurring in $\Label(v)$ and each $z \in \Delta^\mI$ such that $\tuple{a^\mI,z} \in r^\mI$ for some $r \in \RN$ do:
  \begin{enumerate}
  \item let $\tuple{w'_1,e'_1},\ldots,\tuple{w'_{k'},e'_{k'}}$ be all the pairs such that, for each $1 \leq i \leq k'\,$:
     \begin{enumerate}
     \item $\tuple{v,w'_i} \in E$, $e'_i \in \ELabels(v,w'_i)$, $\piT(e'_i) = \CQF$ and $\piI(e'_i) = a$, 
     \item $z \in (\FullLabel(w'_i))^\mI$,
     \item $\tuple{a^\mI,z} \in r^\mI$ for all $r \in \piR(e'_i)$,
     \item the pair $\tuple{w'_i,e'_i}$ is ``maximal'' in the sense that there does not exist any pair $\tuple{w''_i,e''_i} \neq \tuple{w'_i,e'_i}$ such that
	\begin{itemize}
	\item $\tuple{v,w''_i} \in E$, $e''_i \in \ELabels(v,w''_i)$, $\piT(e''_i) = \CQF$ and $\piI(e''_i) = a$, 
	\item $z \in (\FullLabel(w''_i))^\mI$,
	\item $\tuple{a^\mI,z} \in r^\mI$ for all $r \in \piR(e''_i)$;
	\end{itemize}
     \end{enumerate} 

  \item for each $1 \leq i \leq k'$ do
     \begin{enumerate}
     \item\label{item: XRIYH} if $z \neq b^\mI$ for all $b$ occurring in $\Label(v)$ then $n_{w'_i,e'_i} := n_{w'_i,e'_i} + 1$;
     \item\label{item: XRIXH} else if $z = b^\mI$ for some $b$ occurring in $\Label(v)$ and there exists $a\!:\!(\succeq\!l\,s.D) \in \Label(v)$ such that $s \in \piR(e'_i)$, $D \in \FullLabel(w'_i)$ and $s(a,b) \notin \Label(v)$ then $n_{w'_i,e'_i} := n_{w'_i,e'_i} + 1$.
     \end{enumerate}
  \end{enumerate}
\item $\mS := \{x_{w,e} = n_{w,e} \mid \tuple{v,w} \in E$, $e \in \ELabels(v,w)$ and $\piT(e) = \CQF\}$.
\end{enumerate}

We prove that $\mS$ is a solution for $\ILCh$. 

If a constraint $x_{w,e} = 0$ was added to $\ILConstraints(v)$ because $w$ got status $\Unsat$ then, by the inductive assumption with $v$ replaced by $w$, we can conclude that $n_{w,e}$ was not increased at all and hence must be 0, which means that the constraint $x_{w,e} = 0$ is satisfied by the solution~$\mS$. 

Consider a constraint $(x_{w_i,e_i} = 0) \in \ILCh$ with $1 \leq i \leq k$ (the pair $\tuple{w_i,e_i}$ was mentioned earlier). By the specification of $\tuple{w_i,e_i}$ (at the step~\ref{item:IRDMS} of the rule \UPSc), $\Status(w_i)$ is of the form $\UnsatWrt(U_i)$ with $u \in U_i$. By the inductive assumption with $v$ replaced by $w_i$, we can conclude that $n_{w_i,e_i}$ was not increased at all and hence must be 0, which means that the constraint $x_{w_i,e_i} = 0$ is satisfied by the solution~$\mS$. 

Consider a concept $(a\!:\,\succeq\!n\,s.D) \in \Label(v)$ and the corresponding constraint $\sum \{ x_{w,e} \mid$ $\tuple{v,w} \in E$, $e \in \ELabels(v,w)$, $\piT(e) = \CQF$, $s \in \piR(e)$, $\piI(e) = a$, \mbox{$D \in \Label(w)\} \geq n$} of $\ILConstraints(v)$. Let $m = \sharp\{b \mid \{s(a,b), b\!:\!D\} \subseteq \FullLabel(v)\}$. We have that either $(a\!:\,\geq\!(n+m)\,s.D) \in \Label(v)$ or $n=1$, $m=0$ and $(a\!:\E s.D) \in \Label(v)$. Since $\mI$ is a model of $\FullLabel(v)$, there exist pairwise different $z_1,\ldots,z_{n+m}$ such that $\tuple{a^\mI,z_i} \in s^\mI$ and $z_i \in D^\mI$ for all $1 \leq i \leq n+m$. Note that, if $s(a,b) \in \Label(v)$ then, by the subrule~\ref{item: HGW3A} of \NUS, either $b\!:\!D \in \FullLabel(v)$ or $b\!:\!\ovl{D} \in \FullLabel(v)$. Since $z_i \in D^\mI$ and $\mI$ is a model of $\FullLabel(v)$, if $z_i = b^\mI$ then $b\!:\!\ovl{D} \notin \FullLabel(v)$. Therefore, for every $1 \leq i \leq n+m$, if $z_i = b^\mI$ and $s(a,b) \in \Label(v)$ then $b\!:\!D \in \FullLabel(v)$. Let $Z = \{z_1,\ldots,z_{n+m}\} - \{b^\mI \mid s(a,b) \in \Label(v)\}$. We have that $\sharp Z = n$. Each $z$ from $Z$ makes $n_{w,e}$ increased by~1 for some pair $\tuple{w,e}$ such that $\tuple{v,w} \in E$, $e \in \ELabels(v,w)$, $\piT(e) = \CQF$, $\piI(e) = a$, $s \in \piR(e)$ and $D \in \Label(w)$. It follows that the considered constraint is satisfied by the solution~$\mS$.

Consider a concept $(a\!:\,\preceq\!n\,r.C) \in \Label(v)$ and the corresponding constraint $\sum \{ x_{w,e} \mid$ $\tuple{v,w} \in E$, $e \in \ELabels(v,w)$, $\piT(e) = \CQF$, $r \in \piR(e)$, $\piI(e) = a$, \mbox{$C \in \Label(w)\} \leq n$} of $\ILConstraints(v)$. Let $m = \sharp\{b \mid \{r(a,b), b\!:\!C\} \subseteq \FullLabel(v)\}$. We have that $(a\!:\,\leq\!(n+m)\,r.C) \in \Label(v)$. Since $\mI$ is a model of $\FullLabel(v)$, it follows that $a^\mI \in (\leq\!(n+m)\,r.C)^\mI$. Let $Z_1 = \{b^\mI \mid \{r(a,b), b\!:\!C\} \subseteq \FullLabel(v)\}$. Due to the subrule~\ref{item: JHEAA} of~\NUS, we have that $\sharp Z_1 = m$. 

Note that if $\tuple{v,w} \in E$, $e \in \ELabels(v,w)$, $\piT(e) = \CQF$, $\piI(e) = a$, $r \in \piR(e)$ and $C \in \FullLabel(w)$ then $n_{w,e}$ is increased only due to some $z$ such that $\tuple{a^\mI,z} \in r^\mI$ and $z \in C^\mI$. 
Due to the ``maximality'' of $\tuple{w,e}$ and the nature of the transitional full-expansion rule, for such a~$z$ there exists at most one pair $\tuple{v,w}$ such that $\tuple{v,w} \in E$, $e \in \ELabels(v,w)$, $\piT(e) = \CQF$, $\piI(e) = a$, $r \in \piR(e)$, $C \in \Label(w)$ and the consideration of $z$ causes $n_{w,e}$ to be increased by~1. 
Since $a^\mI \in (\leq\!(n+m)\,r.C)^\mI$, to prove that the considered constraint is satisfied by the solution~$\mS$, it suffices to show that if $z \in Z_1$ causes $n_{w'_i,e'_i}$ to be increased by~1 at the step~\ref{item: XRIXH} then $r \notin \piR(e'_i)$ or $C \notin \Label(w'_i)$. Suppose the contrary. We have that:
\begin{eqnarray}
& - & \{a\!:\,\leq\!(n+m)\,r.C, r(a,b), b\!:\!C\} \subseteq \FullLabel(v),\label{eq: ODJSA}\\
& - & \tuple{v,w'_i} \in E,\; e'_i \in \ELabels(v,w'_i),\; \piT(e'_i) = \CQF \mbox{ and } \piI(e'_i) = a,\label{eq: HJFGA}\\
& - & b^\mI \in (\Label(w'_i))^\mI \mbox{ and } \tuple{a^\mI,b^\mI} \in (r')^\mI \mbox{ for all } r' \in \piR(e'_i),\label{eq: IUFDA}\\
& - & a\!:\!(\succeq\!l\,s.D) \in \Label(v),\; s \in \piR(e'_i),\; D \in \Label(w'_i) \mbox{ and } s(a,b) \notin \Label(v),\label{eq: KLDFI}\\
& - & r \in \piR(e'_i) \mbox{ and } C \in \Label(w'_i).\label{eq: JDFLA} 
\end{eqnarray}

Since both $s$ and $r$ belong to $\piR(e'_i)$ (by~\eqref{eq: KLDFI} and~\eqref{eq: JDFLA}), there exist roles
\begin{equation}\label{eq: FDKJO}
\parbox{14cm}{$r_0 = r, r_1, \ldots, r_{h-1}, r_h = s$ and $s_1, \ldots, s_h$, all belonging to $\piR(e'_i)$}
\end{equation}
such that, for every $1 \leq j \leq h$:
\begin{eqnarray}
& - & s_j \sqsubseteq_\mR r_{j-1} \mbox{ and } s_j \sqsubseteq_\mR r_j,\label{eq: KJDFS}\\
& - & \mbox{$\Label(v)$ contains $a\!:\!\E s_j.D'_j$ or $a\!:\,\geq\!n_j\,s_j.D'_j$ for some $D'_j \in \Label(w'_i)$ and $n_j > 0$},\label{eq: JHFDS}\\
& - & \mbox{if $j < h$ then $\Label(v)$ contains $a\!:\,\leq\!m_j\,r_j.C'_j$ for some $C'_j \in \Label(w'_i)$ and $m_j$.}\label{eq: HJFGS}
\end{eqnarray}

Note that the subrule~\ref{item: OSJRS} of \NUS was not applicable to $v$. Having \eqref{eq: ODJSA}, \eqref{eq: FDKJO}, \eqref{eq: KJDFS} and \eqref{eq: JHFDS}, we derive that $s_1(a,b) \in \Label(v)$ or $\lnot s_1(a,b) \in \Label(v)$.  Since~\eqref{eq: FDKJO} and~\eqref{eq: IUFDA}, $\tuple{a^\mI,b^\mI} \in s_1^\mI$. Since $\mI$ is a model of $\FullLabel(v)$, it follows that $\lnot s_1(a,b) \notin \Label(v)$, and hence $s_1(a,b) \in \Label(v)$. Since $s_1 \sqsubseteq r_1$ (by~\eqref{eq: KJDFS}), by the rule \USa, we also have that $r_1(a,b) \in \Label(v)$. Analogously, using also~\eqref{eq: HJFGS}, for every $j$ from 1 to $h$, we can derive that $s_j(a,b) \in \Label(v)$ and $r_j(a,b) \in \Label(v)$. Since $s = r_h$, it follows that $s(a,b) \in \Label(v)$, which contradicts~\eqref{eq: KLDFI}. 
This completes the induction step for the second assertion of the lemma for the case when $v$ is a complex state and $\Status(v)$ becomes $\UnsatWrt(U)$ with $u \in U$ because of the call of $\SetClosedWrt(v,u)$ at the step~\ref{item:IRDMS} of the rule \UPSc due to infeasibility of $\ILCh$. 

The induction steps for:
\begin{itemize}
\item the first assertion of the lemma for the case when $v$ is a complex state and $\Status(v)$ is changed to $\Unsat$ by the subrule~\ref{item: HGWSS} of \UPSc because $\ILConstraints(v)$ is infeasible, 
\item the second assertion of the lemma for the case when $v$ is a simple state and $\Status(v)$ becomes $\UnsatWrt(U)$ with $u \in U$ because of the call of $\SetClosedWrt(v,u)$ at the step~\ref{item:IRDMS} of the rule \UPSc due to infeasibility of the corresponding set of constraints
\end{itemize}
can be proved in a similar way as done above for the two dual cases. 

The induction steps for the other cases (that correspond to the subrule~\ref{item: UFRMS} of \UPSa, the rule \UPSb and the subrules~\ref{item: YUSAO}, \ref{item: YUSLO}, \ref{item: YTDSA}, \ref{item: IDSMG} of \UPSc) are straightforward. 
\myEnd
\end{proof}

\begin{corollary}[Soundness of \CSHOQ]
If $G = \tuple{V,E,\nu}$ is a \CSHOQ-tableau for $\tuple{\mR,\mT,\mA}$ and $\Status(\nu) = \Unsat$ then $\tuple{\mR,\mT,\mA}$ is unsatisfiable.
\end{corollary}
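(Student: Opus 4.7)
The corollary is essentially a direct specialisation of the first assertion of Lemma~\ref{lemma: SHQWD} to the root node, so my plan is almost entirely one of unpacking the definitions and lining up the pieces. I will assume throughout that $G = \tuple{V,E,\nu}$ is a \CSHOQ-tableau for $\tuple{\mR,\mT,\mA}$ and that $\Status(\nu) = \Unsat$.

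First I would apply Lemma~\ref{lemma: SHQWD}(1) with $v := \nu$ to conclude immediately that $\FullLabel(\nu)$ is unsatisfiable w.r.t.\ $\mR$ and $\mT$. Next, by Lemma~\ref{lemma: UYDHW} the sets $\FullLabel(\nu)$ and $\Label(\nu)$ are equivalent (the root is a complex node, so I use the second bullet of that lemma: every interpretation satisfies one iff it satisfies the other). Hence $\Label(\nu)$ is also unsatisfiable w.r.t.\ $\mR$ and $\mT$, i.e.\ no model of $\mR$ and $\mT$ satisfies $\Label(\nu)$.

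It then remains to observe the trivial link to the original knowledge base. By the initialisation step of the \CSHOQ-tableau construction, the root is a complex non-state with
\[
\Label(\nu) \;=\; \mA \;\cup\; \{a\!:\!C \mid C \in \mT \textrm{ and } a \textrm{ is an individual occurring in } \mA \textrm{ or } \mT\}.
\]
Suppose for contradiction that $\tuple{\mR,\mT,\mA}$ has a model $\mI$. Then $\mI$ satisfies every assertion of $\mA$, and since $\mI$ is a model of $\mT$ it validates every $C \in \mT$, so in particular $\mI \models a\!:\!C$ for every individual $a$ and every $C \in \mT$. Thus $\mI$ satisfies every assertion of $\Label(\nu)$ while being a model of $\mR$ and $\mT$, contradicting the unsatisfiability of $\Label(\nu)$ established in the previous paragraph. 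Hence $\tuple{\mR,\mT,\mA}$ has no model, which is the desired conclusion.

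There is no real obstacle here: Lemma~\ref{lemma: SHQWD} does all the substantive work, and the only thing to be careful about is that the definition of $\FullLabel$ discards exactly the auxiliary assertions of the forms $a\!:(\preceq\!n\,s.C)$ and $a\!:(\succeq\!n\,s.C)$, none of which occur in $\Label(\nu)$ at initialisation — so no semantic content is lost when replacing $\FullLabel(\nu)$ by $\Label(\nu)$, and then by $\mA$ together with the encoded global assumptions from $\mT$.
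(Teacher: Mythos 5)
Your proposal is correct and follows exactly the paper's route: the paper proves this corollary by simply noting that it ``directly follows from Lemma~\ref{lemma: SHQWD}'', and your argument is just that specialisation to $v := \nu$ with the routine details filled in (the equivalence of $\FullLabel(\nu)$ and $\Label(\nu)$ via Lemma~\ref{lemma: UYDHW}, and the observation that any model of $\tuple{\mR,\mT,\mA}$ would satisfy $\Label(\nu)$ as initialised). Nothing is missing; your write-up is merely more explicit than the paper's one-line justification.
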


This corollary directly follows from Lemma~\ref{lemma: SHQWD}.


\subsection{Completeness}

We prove completeness of \CSHOQ via model graphs. The technique has been used for other logics (e.g., in~\cite{Rautenberg83,Gore99,nguyen01B5SL,dkns2011}).
A {\em model graph} is a tuple $\langle \Delta, \nI, \nC, \nE \rangle$, where:
\begin{itemize}
\item $\Delta$ is a non-empty and finite set, 
\item $\nI$ is a mapping that associates each individual name with an element of $\Delta$, 
\item $\nC$ is a mapping that associates each element of $\Delta$ with a set of concepts, 
\item $\nE$ is a mapping that associates each role with a binary relation on $\Delta$.
\end{itemize}

A model graph $\langle \Delta, \nI, \nC, \nE \rangle$ is {\em consistent} and {\em $\mR$-saturated} if every $x \in \Delta$ satisfies:\footnote{A consistent and $\mR$-saturated model graph is like a Hintikka structure.} 
\begin{eqnarray}
&-\ & \textrm{$\nC(x)$ does not contain $\bot$ nor any pair $C$, $\ovl{C}$}\label{eq:HGDSX 0}\\
&-\ & \textrm{if $\tuple{x,y} \in \nE(r)$ and $r \sqsubseteq_\mR s$ then $\tuple{x,y} \in \nE(s)$}\label{eq:HGDSX b}\\
&-\ & \textrm{if $\{a\} \in \nC(x)$ then $\nI(a) = x$}\label{eq:HGDSX c}\\
&-\ & \textrm{if $C \mand D \in \nC(x)$ then $\{C,D\} \subseteq \nC(x)$}\label{eq:HGDSX 1}\\
&-\ & \textrm{if $C \mor D \in \nC(x)$ then $C \in \nC(x)$ or $D \in \nC(x)$}\label{eq:HGDSX 2}\\
&-\ & \textrm{if $\V s.C \in \nC(x)$ and $r \sqsubseteq_\mR s$ then $\V r.C \in \nC(x)$}\label{eq:HGDSX 3}\\
&-\ & \textrm{if $\tuple{x,y} \in \nE(r)$ and $\V r.C \in \nC(x)$ then $C \in \nC(y)$}\label{eq:HGDSX 4}\\
&-\ & \textrm{if $\tuple{x,y} \in \nE(r)$, $\Trans{r}$ and $\V r.C \in \nC(x)$ then $\V r.C \in \nC(y)$}\label{eq:HGDSX 4b}\\
&-\ & \textrm{if $\E r.C \in \nC(x)$ then $\E y \in \Delta$ such that $\tuple{x,y} \in \nE(r)$ and $C \in \nC(y)$}\label{eq:HGDSX 6}\\
&-\ & \textrm{if $(\geq\!n\,r.C) \in \nC(x)$ then $\sharp\{\tuple{x,y} \in \nE(r) \mid C \in \nC(y)\} \geq n$}\label{eq:HGDSX 7} \\
&-\ & \textrm{if $(\leq\!n\,r.C) \in \nC(x)$ then $\sharp\{\tuple{x,y} \in \nE(r) \mid C \in \nC(y)\} \leq n$}\label{eq:HGDSX 8} \\
&-\ & \textrm{if $(\leq\!n\,r.C) \in \nC(x)$ and $\tuple{x,y}\in \nE(r)$ then $C \in \nC(y)$ or $\ovl{C} \in \nC(y)$.}\label{eq:HGDSX 9}
\end{eqnarray}

Given a~model graph $M = \langle \Delta, \nI, \nC, \nE \rangle$, the {\em $\mR$-model corresponding to~$M$} is the interpretation $\mI = \langle \Delta, \cdot^{\mI}\rangle$ where:
\begin{itemize}
\item $a^\mI = \nI(a)$ for every individual name~$a$,
\item $A^\mI = \{x \in \Delta \mid A \in \nC(x)\}$ for every concept name $A$,
\item $r^\mI = \nE'(r)$ for every role name $r \in \RN$, where $\nE'(r)$ for $r \in \RN$ are the smallest binary relations on $\Delta$ such that:
  \begin{itemize}
  \item $\nE(r) \subseteq \nE'(r)$, 
  \item if $r \sqsubseteq_\mR s$ then $\nE'(r) \subseteq \nE'(s)$,
  \item if $\Trans{r}$ then $\nE'(r) \circ \nE'(r) \subseteq \nE'(r)$.
  \end{itemize}
\end{itemize}

Note that the smallest binary relations mentioned above always exist: for each $r \in \RN$, initialize $\nE'(r)$ with $\nE(r)$; then, while one of the above mentioned condition is not satisfied, extend the corresponding $\nE'(r)$ minimally to satisfy the condition.

\begin{lemma} \label{lemma: model graph}
If $\mI$ is the $\mR$-model corresponding to a consistent $\mR$-saturated model graph $\langle \Delta, \nI, \nC, \nE \rangle$, then $\mI$ is a model of $\mR$ and, for every $x \in \Delta$ and $C \in \nC(x)$, we have that $x \in C^\mI$.
\end{lemma}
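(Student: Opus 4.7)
The claim splits into two parts: (i) $\mI$ is a model of $\mR$, and (ii) for every $x \in \Delta$ and every $C \in \nC(x)$, $x \in C^\mI$. Part (i) is immediate from the construction: the family $\{r^\mI\}_r = \{\nE'(r)\}_r$ is by definition the least family closed under the role inclusions and transitive-role declarations of $\mR$, so every axiom of $\mR$ is validated in $\mI$. The plan for part (ii) is to proceed by induction on $C$, taking as the induction measure the modal depth of $C$ (so that $C$ and $\ovl{C}$ have the same rank, which will be needed in the $\leq n\,r.C$ case).

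I would handle the routine cases first: $\top$ is trivial, $\bot$ is vacuous by \eqref{eq:HGDSX 0}, the atomic case $A$ holds by definition of $A^\mI$, the case $\lnot A$ follows from \eqref{eq:HGDSX 0}, $\{a\}$ follows from \eqref{eq:HGDSX c}, and the boolean cases $C \mand D$ and $C \mor D$ follow from \eqref{eq:HGDSX 1}, \eqref{eq:HGDSX 2} and the induction hypothesis. For $\E r.C$ and $\geq n\,r.C$, the conditions \eqref{eq:HGDSX 6} and \eqref{eq:HGDSX 7} supply $\nE(r)$-successors of $x$ (pairwise distinct in the second case) carrying $C$ in their label; these lift to $r^\mI$-successors via $\nE(r) \subseteq \nE'(r) = r^\mI$, and the induction hypothesis then gives the required $C$-membership.

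The substantive case is $\V r.C$, since $r^\mI$ can properly extend $\nE(r)$. Here the first step is to establish the structural description
\[
\nE'(r) \;=\; \nE(r)\ \cup\ \bigcup_{t \sqsubseteq_\mR r,\ \Trans{t}} \mathrm{TC}(\nE(t)),
\]
where $\mathrm{TC}(\cdot)$ is transitive closure; the right-hand side is easily seen (using \eqref{eq:HGDSX b} to absorb subrole edges) to satisfy all three closure conditions defining $\nE'$, and minimality of $\nE'$ yields equality. Given $\V r.C \in \nC(x)$ and $\tuple{x,y} \in r^\mI$, two subcases arise. If $\tuple{x,y} \in \nE(r)$, then $C \in \nC(y)$ by \eqref{eq:HGDSX 4}. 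Otherwise there exist a transitive role $t \sqsubseteq_\mR r$ and a path $x = x_0, x_1, \ldots, x_k = y$ with each $\tuple{x_{i-1}, x_i} \in \nE(t)$: condition \eqref{eq:HGDSX 3} converts $\V r.C \in \nC(x_0)$ into $\V t.C \in \nC(x_0)$; repeated application of \eqref{eq:HGDSX 4b} (using $\Trans{t}$) propagates $\V t.C$ through $x_1, \ldots, x_{k-1}$; and \eqref{eq:HGDSX 4} at the last edge delivers $C \in \nC(y)$. The induction hypothesis then yields $y \in C^\mI$.

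The remaining case $\leq n\,r.C$ profits directly from the same structural description: in \SHOQ syntax the role $r$ in such a restriction is required to be simple, so no $t \sqsubseteq_\mR r$ is transitive and the description collapses to $\nE'(r) = \nE(r)$. Hence $r^\mI$-successors of $x$ coincide with $\nE(r)$-successors; \eqref{eq:HGDSX 9} forces each such successor $y$ to carry $C$ or $\ovl{C}$ in $\nC(y)$, with the two possibilities disjoint by \eqref{eq:HGDSX 0}. Applying the induction hypothesis to both $C$ and $\ovl{C}$ identifies the $r^\mI$-successors lying in $C^\mI$ with those having $C \in \nC(y)$, and \eqref{eq:HGDSX 8} bounds this count by $n$. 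The main obstacle in the whole argument is the structural description of $\nE'(r)$: once it is in place, both the $\V r.C$ and $\leq n\,r.C$ cases reduce smoothly, and the interplay of role hierarchy with transitivity is handled uniformly through \eqref{eq:HGDSX 3} and \eqref{eq:HGDSX 4b}.
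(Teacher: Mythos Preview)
Your proposal is correct and follows exactly the approach the paper indicates: the paper's own proof consists of the single sentence ``The second assertion can be proved by induction on the structure of~$C$ in a straightforward way,'' and you carry out precisely that induction, supplying the case analysis (including the key structural description of $\nE'(r)$ needed for the $\V r.C$ and $\leq\!n\,r.C$ cases) that the paper omits as routine.
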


The first assertion of this lemma clearly holds. The second assertion can be proved by induction on the structure of~$C$ in a straightforward way.

Let $G = \tuple{V,E,\nu}$ be a \CSHOQ-tableau for $\tuple{\mR,\mT,\mA}$.

Let $v \in V$ be a complex non-state with $\Status(v) \neq \Unsat$. A~{\em saturation path} of $v$ is a sequence $v_0 = v$, $v_1$, \ldots, $v_k$ of nodes of $G$, with $k \geq 1$, such that $\Type(v_k) = \State$ and 
\begin{itemize}
\item for every $1 \leq i \leq k$, $\Status(v_i) \neq \Unsat$
\item for every $0 \leq i < k$, $\Type(v_i) = \NonState$ and $\tuple{v_i,v_{i+1}} \in E$. 
\end{itemize}
Observe that each saturation path of $v$ is finite.\footnote{If a non-state $v_{i+1}$ is a successor of a non-state $v_i$ then either $\sharp\{a \in \IN \mid \Repl(v_{i+1})(a) = a\} < \sharp\{a \in \IN \mid$ $\Repl(v_i)(a) = a\}$ or $\sharp\{a \in \IN \mid \Repl(v_{i+1})(a) = a\} = \sharp\{a \in \IN \mid \Repl(v_i)(a) = a\}$ and $\FullLabel(v_{i+1}) \supset \FullLabel(v_i)$. Recall also that $\FullLabel(v_{i+1})$ is a subset of $\closure(\mR,\mT,\mA)$.\label{footnote: JHFCS}}  Furthermore, if $v_i$ is a non-state with $\Status(v_i) \neq \Unsat$ then $v_i$ has a successor $v_{i+1}$ with $\Status(v_{i+1}) \neq \Unsat$. Therefore, $v$ has at least one saturation path. 

Let $u \in V$ be a complex state and $v \in V$ be a simple non-state such that $\Status(u) \neq \Unsat$, $\Status(v) \neq \Unsat$, $\Status(v) \neq \UnsatWrt(\{u,\ldots\})$ and $v$ may affect the status of the root $\nu$ via a path through $u$. A~{\em saturation path of $v$ w.r.t.~$u$} is a sequence $v_0 = v$, $v_1$, \ldots, $v_k$ of nodes of $G$, with $k \geq 1$, such that either $\Type(v_k) = \State$ or $\Status(v_k) = \Blocked$, and 
\begin{itemize}
\item for every $1 \leq i \leq k$, $\Status(v_i)$ is not $\Unsat$ nor $\UnsatWrt(\{u,\ldots\})$,
\item for every $0 \leq i < k$, $\Type(v_i) = \NonState$ and $\tuple{v_i,v_{i+1}} \in E$. 
\end{itemize}
Observe that each saturation path of $v$ w.r.t.~$u$ is finite (see the footnote~\ref{footnote: JHFCS}). Furthermore, if $v_i$ is a non-state with $\Status(v_i)$ different from $\Unsat$ and $\UnsatWrt(\{u,\ldots\})$, then $v_i$ has a successor $v_{i+1}$ with $\Status(v_{i+1})$ different from $\Unsat$ and $\UnsatWrt(\{u,\ldots\})$. Therefore, $v$ has at least one saturation path w.r.t.~$u$. 

\begin{lemma}[Completeness of \CSHOQ] \label{lemma: completeness}
Let $G = \tuple{V,E,\nu}$ be a \CSHOQ-tableau for $\tuple{\mR,\mT,\mA}$. Suppose $\Status(\nu) \neq \Unsat$. Then $\tuple{\mR,\mT,\mA}$ is satisfiable.
\end{lemma}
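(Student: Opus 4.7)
The plan is to build a consistent and $\mR$-saturated model graph $M = \langle \Delta, \nI, \nC, \nE \rangle$ from the non-closed part of $G$ and then apply Lemma~\ref{lemma: model graph} to obtain a model of $\tuple{\mR,\mT,\mA}$. The construction follows the usual model-graph technique (as in~\cite{SHI-ICCCI,SHIQ-long}): traverse $G$ top-down along saturation paths, using at each state a feasible solution of $\ILConstraints$ to decide the multiplicities of successors put into~$\Delta$.

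Starting from $\nu$, I would first pick a saturation path to a complex state $u_0$ with $\Status(u_0)\neq\Unsat$; this is possible because at every non-state step the subrule~\ref{item: YUSAO} of \UPSc forces at least one successor of status~$\neq\Unsat$. The state $u_0$ hosts the named part of the model: for every individual $a$ occurring in $\FullLabel(u_0)$ put $\nI(a):=x_{\Repl(u_0)(a)}$, for every representative $a$ set $\nC(x_a):=\{C\mid (a{:}C)\in\FullLabel(u_0)\}$, and read off the role edges between named individuals from the role assertions in $\FullLabel(u_0)$. The anonymous part of $\Delta$ is built recursively: at any state $u$ met in the construction, the assumption $\Status(u)\neq\Unsat$ together with the subrule~\ref{item: HGWSS} of \UPSc ensures feasibility of $\ILConstraints(u)$, so I can fix a nonnegative-integer solution $\{n_{w,e}\}$; for every edge label $e$ of type $\CQF$ I add $n_{w,e}$ fresh copies of $w$ to $\Delta$ (and one copy for every edge of type $\TUS$), attach each copy to the source vertex via all roles in $\piR(e)$, and then follow a saturation path of $w$ w.r.t.~$u_0$ to reach either a simple state $w'$, in which case I set the $\nC$-value of the copy to $\FullLabel(w')$ and continue the recursion at $w'$, or a $\Blocked$ node carrying $\{a\}$, in which case I identify the copy with $\nI(a)$. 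Termination follows from global caching: simple-state labels range over the finite set of subsets of $\closure(\mR,\mT,\mA)$, so each such label is processed at most once.

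Next I would verify the saturation conditions \eqref{eq:HGDSX 0}--\eqref{eq:HGDSX 9}. Conditions \eqref{eq:HGDSX 1}, \eqref{eq:HGDSX 3}--\eqref{eq:HGDSX 4b}, \eqref{eq:HGDSX 6} follow from exhaustive firing of \USa, \TP and \TF along the saturation paths; \eqref{eq:HGDSX 2} follows from \NUS; \eqref{eq:HGDSX b} from \USa (which closes role assertions under $\sqsubseteq_\mR$); \eqref{eq:HGDSX 9} from the subrule~\ref{item: HGW3A} of \NUS and step~\ref{item: HGDFW} of \TF. Condition~\eqref{eq:HGDSX 0} is automatic, since any node with $\bot$ or a clashing pair in its full label would have status $\Unsat$ by subrule~\ref{item: UFRMS} of \UPSa and is excluded from the construction. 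Conditions~\eqref{eq:HGDSX 7}--\eqref{eq:HGDSX 8} are where the integer linear feasibility is essential: the constraints added in steps~\ref{item: JHDSA0}--\ref{item: JEROS} of \TF encode exactly the ``$\geq n$'' and ``$\leq n$'' requirements on the pertinent $(r,D)$-neighborhoods, and by construction the multiplicities $n_{w,e}$ solve these constraints; this is the counting argument of Lemma~\ref{lemma: SHQWD} run in reverse.

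The hard part will be making sure that identifying anonymous vertices with named ones does not violate consistency nor any number restriction. Whenever the recursion reaches a simple node $w$ with $\{a\}\in\Label(w)$ and $\Status(w)\notin\{\Unsat,\Sat\}$, rule \UPSb together with my choice of saturation paths (which avoid all statuses $\UnsatWrt(U)$ with $u_0\in U$) guarantees that for every $C\in\Label(w)$ the assertion $a{:}C$, after applying $\Repl(u_0)$, already lies in $\FullLabel(u_0)$; thus the identification with $\nI(a)$ is concept-consistent. To prevent this identification from over-populating any $\leq\!n\,r.D$ at $\nI(a)$, rule~\USc has pre-inserted into $\FullLabel(u_0)$ every \emph{relevant} concept $\leq\!1\,r.\{a\}$, which forces $\ILConstraints(u_0)$ not to ask for more than one copy of any successor that will be merged to $\nI(a)$. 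Finally, $\mA\cup\{a{:}C\mid C\in\mT,\,a\text{ occurs in }\mA\cup\mT\}\subseteq\Label(\nu)\subseteq\FullLabel(u_0)$, so every $\mA$-assertion and $\mT$-concept is validated at the named individuals; for unnamed vertices the set $\mT$ is placed into every simple-state label by \TP and \TF, so $\mT$ is validated there too. Lemma~\ref{lemma: model graph} then yields that the $\mR$-model corresponding to $M$ is a model of $\tuple{\mR,\mT,\mA}$.
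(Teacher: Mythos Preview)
Your overall plan matches the paper's proof: pick a saturation path from $\nu$ to a complex state $u_0$, build the named part of the model graph from $\FullLabel(u_0)$, then recursively unfold simple states using feasible solutions of their integer constraint sets, identifying $\Blocked$ nodes carrying $\{a\}$ with $\nI(a)$. Two points need correction.

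First, your termination claim is false and, fortunately, unnecessary. You add \emph{fresh} copies at every recursion step, so the same simple-state label can be instantiated unboundedly many times; global caching bounds the number of nodes of $G$, not of $M$. The paper explicitly allows $M$ to be infinite (a finite base $\Delta_0$ with disjoint trees hanging off it, possibly with back-edges to $\Delta_0$). Simply drop the termination claim; Lemma~\ref{lemma: model graph} does not require $\Delta$ to be finite.

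Second, and more importantly, when you fix a feasible solution of $\ILConstraints(v)$ at a state $v$ you must additionally force $x_{w,e}=0$ for every successor $w$ whose status is $\UnsatWrt(U)$ with $u_0\in U$. Otherwise your recursion may ask for a saturation path of such a $w$ w.r.t.~$u_0$, and by definition none exists. The paper handles this by working with $\ILCh = \ILConstraints(v)\cup\{x_{w,e}=0 : \Status(w)=\UnsatWrt(\{u_0,\ldots\})\}$ and observing that $\ILCh$ is feasible precisely because $\Status(v)\neq\Unsat$ and $\Status(v)\neq\UnsatWrt(\{u_0,\ldots\})$: infeasibility of $\ILCh$ is exactly what would have triggered $\SetClosedWrt(v,u_0)$ in subrule~\ref{item:IRDMS} of \UPSc. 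Your phrase ``my choice of saturation paths (which avoid all statuses $\UnsatWrt(U)$ with $u_0\in U$)'' shows you see the danger, but you never say why such a choice is compatible with feasibility; that is the missing step. Relatedly, the $\leq\!1\,r.\{a\}$ concepts added by \USc live in the label of the \emph{current} (simple or complex) state $v$, not only in $\FullLabel(u_0)$; it is their presence in $\Label(v)$ that forces the corresponding variable in $\ILConstraints(v)$ to be at most~$1$, which is what guarantees $n_{w_0,e}=1$ when $w_0$ leads to a $\Blocked$ node.
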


\newcommand{\commentExp}[1]{($\ast$ #1 $\ast$)}
\begin{proof}
The root $\nu$ has a saturation path $u_0, \ldots,u_k$ with $u_0 = \nu$. Let $u = u_k$. We define a model graph $M = \langle \Delta, \nI, \nC, \nE\rangle$ as follows:
\begin{enumerate}
\item\label{item: JROSA} Let $\Delta_0$ be the set of all individual names $a$ such that $\Repl(u)(a) = a$. Set $\Delta := \Delta_0$ and $\nI := \Repl(u)$. If $a \in \IN$ does not occur in $\tuple{\mR,\mT,\mA}$ then define $\nI(a)$ to be some individual occurring in $\Delta_0$. 
For each $a \in \Delta_0$, mark $a$ as {\em unresolved}$\,$\footnote{Each node of $M$ will be marked either as {\em unresolved} or as {\em resolved}.} and set $\nC(a) := \{C \mid (a\!:\!C) \in \FullLabel(u)\}$.
For each role $r$, set $\nE(r) := \{\tuple{a,b} \mid r(a,b) \in \FullLabel(u)\}$.

\item\label{item: JHRES} For every {\em unresolved} node $y \in \Delta$ do:
  \begin{enumerate}
  \item If $y \in \Delta_0$ then let $v = u$ and\\ $WE = \{\tuple{w,e} \mid \tuple{v,w} \in E, e \in \ELabels(v,w)$ and $\piI(e) = y\}$.
  \item Else: 
	\begin{enumerate}
	\item Let $v = f(y)$.\\ \commentExp{$f$ is a constructed mapping that associates each node of $M$ not belonging to $\Delta_0$ with a simple state of $G$; as a maintained property of $f$, $\Status(v) \neq \Unsat$, $\Status(v) \neq \UnsatWrt(\{u,\ldots\})$ and $\nC(y)$ is the set obtained from $\FullLabel(v)$ by replacing every individual $b$ by $\Repl(u)(b)$ when $\Repl(u)(b)$ is defined.}
	\item Let $WE = \{\tuple{w,e} \mid \tuple{v,w} \in E, e \in \ELabels(v,w)\}$.
	\end{enumerate}
  \item Let $\ILCh = \ILConstraints(v) \cup \{x_{w,e} = 0 \mid$ $\tuple{w,e} \in WE$, $\piT(e) = \CQF$, $\Status(w) = \UnsatWrt(\{u,\ldots\})\}$.\\ \commentExp{$\ILCh$ is feasible because $\Status(v) \neq \Unsat$ and $\Status(v) \neq \UnsatWrt(\{u,\ldots\})$.} 

  \item Fix a solution of $\ILCh$, and for each $\tuple{w,e} \in WE$:
	\begin{enumerate}
	\item if $\piT(e) = \TUS$ then let $n_{w,e} = 1$, 
	\item else let $n_{w,e}$ be the value of $x_{w,e}$ in that solution. 
	\end{enumerate}
  \item Delete from $WE$ all the pairs $\tuple{w,e}$ with $n_{w,e} = 0$. 

  \item For each $\tuple{w_0,e} \in WE$ do:
     \begin{enumerate}
     \item Let $w_0$, \ldots, $w_h$ be a saturation path of $w_0$ w.r.t.~$u$. 
     \item Let $X$ be the set obtained from $\FullLabel(w_h)$ by replacing every individual $b$ by $\Repl(u)(b)$ when $\Repl(u)(b)$ is defined. 
     \item If $\Status(w_h) = \Blocked$ then:
	\begin{enumerate}
	\item[] \commentExp{Observe that $n_{w_0,e}$ must be 1 due to the rule \USc and the construction of $\ILConstraints(v)$ by the transitional full-expansion rule.} 
	\item Let $\{a\}$ be an element of $X$.\\ \commentExp{Observe that, due to the specification of $X$, $\Repl(u)(a) = a$ and $a \in \Delta_0$.} 
	\item For each $r \in \piR(e)$, add $\tuple{y,a}$ to $\nE(r)$.
	\end{enumerate}
     \item Else, for $i := 1$ to $n_{w_0,e}$ do:
	\begin{enumerate}
	\item Add a new element $z$ to $\Delta$ and mark $z$ as {\em unresolved}.
	\item For each $r \in \piR(e)$, add $\tuple{y,z}$ to $\nE(r)$.
	\item Set $\nC(z) := X$ and $f(z) := w_h$.\\ \commentExp{Observe that the mentioned properties of $f$ are maintained here.} 
	\end{enumerate}
     \end{enumerate}

  \item Mark $y$ as {\em resolved}.
  \end{enumerate}
\end{enumerate}

The defined model graph $M$ may be infinite. It consists of a finite base created at the step~\ref{item: JROSA} and disjoint trees created at the step~\ref{item: JHRES} possibly with edges coming back directly to $\Delta_0$ (nodes of the base) due to nominals.

It is straightforward to prove that $M$ is a consistent $\mR$-saturated model graph. 

Observe that:
\begin{itemize}
\item For any individual name $b$, if $a = \Repl(u)(b)$ then $\Repl(u)(a) = a$ and $\nI(a) = a$. 
\item If $(a\!:\!C) \in \mA$ then the concept obtained from $C$ by replacing every individual $b$ by $\Repl(u)(b)$ (when $\Repl(u)(b)$ is defined) belongs to $\nC(a')$, where $a' = \Repl(u)(a)$.
\item If $r(a,b) \in \mA$ then $\tuple{a',b'} \in \nE(r)$, where $a' = \Repl(u)(a)$ and $b' = \Repl(u)(b)$.
\item If $a \not\doteq b \in \mA$ then $a' \not\doteq b' \in \Label(u)$, where $a' = \Repl(u)(a)$ and $b' = \Repl(u)(b)$. Since $\Status(u) \neq \Unsat$, we have that $a' \neq b'$. 
\item For every $C \in \mT$, the concept obtained from $C$ by replacing every individual $b$ by $\Repl(u)(b)$ (when $\Repl(u)(b)$ is defined) belongs to $\nC(x)$ for all $x \in \Delta$.
\end{itemize}
Hence, by Lemma~\ref{lemma: model graph}, the interpretation corresponding to $M$ is a~model of $\tuple{\mR,\mT,\mA}$.
\myEnd
\end{proof}


\end{document}